\title{Fair Termination of Asynchronous Binary Sessions}
\author{Luca Padovani}{University of Bologna}{}{https://orcid.org/0000-0001-9097-1297}{}
\author{Gianluigi Zavattaro}{University of Bologna and INRIA OLAS team}{}{https://orcid.org/0000-0003-3313-6409}{}
\authorrunning{L. Padovani and G. Zavattaro}
\keywords{Binary sessions, fair asynchronous subtyping, fair termination, linear logic} %TODO mandatory; please add comma-separated list of keywords
\newif\ifcomments
\newcommand{\marginnote}[2]{%
  \ifcomments%
    $^{\color{magenta}\mathclap\star}$%
    \marginpar[
        \flushright\tiny\sf\textbf{#1}: #2
    ]{
        \flushleft\tiny\sf\textbf{#1}: #2
    }%
  \fi%
}
\newcommand{\LP}[1]{\marginnote{Luca}{\color{blue}#1}}
\newcommand{\GZ}[1]{\marginnote{Gigio}{\color{purple}#1}}
\definecolor{mc}{rgb}{0.5,0,0}
\definecolor{mgreen}{rgb}{0,0.4,0.4}
\newcommand{\set}[1]{\braces{#1}}
\newcommand{\Nat}{\mathbb{N}}
\newcommand{\TagSet}{\mathsf{Tags}}
\newcommand{\rulename}[1]{\textnormal{\textup{\textsc{\small\bracks{#1}}}}}
\newcommand{\eoe}{\hfill$\lrcorner$}
\newcommand{\seqof}[1]{\overline{#1}}
\newcommand{\mkkeyword}[1]{\mathsf{\color{blue}#1}}
\newcommand{\Calculus}{\textsf{CaP}\xspace}
\newcommand{\CP}{\textsf{CP}\xspace}
\newcommand{\muCP}{$\mu$\CP}
\newcommand{\MALL}{\textsf{MALL}\xspace}
\newcommand{\muMALL}{$\mu$\MALL}
\newcommand{\infercorule}[3][]{%
  {\mprset{fraction={===}}\inferrule[#1]{#2}{#3}}%
}
\newcommand{\proofcase}[1]{\textbf{Case #1.}\xspace}
\newcommand{\ie}{\emph{i.e.}\xspace}
\newcommand{\cf}{\emph{cf.}\xspace}
\newcommand{\eg}{\emph{e.g.}\xspace}
\newcommand{\etal}{\emph{et al.}\xspace}
\newcommand{\parens}[1]{(#1)}
\newcommand{\bracks}[1]{[#1]}
\newcommand{\braces}[1]{\{#1\}}
\newcommand{\angles}[1]{\langle#1\rangle}
\newcommand{\ceils}[1]{\lceil#1\rceil}
\newcommand{\pol}{\dagger}
\newcommand{\inp}{{?}}
\newcommand{\out}{{!}}
\newcommand{\ann}[2][0]{\mafter[#1]{#2}}
\newcommand{\MessageType}{\MessageTypeS}
\newcommand{\MessageTypeS}{\sigma}
\newcommand{\MessageTypeT}{\tau}
\newcommand{\SessionTypeT}
\newcommand{\One}{\mathbf{1}}
\newcommand{\Zero}{\mathbf{0}}
\renewcommand{\Bot}{\bot}
\newcommand{\Top}{\top}
\newcommand{\End}{\mkkeyword{end}}
\newcommand{\choice}{\mathbin\oplus}
\newcommand{\Plus}[1]{{\oplus}\set{#1}}
\newcommand{\With}[1]{{\&}\set{#1}}
\newcommand{\DecWith}[2]{{\&}^{\color{red}#1}\set{#2}}
\newcommand{\Times}[2]{#1 \otimes #2}
\newcommand{\Par}[2]{#1 \parr #2}
\newcommand{\Tag}[1][a]{\mathsf{\color{teal}#1}}
\newcommand{\TagA}{\Tag[a]}
\newcommand{\TagB}{\Tag[b]}
\newcommand{\TagC}{\Tag[c]}
\newcommand{\TagD}{\Tag[d]}
\newcommand{\Action}{\ActionA}
\newcommand{\ActionA}{\alpha}
\newcommand{\ActionB}{\beta}
\newcommand{\Actions}{\ActionsA}
\newcommand{\ActionsA}{\varphi}
\newcommand{\ActionsB}{\psi}
\newcommand{\session}[2]{#1\mathbin\|#2}
\newcommand{\x}{x}
\newcommand{\y}{y}
\newcommand{\z}{z}
\newcommand{\Message}{\MessageM}
\newcommand{\MessageM}{M}
\newcommand{\Hole}{\bracks\,}
\newcommand{\Buffer}{\BufferB}
\newcommand{\BufferB}{\mathcal{B}}
\newcommand{\parop}{\mathbin|}
\newcommand{\Done}{\mkkeyword{done}}
\newcommand{\Link}[2]{#1 \leftrightarrow #2}
\newcommand{\Call}[2]{#1\ifblank{#2}{\angles{}}{\angles{#2}}}
\newcommand{\Close}[1]{\mkkeyword{close}~#1}
\newcommand{\Wait}[1]{\mkkeyword{wait}~#1}
\newcommand{\Select}[2]{#1\mathbin\vartriangleleft#2}
\newcommand{\OneCase}[2]{#1\mathbin\vartriangleright#2}
\newcommand{\Case}[2]{\OneCase{#1}{\set{#2}}}
\newcommand{\Fork}[4]{#1\parens{#2}\bracks{#3}\ifblank{#4}{}{.#4}}
\let\OldJoin\Join
\renewcommand{\Join}[2]{#1\parens{#2}}
\newcommand{\Cut}[3]{\parens{#1}(#2 \parop #3)}
\newcommand{\Cast}[1]{\ceils{#1}}
\newcommand{\Let}[2]{#1\ifblank{#2}{\parens{}}{\parens{#2}}}
\newcommand{\Measure}{\MeasureN}
\newcommand{\MeasureN}{n}
\newcommand{\mafter}[2][0]{#2\ifblank{#1}{}{^{{\color{mc}#1}}}}
\newcommand{\wtp}[3][n]{#2 \mafter[#1]\vdash #3}
\newcommand{\wtb}[5][n]{#2 \mafter[#1]\vdash #3; #4 : #5}
\newcommand{\EmptyContext}{\emptyset}
\newcommand{\Context}{\ContextC}
\newcommand{\ContextC}{\Gamma}
\newcommand{\ContextD}{\Delta}
\newcommand{\ContextE}{\Theta}
\newcommand{\DoneRule}{\rulename{done}\xspace}
\newcommand{\CallRule}{\rulename{call}\xspace}
\newcommand{\LinkRule}{\rulename{link}\xspace}
\newcommand{\CloseRule}{\rulename{$\One$}\xspace}
\newcommand{\WaitRule}{\rulename{$\Bot$}\xspace}
\newcommand{\ForkRule}{\rulename{$\Times{}{}$}\xspace}
\newcommand{\JoinRule}{\rulename{$\Par{}{}$}\xspace}
\newcommand{\SelectRule}{\rulename{$\oplus$}\xspace}
\newcommand{\CaseRule}{\rulename{$\&$}\xspace}
\newcommand{\ChoiceRule}{\rulename{choice}\xspace}
\newcommand{\CutRule}{\rulename{cut}\xspace}
\newcommand{\EmptyBufferRule}{\rulename{b-empty}\xspace}
\newcommand{\SelectBufferRule}{\rulename{b-select}\xspace}
\newcommand{\ForkBufferRule}{\rulename{b-fork}\xspace}
\newcommand{\tr}{\mathsf{tr}}
\newcommand{\inputs}{\mathsf{inp}}
\newcommand{\outputs}{\mathsf{out}}
\newcommand{\co}[1]{\overline{#1}}
\newcommand{\fn}[1]{\mathsf{fn}(#1)}
\newcommand{\dual}[1]{#1^\bot}
\newcommand{\subst}[2]{\set{#1/#2}}
\newcommand{\dom}[1]{\mathsf{dom}(#1)}
\newcommand{\depth}[1]{\mathsf{depth}(#1)}
\newcommand{\thread}[1]{\mathsf{thread}(#1)}
\newcommand{\rrel}{\mathcal{R}}
\newcommand{\srel}{\mathcal{S}}
\newcommand{\eqdef}{\stackrel{\smash{\textsf{\upshape\tiny def}}}=}
\newcommand{\correct}[2]{#1 \OldJoin #2}
\newcommand{\cc}[2]{\correct{#1}{#2}}
\newcommand{\subt}{\leqslant}
\newcommand{\ghsubt}{\subt_{\mathsf{s}}}
\newcommand{\mysubt}{\subt_{\mathsf{a}}}
\newcommand{\blzsubt}{\subt_{\mathsf{fa}}}
\newcommand{\auxsubt}{\subt_{\mathsf{aux}}}
\newcommand{\red}{\rightarrow}
\newcommand{\nred}{\arrownot\red}
\newcommand{\wred}{\red^*}
\newcommand{\lred}[1]{\stackrel{#1}{\longrightarrow}}
\newcommand{\lmust}[1]{\stackrel{#1}{\longmapsto}}
\newcommand{\xlmust}[1]{\xmapsto{#1}}
\newcommand{\lind}[1]{\stackrel{#1}{\Longmapsto}}
\newcommand{\nlmust}[1]{\longarrownot\lmust{#1}}
\newcommand{\nlind}[1]{\Longarrownot\lind{#1}}
\newcommand{\ired}[1]{\lred{#1}_{\textsf{ind}}}
\newcommand{\cored}[1]{\lred{#1}_{\textsf{coind}}}
\newcommand{\xlred}[1]{\xrightarrow{#1}}
\newcommand{\nlred}[1]{\longarrownot\lred{#1}}
\newcommand{\pcong}{\sqsupseteq}
\newcommand{\positive}{\mathsf{pos}}
\newcommand{\negative}{\mathsf{neg}}
\newcommand{\TagPlay}{\Tag[play]}
\newcommand{\TagQuit}{\Tag[quit]}
\newcommand{\TagWin}{\Tag[win]}
\newcommand{\TagLose}{\Tag[lose]}
\newcommand{\TagReq}{\Tag[req]}
\newcommand{\TagResp}{\Tag[resp]}
\newcommand{\TagStop}{\Tag[stop]}
\newcommand{\StreamServer}{\mathit{StreamS}}
\newcommand{\StreamClient}{\mathit{StreamC}}
\newcommand{\BatchServer}{\mathit{BatchS}}
\newcommand{\BatchServerAux}{\BatchServer'}
\newcommand{\BatchClient}{\mathit{BatchC}}
\newcommand{\BatchClientAux}{\BatchClient'}
\newcommand{\Client}{\mathit{Client}}
\newcommand{\Server}{\mathit{Server}}
\newcommand{\Worker}{\mathit{Worker}}
\newcommand{\TagTask}{\Tag[task]}
\newcommand{\TagRes}{\Tag[res]}
\newcommand{\TagCommand}{\Tag[cmd]}
\newcommand{\TagData}{\Tag[data]}
\newcommand\citep[1]{\cite{#1}}
\newcommand\citet[1]{\cite{#1}}
\begin{document}
\maketitle

\begin{abstract}
  We study a theory of asynchronous session types ensuring that well-typed
  processes terminate under a suitable fairness assumption. Fair termination
  entails \emph{starvation freedom} and \emph{orphan message freedom} namely
  that all messages, including those that are produced \emph{early} taking
  advantage of asynchrony, are eventually consumed. The theory is based on a
  novel \emph{fair asynchronous subtyping} relation for session types that is
  coarser than the existing ones. The type system is also the first of its kind
  that is firmly rooted in linear logic: fair asynchronous subtyping is
  incorporated as a natural generalization of the cut and axiom rules of linear
  logic and asynchronous communication is modeled through a suitable set of
  \emph{commuting conversions} and of \emph{deep cut reductions} in linear logic
  proofs.
\end{abstract}

\section{Introduction}
\label{sec:introduction}

Session type systems~\citep{Honda93,HondaVasconcelosKubo98,HuttelEtAl16} have
become a widespread formalism for ensuring a variety of safety and liveness
properties of communicating processes through static analysis. Session types
specify the \emph{sequences} of messages that can be sent over a channel, and
the type system makes sure that (1) well-typed processes comply with this
protocol specification and that (2) the peer endpoints of the channel are used
according to compatible protocols.

Many session type systems are defined for a synchronous calculus or language
even if the actual underlying communication model is meant to be asynchronous.
The point is that synchronous theories of session types are simpler and easier
to work with and most if not all properties of a synchronous session type system
hold even if the actual communication model is asynchronous. However, awareness
of the communication model can help relaxing the type system and thus enlarging
the family of well-typed processes. This observation has led to the study of
\emph{asynchronous subtyping relations} for session
types~\cite{MostrousYoshidaHonda09,MostrousYoshida15,ChenDezaniScalasYoshida17}
allowing processes to \emph{anticipate} output messages with respect to the
protocol specification they are expected to comply with, provided that
anticipated outputs \emph{do not depend} on incoming messages. This apparent
violation of the protocol specification enabled by asynchronous subtyping is
harmless precisely because output actions are non-blocking in an asynchronous
setting.

\newcommand\Split{\mathit{Split}}
\newcommand\Gather{\mathit{Gather}}

The available session type systems based on asynchronous subtyping focus on the
enforcement of safety properties but struggle at ensuring liveness properties of
many simple communication patterns. As an illustration of such patterns consider
a server that, in order to fulfill a request received from a session $x$, splits
the request into an arbitrary number of tasks handled by a separate worker and
then gathers the partial results to answer the client. We might be interested in
establishing whether the client eventually receives a response.

To be more concrete, let us model the server as the term
\[
    \Case\x{\TagReq : \Cut\y{\Call\Split{x,y}}{\Call\Worker\y}}
\]
which indicates the input of the request followed by the spawning of a $\Split$
process and of a $\Worker$ process connected by a new session $y$.

The $\Split$ process is modeled by the following definitions:
\begin{align}
    \label{eq:split}
    \Let\Split{x,y} & = \Select\y\TagTask.\Call\Split{x,y} \choice \Select\y\TagStop.\Call\Gather{x,y} \\
    \label{eq:gather}
    \Let\Gather{x,y} & = \Case\y{\TagRes : \Call\Gather{x,y}, \TagStop : \Wait\y.\Select\x\TagResp.\Close\x }
\end{align}
%\GZ{MA IL PRIMO INPUT DI GATHER NON DEVE ESSERE SU $y$?}
according to which the server sends to the worker a non-deterministically chosen
number of $\TagTask$s followed by a $\TagStop$ on session $y$, it then gathers
from the worker an arbitrary number of $\TagRes$ults followed by a $\TagStop$,
and finally sends back to the client a $\TagResp$onse on session $x$.

One possible modeling of the worker process is according to the definition
\begin{equation}
    \label{eq:worker}
    \Let\Worker\y = \Case\y{
        \TagTask : \Select\y\TagRes.\Call\Worker\y,
         \TagStop : \Select\y\TagStop.\Close\y
    }
\end{equation}
so that the worker sends a $\TagRes$ult for each $\TagTask$ it receives.
The eye-catching aspect of $\Worker$ is that it does not interact according to
the ``complementary'' protocol implemented on the server side, at least not in
the sense that is usually intended in (synchronous) session type theories.
Indeed, while the server first sends all the $\TagTask$s \emph{and then} gathers
all the $\TagRes$ults, the worker eagerly sends one $\TagRes$ult after receiving
each $\TagTask$.

In an asynchronous setting, the fact that $\Worker$ sends some messages earlier
than expected is not an issue since output actions are non-blocking. However, we
would like to be sure that these early $\TagRes$ults do not keep accumulating
and are eventually consumed by $\Gather$. There is nothing in the modeling of
$\Split$, $\Gather$ and $\Worker$ that prevents this from happening, but
\emph{we can prove the eventual consumption of every $\TagRes$ult only under the
assumption that sooner or later $\Split$ will send $\TagStop$ to $\Worker$}.
If we broaden our viewpoint, we see that the same assumption is necessary to
prove that the client interacting with the server does not starve. While the
server is running, the client is awaiting for a $\TagResp$onse from session $x$.
In order to \emph{prove} that the client will eventually receive a
$\TagResp$onse, we have to assume that $\Worker$ will terminate the session $y$,
which in turn requires the assumption that $\Split$ will eventually send
$\TagStop$ to $\Worker$.
In general, the proof of any non-trivial liveness property that concerns the
eventual production or consumption of a message on a certain session may require
the assumption that every other session eventually terminates. For this reason,
ensuring the eventual termination of sessions should be a primary goal of any
session type system aimed at enforcing liveness properties.
As we have seen in the discussion above, proving the eventual termination of a
session may require some \emph{fairness assumptions} like the fact that $\Split$
will eventually stop sending $\TagTask$s. For this reason, in the literature
such eventual termination property is referred to as \emph{fair
termination}~\citep{GrumbergFrancezKatz84,Francez86,AptFrancezKatz87}.

In this work we study a theory of asynchronous session types and an associated
session type system ensuring that well-typed processes are fairly terminating
under a suitable fairness assumption. The fair termination of processes entails
the fair termination of the sessions they operate on, hence that all messages
produced -- including those sent earlier than expected -- are eventually
consumed and that all processes waiting for a message eventually receive one. In
other words, the very same type system prevents \emph{process starvation} and
ensures the absence of \emph{orphan messages}.
Related works achieve these goals only partially:
\begin{itemize}
\item The theories of asynchronous session types developed by
    Mostrous~\etal~\citet{MostrousYoshidaHonda09,MostrousYoshida15}, Chen
    \etal~\citet{ChenDezaniScalasYoshida17}
%    \LP{Gigio controlla che questo
%    riferimento sia quello giusto e/o se ne servono altri, c'è una fila di
%    lavori più o meno simili che parlano di preciseness.} 
	and Ghilezan \etal~\citet{GhilezanEtAl23} require that every output can be only
    anticipated with respect to finitely many inputs.
%    \LP{Questa riformulazione
%    mi sembra corretta: anche se gli input si possono accumulare, è comunque
%    vero che in ogni momento ogni output anticipabile ha davanti un numero
%    finito di input.} 
    Since there is no finite upper bound to the number of
    $\TagTask$s that $\Split$ can produce, each early $\TagRes$ produced by
    $\Worker$ anticipates an unbounded number of inputs leaving $\Worker$ out of
    reach for the aforementioned theories.
    % Also, the type system given by Ghilezan
    % \etal~\citep{GhilezanEtAl23} is limited to single (multiparty) sessions
    % whereas we address (possibly infinite-state) systems with unbounded
    % (possibly interleaved) sessions.
\item Bravetti, Lange and Zavattaro~\citep{BravettiLangeZavattaro24} study an
    asynchronous subtyping relation for session types that allows early outputs
    to anticipate an unbounded number of inputs. However, their subtyping
    relation disallows any \emph{covariance} of outputs, which is needed to
    account for the fact that the behavior of $\Worker$ is \emph{more
    deterministic} than the behavior expected by $\Gather$. Indeed, while
    $\Gather$ expects to receive an arbitrary number of $\TagRes$ults, $\Worker$
    produces exactly as many $\TagRes$ults as the number of $\TagTask$s it
    receives from $\Split$.
%    \LP{Purtroppo penso che qui occorra davvero parlare
%    di numero di risultati attesi per suggerire l'esigenza della covarianza, la
%    giustificazione che il $\Worker$ ``emits an unbounded sequence of $\TagRes$
%    ending with a final $\TagStop$'' mi sembra troppo debole perché si applica
%    ugualmente a $\Worker$ e all'ideale complemento del server.}
    % $\TagStop$''} instead of performing repeated internal choices between the
    % two output tags $\TagRes$ and $\TagStop$, until $\TagStop$ is chosen, it
    % does not perform any internal decision between alternative outputs because
    % it emits an unbounded sequence of $\TagRes$ ending with a final
    % $\TagStop$.
    %
%    produces as many $\TagRes$ults as
%    the number of $\TagTask$s it receives, while the server makes no assumptions
%    on the number of expected $\TagRes$ults.
\item None of the aforementioned works provides guarantees on the eventual
    fulfillment of the client's request, either because they do not make sure
    that every session eventually
    terminates~\cite{ChenDezaniScalasYoshida17,GhilezanEtAl23} or because they
    do not take multiple sessions into
    account~\cite{GhilezanEtAl23,BravettiLangeZavattaro24}.
\item
    Ciccone, Dagnino and
    Padovani~\citep{CicconePadovani22A,CicconePadovani22C,CicconeDagninoPadovani24}
    study session type systems ensuring the fair termination of sessions, but
    their works are based on synchronous communication models that do not
    support any form of output anticipation like the one exemplified by
    $\Worker$.
\end{itemize}

The theory of asynchronous session types that we propose in this paper addresses
all these limitations. In addition, we also make the following technical
contributions:
\begin{itemize}
\item We define a \emph{fair asynchronous subtyping relation} for session types
    that is coarser than those in the literature for the family of eventually
    terminating session types. Unlike many existing fair/asynchronous subtyping
    relations~\citep{MostrousYoshida15,CicconePadovani22A,CicconeDagninoPadovani24,BravettiLangeZavattaro24}
    our subtyping relation is \emph{closed under duality}. This property is key
    for proving the type system sound.
\item We give the first \emph{fair asynchronous semantics} of session types
    using a labelled transition system (LTS) defined by \emph{bounded
    coinduction}~\citep{AnconaDagninoZucca17,Dagnino19}. The adoption of this
    semantics allows us to characterize fair asynchronous subtyping in a way
    that is structurally the same as the one for the well-known synchronous
    subtyping defined by Gay and Hole~\citep{GayHole05}.
\item Our theory of asynchronous session types with fair asynchronous subtyping
    is the first one where the process model and the type system are rooted in
    linear logic~\cite{Wadler14,CairesPfenningToninho16,LindleyMorris16}. We
    incorporate fair asynchronous subtyping in the type system as generalized
    forms of the cut and linear logic axiom thanks to the aforementioned closure
    under duality. Also, instead of introducing explicit message buffers, we
    model asynchronous communications by means of suitable \emph{commuting
    conversions} and \emph{deep cut reductions} in linear logic proofs.
\end{itemize}

\subparagraph*{Structure of the paper.}
\Cref{sec:language} describes our calculus of asynchronous processes and the
properties we enforce. \Cref{sec:session-types} introduces the semantics of
asynchronous session types while \Cref{sec:subtyping} studies fair asynchronous
subtyping and its properties. \Cref{sec:type-system} describes the type system
and \Cref{sec:subtyping-usage-patterns} illustrates some common usage patters of
fair asynchronous subtyping.
\Cref{sec:subtyping-inclusions} shows how our LTS can be easily tailored to
characterize various asynchronous subtyping relations that appear in the
literature and compares them with our subtyping relation.
\Cref{sec:related} discusses related work in more detail and
\Cref{sec:conclusion} recaps and outlines future work.
Proofs and additional technical material are provided in the Appendix.
\section{A Calculus of Asynchronous Processes}
\label{sec:language}

\begin{table}
    \caption{Syntax of \Calculus.}
    \label{tab:syntax}
    \centering
    \begin{math}
        \begin{array}[t]{@{}rcll@{}}
            P, Q & ::= & & \textbf{Process} \\
            &   & \Link\x\y & \text{link} \\
            & | & \Close\x & \text{signal output} \\
            & | & \Select\x\Tag.P & \text{tag output} \\
            & | & \Fork\x\y{P}{Q} & \text{channel output} \\
            & | & P \choice Q & \text{choice} \\
        \end{array}
        ~
        \begin{array}[t]{@{}rcll@{}}
            & | & \Done & \text{termination} \\
            & | & \Call{A}{\seqof\x} & \text{invocation} \\
            & | & \Wait\x.P & \text{signal input} \\
            & | & \Case\x{\Tag_i:P_i}_{i\in I} & \text{tag input} \\
            & | & \Join\x\y.P & \text{channel input} \\
            & | & \Cut\x{P}{Q} & \text{composition} \\
        \end{array}
    \end{math}
\end{table}

In this section we present syntax and semantics of a calculus of asynchronous
processes called \Calculus and we formulate the safety and liveness properties
that our type system guarantees.
At the surface level, \Calculus closely resembles other calculi of binary
sessions based on linear logic such as \CP~\cite{Wadler14} or
\muCP~\cite{LindleyMorris16}. The main differences between \Calculus and these
calculi are that \Calculus supports general recursion, it includes a
non-deterministic choice operator and above all it models asynchronous
communication by giving a non-blocking semantics to output actions, which
essentially act like message buffers.

The syntax of \emph{processes} is shown in \Cref{tab:syntax} and makes use of an
infinite set of \emph{channels} ranged over by $x$, $y$ and $z$, a set $\TagSet$
of \emph{tags} ranged over by $\TagA$, $\TagB$, $\dots$ and a set of
\emph{process names} ranged over by $A$, $B$, etc.
The terminated process, which performs no actions, is denoted by $\Done$.
The term $\Call{A}{\seqof\x}$, where $\seqof\x$ is a possibly empty sequence of
channels, represents the \emph{invocation} of the process named $A$. We assume
that for each such invocation there exists a unique global definition of the
form $\Let{A}{\seqof\x} = P$ that gives meaning to the name $A$. We also assume
that all invocations occur guarded by a prefix or by a non-deterministic choice.
The term $\Link\x\y$ models a \emph{link}, that is a process that forwards every
message received from $x$ to $y$ and vice versa, effectively unifying the two
channels. Links are typical of calculi based on linear logic since their typing
rule correspond to the axiom of the logic.
The terms $\Close\x$ and $\Wait\x.P$ model processes that respectively
send and receive a termination signal on $x$. Note that $\Close\x$ has no
continuation, as is the case in most calculi based on linear logic, whereas
$\Wait\x.P$ continues as $P$.
The term $\Fork\x\y{P}{Q}$ %and $\Join\x\y.P$ 
models a bifurcating session: 
%The process $\Fork\x\y{P}{Q}$
it creates a new channel $y$, sends $y$ on $x$, forks a
new process $P$ that uses $y$, and then continues as $Q$. The process
$\Join\x\y.P$ waits for a channel $y$ from $x$ and then continues as $P$.
The terms $\Select\x\Tag.P$ and $\Case\x{\Tag_i:P_i}_{i\in I}$ model processes
that respectively send and receive a tag. The sender selects one particular tag $\Tag$
to send. The receiver continues as $P_i$ depending on the tag $\Tag_i$ that it
receives. In the examples we sometimes use tags as an abstract representation of
more complex messages, such as $\TagReq$uests or $\TagTask$s.
The term $\Cut\x{P}{Q}$ models the parallel composition of two processes $P$ and
$Q$ connected by the channel $x$. We often refer to this term as a \emph{cut},
since its typing rule coincides with the cut rule of linear logic.
Finally, the term $P \choice Q$ models the non-deterministic choice between $P$
and $Q$.

It is known that links in conjunction with bifurcating sessions can be used to
encode \emph{session delegation}, whereby processes exchange an \emph{existing}
(rather than new) channel $z$ on $x$. This behavior can be modeled by a term of
the form $\Fork\x\y{\Link\y\z}{P}$.
As we will see in \Cref{sec:subtyping-usage-patterns}, \Calculus links also act
as \emph{explicit casts} enabling useful forms of subsumption.
%\GZ{Questo paragrafo 
%risponde in parte alla mia nota sopra, ma proverei a dire qualcosa anche sopra}

The notions of \emph{free} and \emph{bound} channels for processes are defined
as expected with the proviso that a process of the form $\Fork\x\y{P}{Q}$ binds
$y$ in $P$ but not in $Q$.
We identify processes up to renaming of bound channels and we write $\fn{P}$ for
the set of channels occurring free in $P$.
Also, for every global definition $\Let{A}{\seqof\x}={P}$ we assume $\fn{P} =
\set{\seqof\x}$.

\begin{table*}
    \caption{Structural pre-congruence and reduction semantics of \Calculus.}
    \label{tab:semantics}
    \centering
    \begin{math}
        \displaystyle
        \begin{array}{@{}lr@{~}c@{~}ll@{}}
            \rulename{s-call}
            & \Call{A}{\seqof\x}
            & \pcong & P
            & \Let{A}{\seqof\x}={P}
            \\
            \rulename{s-link}
            & \Link\x\y
            & \pcong & \Link\y\x
            \\
            \rulename{s-comm}
            & \Cut\x{P}{Q}
            & \pcong & \Cut\x{Q}{P}
            \\
            \rulename{s-wait}
            & \Cut\x{\Buffer_x[\Wait\y.P]}{Q}
            & \pcong & \Wait\y.\Cut\x{\Buffer_x[P]}{Q}
            & x \ne y
            \\
            \rulename{s-case}
            & \Cut\x{\Buffer_x[\Case\y{\Tag_i:P_i}_{i\in I}]}{Q}
            & \pcong & \Case\y{\Tag_i : \Cut\x{\Buffer_x[P_i]}{Q}}_{i\in I}
            & x \ne y
            \\
            \rulename{s-join}
            & \Cut\x{\Buffer_x[\Join\y\z.P]}{Q}
            & \pcong & \Join\y\z.\Cut\x{\Buffer_x[P]}{Q}
            & x \ne y, z \not\in \fn{\Buffer_x}
            \\
            \rulename{s-pull-0}
            & \Cut\x{\Buffer_x[P]}{\Buffer_x'[\Link\x\y]}
            & \pcong & \Buffer_y[\Cut\x{P}{\Buffer_x'[\Link\x\y]}]
            &
            \\
            \rulename{s-pull-1}
            & \Cut\x{\Fork\y\z{P}{Q}}{R}
            & \pcong & \Fork\y\z{\Cut\x{P}{R}}{Q}
            & x \ne y, x\in\fn{P}
            \\
            \rulename{s-pull-2}
            & \Cut\x{\Buffer_y[P]}{Q}
            & \pcong & \Buffer_y[\Cut\x{P}{Q}]
            & x \ne y, x \not\in \fn{\Buffer_y}
            \\
            \rulename{s-pull-3}
            & \Buffer_x[\Fork\y\z{P}{Q}]
            & \pcong & \Fork\y\z{\Buffer_x[P]}{Q}
            & x \ne y, x\in\fn{P}
            \\
            \rulename{s-pull-4}
            & \Buffer_x[\Buffer_y'[P]]
            & \pcong & \Buffer_y'[\Buffer_x[P]]
            & x \ne y, x \in \fn{P}
            \\
            \\
            \rulename{r-choice}
            & P_1 \choice P_2
            & \red & P_k
            & k \in \{1,2\}
            \\
            \rulename{r-link}
            & \Cut\x{\Link\x\y}{P}
            & \red & P\subst\y\x
            \\
            \rulename{r-close}
            & \Cut\x{\Close\x}{\Wait\x.P}
            & \red & P
            \\
            \rulename{r-select}
            & \Cut\x{\Select\x{\Tag_k}.P}{\Buffer_x[\Case\x{\Tag_i:Q_i}_{i\in I}]}
            & \red & \Cut\x{P}{\Buffer_x[Q_k]}
            & k \in I
            \\
            \rulename{r-fork}
            & \Cut\x{\Fork\x\y{P}{Q}}{\Buffer_x[\Join\x\y.R]}
            & \red & \Cut\y{P}{\Cut\x{Q}{\Buffer_x[R]}}
            &
            \\
            \rulename{r-cut}
            & \Cut\x{P}{R}
            & \red & \Cut\x{Q}{R}
            & P \red Q
            \\
            \rulename{r-buffer}
            & \Buffer_x[P]
            & \red & \Buffer_x[Q]
            & P \red Q
            \\
            \rulename{r-str}
            & P & \red & Q
            & P \pcong R \red Q
            \\
        \end{array}
    \end{math}
\end{table*}

We have anticipated that \Calculus adopts an asynchronous communication model.
In practice this would be implemented by FIFO buffers storing messages that
have been produced but not consumed. In \Calculus, we model asynchrony giving a
non-blocking semantics to the output actions $\Fork\x\y{Q}{P}$ and
$\Select\x\Tag.P$. That is, we allow the continuation $P$ to reduce and/or
interact with other sub-processes even if the prefix has not been consumed. In a
sense, we consider the prefixes $\Fork\x\y{Q}{}$ and $\Select\x\Tag$ as
\emph{floating messages} or parts of a \emph{buffer} associated with channel
$x$.
In general, we call \emph{buffer} any term generated by the following grammar:
\[
    \textbf{Buffer}
    \qquad
    \Buffer_x ::= \Hole \mid \Select\x\Tag.\Buffer_x \mid \Fork\x\y{P}{\Buffer_x}
\]

A buffer is either empty, represented by a \emph{hole} $\Hole$, or a tag $\Tag$
sent on $x$ followed by a buffer for $x$, or a fresh channel $y$ (with
associated process $P$) sent on $x$ and followed by a buffer for $x$. Note that
the annotation $x$ in the metavariable $\Buffer_x$ is meant to \emph{bind} the
channel on which the messages in the buffer have been sent. Therefore, having at
our disposal a buffer $\Buffer_x$, we can write $\Buffer_y$ for the buffer that
has the same structure as $\Buffer_x$ but where $x$ has been replaced by $y$.
Buffers vaguely resemble \emph{reduction contexts}, except that they allow us to
place a hole \emph{behind} output prefixes since these are meant to be
non-blocking. Hereafter we write $\Buffer_x[P]$ for the process obtained by
replacing the hole in $\Buffer_x$ with $P$.

The operational semantics of \Calculus is given by a structural pre-congruence
relation $\pcong$ and a reduction relation $\red$, both defined in
\Cref{tab:semantics}.
In simple words, structural pre-congruence relates processes that are
essentially equivalent except for the order of independent actions, while
reduction describes communications and the resolution of non-deterministic
choices.

We can roughly classify the rules for structural pre-congruence in four groups.
The first group contains \rulename{s-call}, \rulename{s-link} and
\rulename{s-comm}, which capture expected properties of process invocations,
links and parallel compositions: a process invocation $\Call{A}{\seqof\x}$ is
indistinguishable from $P$ if $\Let{A}{\seqof\x}={P}$; a link $\Link\x\y$ is
indistinguishable from $\Link\y\x$; a cut $\Cut\x{P}{Q}$ is indistinguishable
from $\Cut\x{Q}{P}$, that is parallel composition is commutative.

The second group of rules contains \rulename{s-wait}, \rulename{s-case} and
\rulename{s-join}. These rules allow an input action on some channel $y$ to be
extruded from a cut on $x$ when $x \ne y$. The purpose of these transformations
is to move inputs on $y$ close to outputs on $y$, so as to enable interactions
in the session $y$. These transformations correspond to well-known
rearrangements of linear logic proofs (for example, they are sometimes referred
to as \emph{external reductions}~\citep{BaeldeDoumaneSaurin16,Doumane17}),
except that in \Calculus we allow the input prefix to be found \emph{within an
arbitrary buffer} $\Buffer_x$, coherently with the intuition that (output)
actions in a buffer are non-blocking. The side condition
$z\not\in\fn{\Buffer_x}$ in \rulename{s-join} makes sure that free occurrences
of $z$ in $\Buffer_x$ are not accidentally captured by the binding prefix. Since
we consider processes (and buffers) modulo renaming of bound names and there is
an infinite supply of names, the bound $z$ can always be renamed so as to enable
this process transformation.

The third group of rules contains \rulename{s-pull-0}, \rulename{s-pull-1} and
\rulename{s-pull-2}. These rules are similar to those of the second group,
except that they allow whole \emph{buffers} of messages on $y$ to float through
cuts on $x$ when $x \ne y$. \rulename{s-pull-2} is simple and speaks for itself
while \rulename{s-pull-1} deals with the case in which the name $x$ bound by the
cut occurs in the process $P$ associated with a fresh channel $z$ sent on $y \ne
x$. In this case the cut as a whole becomes associated with $z$. In principle we
should also specify the side condition $x\not\in\fn{Q}$, but this condition
holds for well-typed processes when $x\in\fn{P}$.
%\GZ{\rulename{s-pull-1} non la
%capisco bene; in particolare non mi è chiaro Q che esce dallo scope di x}\LP{Ho
%aggiunto la precisazione.} 
The rule \rulename{s-pull-0} covers a somewhat
peculiar case of \rulename{s-pull-2}, whereby a buffer $\Buffer_x$ can be
extruded from the cut on $x$ because there is a link $\Link\x\y$ that acts as a
forwarder from $x$ to $y$. Note that the extruded buffer $\Buffer_x$ turns into
$\Buffer_y$, so as to reflect the forwarding effect of the link.

The fourth and final group of rules contains \rulename{s-pull-3} and
\rulename{s-pull-4}. These rules allow buffers for different channels to be
permuted. Again \rulename{s-pull-3} deals with the special case in which a
channel $x$ occurs in the process $P$ associated with a fresh channel $z$.

In the definition of structural pre-congruence there are some glaring omissions
(\eg associativity of parallel composition) and very few rules are invertible.
This is not because the missing rules would be unsound, but because they turn
out to be unnecessary for proving that well-typed processes are deadlock free
and fairly terminating.

Base reductions consist of \rulename{r-choice}, \rulename{r-close},
\rulename{r-select} and \rulename{r-fork} which respectively model the reduction
of a non-deterministic process, the termination of the session $x$ and the
consumption of tags and channels sent on channel $x$.
The rules are almost standard, except that \rulename{r-select} and
\rulename{r-fork} allow input actions on $x$ to operate from within arbitrary
buffers for $x$. The buffers represent asynchronously sent messages that do not
block subsequent actions. At the logical level, these interactions correspond to
\emph{deep cut reductions} in a sense that resembles \emph{deep
inference}~\cite{GuglielmiStrassburger01}, whereby logical rules can be applied
deep within a context.
Unlike \rulename{r-select} and \rulename{r-fork}, there is no buffer around
$\Wait\x.P$ in \rulename{r-close}. This rule implies that a session cannot be
closed unless all the messages (asynchronously) produced therein have also been
consumed.
%\LP{In questo caso la semantica dei processi è effettivamente
%disallineata da quella dei tipi, dove anche i segnali di terminazione possono
%essere anticipati senza però causare danni.} 
Note that this property is enforced
by the type system and is not meant to be checked at runtime.
%\LP{Ho aggiunto
%spiegazioni sulle riduzioni.}
%
Rules \rulename{r-cut}, \rulename{r-buffer} and \rulename{r-str} propagate
reductions across cuts and buffers and close them by structural pre-congruence.

Hereafter we write $\wred$ for the reflexive, transitive closure of $\red$. We
write $P \red$ if $P \red Q$ for some $Q$ and $P \nred$ if not $P \red$.

We can now formally define the safety and liveness properties we are interested
in.

\begin{definition}[deadlock freedom]
    \label{def:df}
    We say that $P$ is \emph{deadlock free} if for every $Q$ such that $P \wred Q
    \nred$ we have $Q \pcong \Done$.
\end{definition}

\newcommand{\Stuck}{\mathit{Stuck}}

Deadlock freedom is an instance of safety property. A deadlock-free process
either reduces or it is (structurally pre-congruent to) $\Done$. For example,
the process $\Let\Omega{}={\Call\Omega{} \choice \Call\Omega{}}$ is deadlock
free (we have $\Call\Omega{} \red \Call\Omega{}$) whereas
$\Cut\x{\Close\y}{\Wait\x.\Done}$ is deadlocked.
When a deadlock-free process stops reducing, it contains no pending actions and
all of its sessions have been closed. In particular, all the messages in buffers
have been consumed.

The liveness properties we are interested in are related to \emph{termination},
of which there exist several variants.
A \emph{reduction sequence} of $P$ is a sequence $(P_0,P_1,\dots)$ such that
$P_0 = P$ and $P_i \red P_{i+1}$ whenever $i+1$ is not greater than the length
of the sequence. A \emph{run} is a maximal reduction sequence, in the sense that
either it is infinite or the last process in the sequence (say $P_n$) does not
reduce (that is, $P_n \nred$).
We say that $P$ is \emph{weakly terminating} if it has a finite run, that $P$ is
\emph{terminating} if every run of $P$ is finite, and that $P$ is
\emph{diverging} if every run of $P$ is infinite.
For example, $\Call\Omega{} \choice \Done$ is weakly terminating but not
terminating, $\Done \choice \Done$ is terminating, and $\Call\Omega{}$ is
diverging.
Note that here we call ``termination'' the mere inability to reduce further and
not the fact that a process has become $\Done$. For example, $\Done$, $\Close\x$
and $\Cut\x{\Close\y}{\Wait\x.\Done}$ are all terminated (they do not reduce),
but only $\Done$ is also deadlock free. So it really is the combination of
deadlock freedom (\Cref{def:df}) and some termination property that we wish to
enforce with our type system.

The termination property we target in this work is called \emph{fair
termination}~\cite{GrumbergFrancezKatz84,Francez86,AptFrancezKatz87}. Fair
termination consists of those processes such that all of their infinite runs are
considered to be unrealistic or \emph{unfair} and therefore can be ignored
insofar termination is concerned. We could say that these processes may diverge
in principle, but they terminate in practice.
Clearly, fair termination depends on a \emph{fairness notion} that discriminates
\emph{fair} runs from \emph{unfair} ones. Among all fairness notions, here we
consider a particular instance of \emph{full fairness}~\citep{GlabbeekHofner19}.

\begin{definition}
    \label{def:fair-run}
    A run is \emph{fair} if it contains finitely many weakly terminating
    processes.
\end{definition}

Remember that a run is a \emph{maximal} reduction sequence of a process. So,
along a fair run the process only has finitely many chances to terminate. This
can happen either because the run is finite (the process eventually terminates)
or because the run contains a diverging process (at some point termination is no
longer possible). For example, the infinite run $(\Call\Omega{} \choice \Done,
\Call\Omega{}, \dots)$ is fair because only the first process in it is weakly
terminating.
We find it useful to also look at the \emph{negation} of the notion of fair run:
an \emph{unfair run} is necessarily infinite and contains infinitely many weakly
terminating processes. In other words, an unfair run describes a computation
along which termination is always reachable, but it is never reached as if the
process is avoiding it on purpose.
For example, if $\Let{A}{} = \Call{A}{} \choice \Done$ then the infinite run
$(\Call{A}{}, \Call{A}{}, \dots)$ is unfair because $\Done$ is always reachable
but never reached.

\begin{definition}
    \label{def:ft}
    We say that $P$ is \emph{fairly} \emph{terminating} if every \emph{fair} run
    of $P$ is finite.
\end{definition}

For example, $\Let{A}{} = \Call{A}{} \choice \Done$ is fairly terminating
whereas $\Call\Omega{} \choice \Done$ is not because it has an infinite fair
run.
There are two reasons why full fairness is a suitable fairness assumption in our
setting. First, full fairness has been shown to be the \emph{strongest}
conceivable fairness assumption~\citep{GlabbeekHofner19}, which means that it
allows us to target the \emph{largest} family of fairly terminating processes.
Second, it has been observed \citep{CicconeDagninoPadovani22} that this family
admits the following alternative characterization which does not mention fair
runs at all.

\begin{restatable}{theorem}{thmproofprinciple}
    \label{thm:proof-principle}
    $P$ is fairly terminating iff each $Q$ such that $P \wred Q$ is weakly
    terminating.
\end{restatable}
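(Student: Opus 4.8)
The statement is an iff, so I would prove the two directions separately. Throughout, recall that "weakly terminating" means "has a finite run", i.e.\ can reach a non-reducing process, and that $\wred$ is the reflexive-transitive closure of $\red$.

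\medskip
\noindent\textbf{Forward direction ($\Rightarrow$).} Assume $P$ is fairly terminating and let $Q$ be such that $P \wred Q$. I want to show $Q$ is weakly terminating. Suppose not: then $Q$ has no finite run, so \emph{every} run of $Q$ is infinite — in particular $Q$ is diverging. Now build an infinite reduction sequence starting from $P$: first take the finite sequence $P = P_0 \red \cdots \red P_k = Q$ witnessing $P \wred Q$, then extend it by any infinite run of $Q$. This yields an infinite run $\rho$ of $P$. The key observation is that $\rho$ is \emph{fair}: its tail (from $Q$ onward) consists entirely of non-weakly-terminating processes, since every process reachable from $Q$ is again diverging (a reduct of a diverging process cannot have a finite run); hence $\rho$ contains only finitely many weakly terminating processes, namely at most $P_0,\dots,P_{k-1}$. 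So $\rho$ is an infinite fair run of $P$, contradicting fair termination of $P$. Therefore $Q$ is weakly terminating.

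\medskip
\noindent\textbf{Backward direction ($\Leftarrow$).} Assume every $Q$ with $P \wred Q$ is weakly terminating; I must show every fair run of $P$ is finite. Suppose toward a contradiction that $P$ has an infinite fair run $\rho = (P_0, P_1, \dots)$ with $P_0 = P$. By definition of fair run, $\rho$ contains only finitely many weakly terminating processes, so there is an index $N$ such that $P_i$ is \emph{not} weakly terminating for all $i \ge N$. But $P \wred P_N$, so by hypothesis $P_N$ \emph{is} weakly terminating — contradiction. Hence $P$ has no infinite fair run, i.e.\ $P$ is fairly terminating.

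\medskip
\noindent\textbf{Where the work is.} Both directions are short once the right observations are in place; the statement is essentially a logical repackaging of the definitions of fair run and fair termination. The only genuinely load-bearing fact is that reducibility is preserved backwards along $\wred$ in the sense needed for the forward direction: \emph{if $Q$ is not weakly terminating then no reduct of $Q$ is weakly terminating}, equivalently \emph{if some reduct of $Q$ is weakly terminating then so is $Q$}. This is immediate — prepend one reduction step to the witnessing finite run — but it is the hinge that makes the constructed run fair. I would state it as a small lemma (or just inline it) before the main argument. Everything else is bookkeeping with the definitions of run, fair run, and $\wred$; I do not anticipate any real obstacle. (The result is attributed to \citet{CicconeDagninoPadovani22}, so the proof is expected to be a clean restatement rather than a new technical development.)
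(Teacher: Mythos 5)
Your proof is correct and follows essentially the same route as the paper's: both directions hinge on the observation that weak termination is preserved backwards along $\wred$ (so the tail of a run starting from a diverging process contains no weakly terminating process), which is exactly what the paper packages as its ``feasibility'' lemma before applying it in the forward direction. Your forward direction argues by contradiction where the paper argues positively via that lemma, but the content is the same.
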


The relevance of this characterization rests in the fact that it provides the
key proof method for the soundness of our type system. Indeed, suppose that the
type system ensures that well-typed processes \emph{weakly} terminate. We expect
the type system to also enjoy subject redution, namely the property that
well-typed processes always reduce to well-typed processes. But then, using the
right-to-left implication in \Cref{thm:proof-principle}, the very same type
system also ensures that well-typed processes fairly terminate.

\begin{example}
    \label{ex:server-worker}
    Consider the $\Split$ process defined in \eqref{eq:split}. We derive
    \[
        \Call\Split{x,y} \red
        \Select\y\TagTask.\Call\Split{x,y} \red
        \cdots
        \red
        (\Select\y\TagTask)^n.\Call\Split{x,y}
    \]
    where $(\Select\y\TagTask)^n$ denotes $n$ subsequent $\Select\y\TagTask$
    prefixes, using repeated applications of \rulename{r-choice} and
    \rulename{r-buffer}. Notice how the messages pile up as the process reduces
    and also that, at any time, the process may reduce to
    $(\Select\y\TagTask)^n.\Select\y\TagStop.\Call\Gather{x,y}$ which is
    weakly terminating.%\GZ{Non è weakly terminating?} 
    That is, $\Call\Split{x,y}$ is fairly terminating by
    \Cref{thm:proof-principle}.
    We also have
    \[
        \Cut\y{\Call\Split{x,y}}{\Call\Worker\y} \wred
        \Cut\y{(\Select\y\TagTask)^n.\Call\Split{x,y}}{(\Select\y\TagRes)^m.\Call\Worker\y}
    \]
    for every $n$ and $m$, indicating that $\Split$ has produced $m+n$ tasks and
    $\Worker$ has consumed only $m$ of them.
    Since $\Split$ can always send $\TagStop$ and consume all the $\TagRes$
    messages produced by $\Worker$, we also have
    \[
        \Cut\y{(\Select\y\TagTask)^n.\Call\Split{x,y}}{(\Select\y\TagRes)^m.\Call\Worker\y}
        \wred
        \Select\x\TagResp.\Done
        \nred
    \]
    indicating that $\Cut\y{\Call\Split{x,y}}{\Call\Worker\y}$ too is fairly
    terminating by \Cref{thm:proof-principle}.
    \eoe
\end{example}

\section{Asynchronous Session Types}
\label{sec:session-types}

\subsection{Syntax}
\label{sec:types-syntax}

\emph{Session types} are generated by the productions below
\[
    \textstyle
    \textbf{Session type}
    \qquad
    S, T ::= \One \mid \Bot \mid \Plus{\Tag_i : S_i}_{i\in I} \mid \With{\Tag_i : S_i}_{i\in I} \mid \Times{S}{T} \mid \Par{S}{T}
\]
and adhere to the usual interpretation given to propositions of multiplicative
additive linear logic (\MALL)~\cite{CairesPfenningToninho16,Wadler14}: the
constants $\One$ and $\Bot$ describe processes respectively sending and
receiving a termination signal; the additive connectives $\Plus{\Tag_i :
S_i}_{i\in I}$ and $\With{\Tag_i : S_i}_{i\in I}$ describe processes sending and
receiving a tag $\Tag_i$ and then behaving according to $S_i$; the
multiplicative connectives $\Times{S}{T}$ and $\Par{S}{T}$ describe processes
exchanging a channel of type $S$ and then behaving according to $T$.

Compared to the usual linear logic propositions, we observe the following
differences:
\begin{itemize}
\item The additive connectives are $n$-ary instead of binary and make use of
    explicit tags for improved generality and readability. In each additive
    connective $\Plus{\Tag_i : S_i}_{i\in I}$ and $\With{\Tag_i : S_i}_{i\in I}$
    we assume that the set $I$ is finite and that the tags $\Tag_i$ are pairwise
    disjoint.
\item The additive constants $\Zero \eqdef \Plus{}$ and $\Top \eqdef \With{}$
    are defined as degenerate (empty) versions of the additive connectives
    instead of being built-in.
\item The productions shown above are meant to be interpreted
    \emph{coinductively}. That is, we consider session types the possibly
    infinite \emph{regular} trees built using the above productions. We define
    possibly infinite session types as solutions of equations of the form $S =
    \cdots$ where the metavariable $S$ may occur (guarded) on the right hand
    side of `$=$'. It is a known fact that every such finite system of equations
    admits a unique regular solution~\cite{Courcelle83}.
\end{itemize}

The \emph{dual} of a session type $S$, denoted by $\dual{S}$, describes the
mirrored protocol of $S$ and is corecursively defined by following equations:
\begin{align*}
    \dual\One & = \Bot &
    \dual{\Plus{\Tag_i : S_i}_{i\in I}} & = \With{\Tag_i : \dual{S_i}}_{i\in I} &
    \dual{(\Times{S}{T})} & = \Par{\dual{S}}{\dual{T}}
    \\
    \dual\Bot & = \One &
    \dual{\With{\Tag_i : S_i}_{i\in I}} & = \Plus{\Tag_i : \dual{S_i}}_{i\in I} &
    \dual{(\Par{S}{T})} & = \Times{\dual{S}}{\dual{T}}
\end{align*}

We say that the session types of the form $\One$, $\Plus{\Tag_i : S_i}_{i\in I}$
and $\Times{S}{T}$ are \emph{positive}, whereas the session types of the form
$\Bot$, $\With{\Tag_i : S_i}_{i\in I}$ and $\Par{S}{T}$ are \emph{negative}.
Positive session types describe protocols that begin with an \emph{output
action}, whereas negative session types describe protocols that begin with an
\emph{input action}. We write $\positive(S)$ and $\negative(S)$ to state that
$S$ is positive and negative, respectively.

\subsection{Fair Asynchronous Semantics}
\label{sec:types-lts}

\newcommand{\OneT}{\rulename{must-$\One$}\xspace}
\newcommand{\BotT}{\rulename{must-$\Bot$}\xspace}
\newcommand{\TimesT}{\rulename{must-$\Times{}{}$}\xspace}
\newcommand{\ParT}{\rulename{must-$\Par{}{}$}\xspace}
\newcommand{\PlusT}{\rulename{must-$\oplus$}\xspace}
\newcommand{\WithT}{\rulename{must-$\&$}\xspace}
\newcommand{\APlusT}{\rulename{may-$\oplus$}\xspace}
\newcommand{\AWithT}{\rulename{may-$\with$}\xspace}
\newcommand{\ATimesT}{\rulename{may-$\Times{}{}$}\xspace}
\newcommand{\AParT}{\rulename{may-$\Par{}{}$}\xspace}
\newcommand{\CoPlusT}{\rulename{fair-$\oplus$}\xspace}
\newcommand{\CoWithT}{\rulename{fair-$\with$}\xspace}

\newcommand{\SyncTrans}{cazzzz}
\newcommand{\AsyncTrans}{\rulename{must-async}\xspace}

\newcommand{\unit}{{*}}

\begin{table}
    \caption{\label{tab:lts}Labelled transition system for asynchronous session types.}
    \begin{mathpar}
        \inferrule[\OneT]{~}{
            \One \lred{\out\unit} \One
        }
        \and
        \inferrule[\TimesT]{~}{
            \Times{S}{T} \lred{\out S} T
        }
        \and
        \inferrule[\ATimesT]{
            T \lred{\inp\MessageType} T'
        }{
            \Times{S}{T} \lred{\inp\MessageType} \Times{S}{T'}
        }
        \\
        \inferrule[\BotT]{~}{
            \Bot \lred{\inp\unit} \Bot
        }
        \and
        \inferrule[\ParT]{~}{
            \Par{S}{T} \lred{\inp S} T
        }
        \and
        \inferrule[\AParT]{
            T \lred{\out\MessageType} T'
        }{
            \Par{S}{T} \lred{\out\MessageType} \Par{S}{T'}
        }
        \\
        \inferrule[\PlusT]{~}{
            \Plus{\Tag_i : S_i}_{i\in I} \lred{\out\Tag_k} S_k
        }
        \and
        \inferrule[\APlusT]{
            \forall i\in I: S_i \lred{\inp\MessageType} T_i
        }{
            \Plus{\Tag_i : S_i}_{i\in I} \lred{\inp\MessageType} \Plus{\Tag_i : T_i}_{i\in I}
        }
        \and
        \infercorule[\CoPlusT]{
        	\exists k\in I: S_k \lred{\inp\MessageType} T_k
        }{
            \Plus{\Tag_i : S_i}_{i\in I} \lred{\inp\MessageType} \Plus{\Tag_i : T_i}_{i\in I}
        }
        \\
        \inferrule[\WithT]{~}{
            \With{\Tag_i : S_i}_{i\in I} \lred{\inp\Tag_k} S_k
        }
        \and
        \inferrule[\AWithT]{
            \forall i\in I: S_i \lred{\out\MessageType} T_i
        }{
            \With{\Tag_i : S_i}_{i\in I} \lred{\out\MessageType} \With{\Tag_i : T_i}_{i\in I}
        }
        \and
        \infercorule[\CoWithT]{
        	\exists k\in I: S_k \lred{\out\MessageType} T_k
        }{
            \With{\Tag_i : S_i}_{i\in I} \lred{\out\MessageType} \With{\Tag_i : T_i}_{i\in I}
        }
    \end{mathpar}
\end{table}

We define the labelled transition system (LTS) for session types using the rules
in \Cref{tab:lts}. Labels of the transition system can be of the form
$\inp\MessageType$ (input of a message of type $\MessageType$) or
$\out\MessageType$ (output of a message of type $\MessageType$) where
$\MessageType$ is a \emph{message type} of the form $\unit$ (the type of a
termination signal), $\Tag$ (the singleton type of the tag $\Tag$) or $S$ (the
session type of a channel).
Hereafter we use $\MessageTypeS$ and $\MessageTypeT$ to range over message types
and $\ActionA$ and $\ActionB$ to range over labels.

Before we describe the rules in detail, we must point out two unusual but
important aspects of the LTS.
First of all, the LTS is specified as a \emph{Generalized Inference System} (GIS
for short~\citep{AnconaDagninoZucca17,Dagnino19}). A GIS consists of two sets of
rules, those that are meant to be interpreted coinductively (the singly-lined
rules in \Cref{tab:lts}) and those that are meant to be interpreted inductively
(the singly-lined rules plus the doubly-lined rules in \Cref{tab:lts}).
If we call $\ired\Action$ the relation that is defined by the inductive part of
the GIS, then the actual relation $\lred\Action$ being defined is the largest
one included in $\ired\Action$ that satisfies the singly-lined rules in
\Cref{tab:lts}. The interested reader may refer to the literature for a thorough
presentation of GIS \citet{AnconaDagninoZucca17,Dagnino19}, but the examples we
are about to discuss should suffice to clarify the nature of transitions.

The other unusual aspect of the LTS is that a transition $S \lred\Action T$ is
not an indication of what a process complying with $S$ necessarily does, but
rather of what the process is allowed or able to do. In particular, a transition
$S \lred{\out\MessageType} T$ means that a process complying with $S$ is allowed
to output a message of type $\MessageType$, even though $S$ may be negative. We
say that this is an \emph{early output transition} because it describes an
output that may occur ahead of time. Dually, a transition $S
\lred{\inp\MessageType} T$ means that a process complying with $S$ is eventually
able to input a message of type $\MessageType$, even though $S$ may be positive.
We say that this is a \emph{late input transition} because it describes the
consumption of a message that may occur later on.

% are that we envision an
% \emph{asynchronous communication model}, whereby output actions are
% non-blocking, and that the protocols described by session types may encapsulate
% \emph{fairness assumptions} and \emph{fairness guarantees} beside encoding the
% allowed sequences of messages. We will see in due time how these aspects affect
% the interpretation of session types and justify the structure of the LTS.

The axioms \rulename{must-*} are used to derive what we call \emph{immediate
transitions}. These are the expected transitions of session types, whereby no
input is late and no output is early.
For technical reasons it is convenient to have transitions also for $\One$ and
$\Bot$. In this way we do not have to distinguish $\One$ and $\Bot$ from other
non-terminated protocols when defining our notions of generalized duality
(\Cref{def:cac}) and subtyping (\Cref{def:cas}).
%\GZ{Osserviamo che
%ciclano?}\LP{Non ho parlato di cicli, ma ho accennato alle motivazioni per
%queste transizioni, può bastare?}
%
% Also note that $\Times{S}{T}$ performs a transition labeled by $\out\dual{S}$,
% since $\dual{S}$ is the actual type of the channel being sent by a process that
% complies with a session type of the form $\Times{S}{T}$ (\cf
% \Cref{sec:type-system}).

The rules \rulename{may-*} deal with late inputs and early outputs.
% Roughly
% speaking, they allow \emph{deep input transitions} to float through output
% actions and \emph{deep output transitions} to float through input actions,
% provided that the deep transitions are enabled along every branch of the session
% type.\GZ{In realtà sono enabled lungo ogni strongly fair branch del session
% type.. dico bene?}
%
As an example, consider the session type $S = \With{\TagA : \Plus{\TagC : T}}$
for which we may derive the transition sequences
\[
    S \lred{\inp\TagA} \Plus{\TagC : T} \lred{\out\TagC} T
    \text{\qquad and\qquad}
    S \lred{\out\TagC} \With{\TagA : T} \lred{\inp\TagA} T
\]

The sequence on the left, obtained by \WithT followed by \PlusT, is an ordinary
one: actions are performed according to the syntactic structure of the type. The
sequence on the right, obtained using \PlusT, \APlusT and \WithT, is peculiar to
the asynchronous setting: it describes a situation in which the output
$\out\TagC$ may be performed earlier than the input $\inp\TagA$. Since
communication is asynchronous and $\TagC$ is going to be sent anyway, a process
complying with $S$ might choose to send it \emph{early}, before waiting for
$\TagA$. Note that this is just a possibility: a process strictly adhering to
the transition sequence on the left would still comply with $S$.

A dual reasoning applies to late inputs. If we consider the type $S' =
\Plus{\TagA : \With{\TagC : T'}}$ we may derive the transition sequences
\[
    S' \lred{\out\TagA} \With{\TagC : T'} \lred{\inp\TagC} T'
    \text{\qquad and\qquad}
    S' \lred{\inp\TagC} \Plus{\TagA : T'} \lred{\out\TagA} T'
\]

Again, the sequence on the left is standard. The sequence on the right, obtained
using \WithT, \AWithT and \PlusT, says that a process complying with $S'$ is
able to consume a $\TagC$ message, even though this will happen only after the
process has sent $\TagA$. The fact that the input action is late cannot cause
issues, since both the outgoing $\TagA$ and the incoming $\TagC$ are sent
asynchronously with a non-blocking operation.

As it is clear looking at the rules \APlusT and \AWithT, late inputs and early
outputs concerning a branching session type must be derivable for every branch:
a message may be sent early (before an input) only if it is independent of the
input; a message may be received late (after an output) only if it is
independent of the output. 
For example, if we take $S = \With{\TagA : \Plus{\TagC : S_1}, \TagB :
\Plus{\TagC : S_2, \TagD : S_3}}$ then we can derive
\[
    S \lred{\out\TagC} \With{\TagA : S_1, \TagB : S_2}
    \text{\quad and also\quad}
    S \lred{\inp\TagB} \Plus{\TagC : S_2, \TagD : S_3} \lred{\out\TagD} S_3
    \text{\quad but not\quad}
    S \lred{\out\TagD} S'
\]
no matter what $S'$ could be. A process complying with $S$ may send $\TagC$
early, before receiving either $\TagA$ or $\TagB$, since the output of $\TagC$
is allowed regardless of the input. On the contrary, the output $\TagD$ is
allowed only if the input is $\TagB$ and so it cannot be anticipated before the
input.
Symmetrically, if we consider the session type $T = \Plus{\TagA : \With{\TagC : T_1}, \TagB :
\With{\TagC : T_2, \TagD : T_3}}$, we may derive
\[
    T \lred{\inp\TagC} \Plus{\TagA : T_1, \TagB : T_2}
    \text{\quad and also\quad}
    T \lred{\out\TagB} \With{\TagC : T_2, \TagD : T_3} \lred{\inp\TagD} T_3
    \text{\quad but not \quad}
    T \lred{\inp\TagD} T'
\]

The (late) input transition $\TagC$ is enabled because the process is able to
receive $\TagC$ regardless of the tag $\TagA$ or $\TagB$ that it sends. On the
contrary, the input transition on $\TagD$ is enabled only if the process sends
$\TagB$.

A subtler case of late input is illustrated by the session type $S_1 =
\Plus{\TagA : S_1, \TagB : \With{\TagC : S_2}}$, which describes the behavior of
a process that sends an arbitrary number of $\TagA$'s or a $\TagB$ and then
waits for a $\TagC$. Since the singly-lined rules in \Cref{tab:lts} are
interpreted coinductively, we can derive $S_1 \lred{\inp\TagC} S_1'$ where $S_1'
= \Plus{\TagA : S_1', \TagB : S_2}$ by means of the following \emph{infinite}
derivation:
\begin{equation}
    \label{eq:deep-input-derivation}
    \begin{prooftree}
        \[
            \mathstrut\smash\vdots
            \justifies
            S_1 \lred{\inp\TagC} S_1'
            \using\APlusT
        \]
        \[
            \justifies
            \With{\TagC : S_2} \lred{\inp\TagC} S_2
            \using\WithT
        \]
        \justifies
        S_1 \lred{\inp\TagC} S_1'
        \using\APlusT
    \end{prooftree}
\end{equation}

The transition $S_1 \lred{\inp\TagC} S_1'$ says that a process complying with
$S_1$ performs a late input of $\TagC$. Compared to the other examples of late
inputs, this one looks more questionable and for a good reason: it may be the
case that a process sending an \emph{infinite} sequence of $\TagA$'s complies
with $S_1$. So, claiming that such process is eventually able to input a $\TagC$
may lead to a message remaining orphan. In this work, however, we \emph{assume}
that a well-typed process complying with $S_1$ will also be \emph{fair}, in the
sense that it will eventually stop sending $\TagA$'s and will send the $\TagB$
ensuring that the $\TagC$ message is consumed.
%
% Basically, we are attaching a \emph{fairness guarantee} to a session type like
% $S_1$ ruling out the infinite sequence of $\TagA$'s as a valid implementation of
% $S_1$.
%
The same assumption is made, in dual form, for early outputs. For example, the
session type $T_1 = \With{\TagA : T_1, \TagB : \Plus{\TagC : T_2}}$ performs the
early output transition $T_1 \lred{\out\TagC} T_1'$ where $T_1' = \With{\TagA :
T_1', \TagB : T_2}$. That is, a process complying with $T_1$ is allowed to send
$\TagC$ early because it is guaranteed to eventually receive a $\TagB$
message.
%\LP{Ho rimosso le frasi su fairness guarantee e fairness assumption
%perché mi è parso che alla fine facciamo assunzioni e basta.}
% Here we are attaching a \emph{fairness assumption} to a session type like $T_1$
% which the process relies on.

The infinite derivation that we have just shown in
\eqref{eq:deep-input-derivation} reminds us that we must be careful in the use
of coinduction for defining the LTS. Indeed, an unconstrained use of coinduction
would allow us to derive early/late transitions that do not correspond to any
``real'' action of a session type, solely using \rulename{may-*} rules. Consider
for example the infinite session type $S = \Plus{\Tag : S}$, which allows
sending a neverending sequence of $\Tag$'s. It is easy to derive $S
\lred{\inp\MessageType} S$ for every $\MessageType$ by an infinite derivation
consisting of \APlusT rules only, despite the protocol described by $S$ does
\emph{not} enable any input transition!
Of course we want to make sure that, whenever we derive a late/early transition,
this is justified by the use of at least one \rulename{must-*} rule somewhere in
the derivation. This is the reason why the LTS is defined through a GIS and not
simply by the (coinductively interpreted) singly-lined rules in \Cref{tab:lts}.
On the one hand, we want to make sure that that a late/early transition is
enabled along \emph{every} branch of a session type. This is an \emph{invariant
property} enforced by the \rulename{may-*} rules. On the other hand, we want to
make sure that there exists at least one branch along which the transition
eventually originates for real. This is a \emph{well-founded property} enforced
by the \rulename{must-*} rules. As it has already been observed
elsewhere~\cite{CicconePadovani22B}, GIS are a convenient way of defining
relations like $\lred\Action$ that mix invariant and well-founded properties at
the same time.
The effect of defining the LTS as a GIS is that, whenever we build a (possibly
infinite) derivation for $S \lred\Action T$ using the singly-lined rules, we
must also be able to find, \emph{for each} judgment $S_i \lred{\Action_i} T_i$
in this derivation, a \emph{finite} derivation for $S_i \ired{\Action_i} T_i$.
In \eqref{eq:deep-input-derivation} this is achieved easily, as shown below:
\[
    \begin{prooftree}
        \[
            \justifies
            \With{\TagC : S_2} \ired{\inp\TagC} S_2
            \using\WithT
        \]
        \justifies
        S_1 \ired{\inp\TagC} S_1'
        \using\CoPlusT
    \end{prooftree}
\]

Notice the key role of \CoPlusT in building this \emph{finite} derivation for
$S_1 \ired{\inp\TagC} S_1'$. Since $S_1$ is an \emph{infinite} session type, we
would not be able to find a finite derivation for $S_1 \ired{\inp\TagC} S_1'$ if
we insisted on using \APlusT only, since this rule requires us to derive the
late input transition in \emph{every} branch of the session type. Instead,
according to \CoPlusT it suffices to find \emph{one} branch along which we can
eventually derive the late input transition directly. In the literature on GIS
the doubly-lined rules are called \emph{corules}. Here, we have called them
\CoPlusT and \CoWithT because they somehow capture the fairness assumption of
the LTS: whenever we derive late/early transitions in a looping session type
like $S_1$ and $T_1$, the fairness assumption makes sure that the conversation
eventually follows a branch leading out of the loop.

There is one exception to what we have just said about transitions being
eventually derived by \rulename{must-*} rules. Recalling that $\Zero = \Plus{}$
and $\Top = \With{}$, by \APlusT and \AWithT we can easily derive $\Zero
\lred{\inp\MessageType} \Zero$ and $\Top \lred{\out\MessageType} \Top$ for every
$\MessageType$. These derivations are trivially valid for the GIS since they are
finite. We might be tempted to flag these cases as pathological. After all, the
protocol $\Zero$ describes an unrealistic process that is able to \emph{input
anything} and the protocol $\Top$ describes a uncontrollable process that may
\emph{output anything}. While it is true that such behaviors are practically
useless, we will see in \Cref{sec:subtyping} that the derivability of these
transitions for $\Zero$ and $\Top$ has an important impact in the resulting
subtyping relation, for which $\Zero$ and $\Top$ will play the role of least and
the greatest element.

Hereafter, we let $\ActionsA$ and $\ActionsB$ range over finite sequences of
labels, we write $\xlred{\Action_1\cdots\Action_n}$ for the composition
$\lred{\Action_1}\cdots\lred{\Action_n}$, we write $S \lred\Action$ if $S
\lred\Action T$ for some $T$ and $S \nlred\Action$ if not $S
\lred\Action$.

\subsection{Properties of Asynchronous Session Types}

To substantiate the claim that each transition derivable by the GIS is ``real''
(\ie it originates from the syntax of the session type), we prove that every
input/output transition can be derived by the application of an axiom in
\Cref{tab:lts} after every maximal, strongly fair sequence of immediate
outputs/inputs.
Strong fairness is a weaker assumption implied by the full fairness of
\cref{def:fair-run}~\citep{GlabbeekHofner19}.
Formally, a (possibly infinite) sequence of transitions $S_0 \lred{\Action_1}
S_1 \lred{\Action_2} \cdots$ is \emph{strongly
fair}~\citep{Francez86,GlabbeekHofner19} if, whenever some $S'$ occurs
infinitely often in the sequence $S_0S_1\cdots$ and $S' \lred\Action T'$, then
also $T'$ occurs infinitely often in the same sequence. Intuitively, a strongly
fair sequence of transitions does not discriminate those transitions that are
enabled infinitely often. For example, if $S = \Plus{\TagA : S, \TagB : T}$, the
infinite sequence $S \lred{\out\TagA} S \lred{\out\TagA} \cdots$ of transitions
is strongly unfair, because the transition $S \lred{\out\TagB} T$ is infinitely
often enabled but never performed. On the contrary, if $S = \Plus{\TagA : S}$,
then the infinite sequence $S \lred{\out\TagA} S \lred{\out\TagA} \cdots$ of
transitions is strongly fair.

\begin{restatable}{theorem}{thmfas}
    \label{thm:fas}
    Let $\Action$ be the label of an input/output transition. Then $S
    \lred\Action$ if and only if every maximal, strongly fair sequence of
    immediate output/input transitions $S \lred{\Action_1} S_1 \lred{\Action_2}
    \cdots$ is finite and ends in some $T$ such that $T \lred\Action$ is
    derivable by an axiom in \Cref{tab:lts}.
\end{restatable}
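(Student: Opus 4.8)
The plan is to reduce both sides of the biconditional to purely graph-theoretic statements about a finite ``immediate-transition graph'' associated with $S$, and then to match them. By the duality of the rules in \Cref{tab:lts} --- each \rulename{must-*}, \rulename{may-*} and corule for outputs is mirrored by one for inputs, and $(\cdot)^{\bot}$ carries immediate transitions to immediate transitions --- it suffices to treat the case in which $\Action$ is an input label $\inp\MessageTypeS$; the output case then follows by instantiating the result at $\dual S$ and using $\dual{\dual S}=S$.

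So fix $\inp\MessageTypeS$. First I would record the basic shape of immediate output transitions: they are exactly $\One\lred{\out\unit}\One$, $\Times{S}{T}\lred{\out S}T$ and $\Plus{\Tag_i:S_i}_{i\in I}\lred{\out\Tag_k}S_k$ for $k\in I$, so they originate only from positive types, they are finitely branching, and --- since session types are regular trees --- the set $R(S)$ of types reachable from $S$ by immediate output transitions is finite. Negative types, $\One$ and $\Zero$ have no outgoing immediate output transition; I call these the \emph{terminal} types, and a terminal type $T$ \emph{accepting} when $T\lred{\inp\MessageTypeS}$ has a derivation of height one, i.e.\ $T=\Bot$ with $\MessageTypeS=\unit$, or $T=\Par{S'}{T'}$ with $\MessageTypeS=S'$, or $T=\With{\Tag_i:S_i}_{i\in I}$ with $\MessageTypeS=\Tag_k$ for some $k\in I$, or $T=\Zero$ (a boundary case, since $\Zero=\Plus{}$ vacuously performs every input transition). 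I then plan to establish three facts. \emph{(1)~The inductive relation.} $S\ired{\inp\MessageTypeS}$ holds iff some accepting terminal type is reachable from $S$ by immediate outputs; this is a routine induction on finite $\ired{}$-derivations, the crucial point being that the corule \CoPlusT lets such a derivation escape an $\oplus$ through a \emph{single} branch, so that $\ired{}$-derivability is about mere reachability rather than about the universal branching demanded by \APlusT. \emph{(2)~The GIS relation.} By the bounded coinduction principle, $S\lred{\inp\MessageTypeS}$ holds iff \textbf{(Inv)} $\One$ is not reachable from $S$ and every terminal type reachable from $S$ is accepting, and \textbf{(Bound)} every type in $R(S)$ can itself reach an accepting terminal type. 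Indeed (Inv) says exactly that the canonical unfolding of $S$ --- which at a $\otimes$ recurses on the continuation via \ATimesT, at a $\oplus$ recurses on all branches via \APlusT, and stops with an axiom at a terminal type --- is a well-formed (possibly infinite) coinductive derivation: its leaves are the reachable terminal types, so it exists iff none of them is $\One$ and it is valid iff the rest are accepting; and, granted (Inv), this derivation visits precisely the types of $R(S)$, for each of which the bounded side condition ``$U\ired{\inp\MessageTypeS}$'' is by~(1) equivalent to ``$U$ reaches an accepting terminal''. \emph{(3)~Fair reachability.} In a finite digraph in which every node can reach a sink (a node with no outgoing edge), every maximal strongly fair path is finite; consequently, if in addition all reachable sinks lie in a set $\mathrm{Good}$, every maximal strongly fair path ends in $\mathrm{Good}$. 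Conversely, if some reachable node cannot reach a sink, then there is an infinite strongly fair path. The proof of~(3) uses that the set of nodes visited infinitely often along a strongly fair infinite path is a reachable \emph{bottom} strongly connected component carrying at least one edge, and that such a component, being closed under edges, admits no edge --- hence no path --- toward a sink, whereas every finite path is vacuously strongly fair.

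Granted (1)--(3), the two implications are short. For ``$\Rightarrow$'', assume $S\lred{\inp\MessageTypeS}$: by~(2), (Inv) gives $\One\notin R(S)$ with every terminal reachable from $S$ accepting, and (Bound) gives that every type of $R(S)$ reaches an accepting terminal (in particular a sink); applying~(3) to the immediate-output graph on $R(S)$, with $\mathrm{Good}$ the accepting terminals --- which by (Inv) are exactly the sinks reachable from $S$ --- shows that every maximal strongly fair sequence of immediate outputs from $S$ is finite and ends in an accepting terminal, which is the right-hand side. For ``$\Leftarrow$'', assume the right-hand side: no maximal strongly fair sequence of immediate outputs from $S$ is infinite, so by the contrapositive of the converse in~(3) every type of $R(S)$ reaches a sink; hence $\One\notin R(S)$ (otherwise $\One$, a node of $R(S)$, would reach no sink, its only edge being $\One\to\One$), and every terminal type reachable from $S$, being the end of a finite --- hence vacuously strongly fair --- sequence, is accepting by the right-hand side; therefore every type of $R(S)$ reaches an accepting terminal, (Inv) and (Bound) both hold, and~(2) delivers $S\lred{\inp\MessageTypeS}$.

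I expect fact~(3) to be the main obstacle: it requires pinning down how \emph{strong} fairness behaves on the finite immediate-transition graph --- that a strongly fair infinite run is trapped in a reachable bottom SCC while every finite run is vacuously strongly fair --- and verifying that this dovetails exactly with the decomposition of the GIS definition of $\lred{}$ into its invariant part (the singly-lined rules) and its well-founded part (the bounded side condition, witnessed by the corules). The remaining ingredients --- the duality reduction, the induction for~(1), the bounded-coinduction argument for~(2), and the degenerate types $\Zero$, $\Top$ and $\One$ --- should be comparatively mechanical.
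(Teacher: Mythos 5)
Your proposal is correct and reaches the same conclusion, but it is organized quite differently from the paper's argument. The paper works directly on derivations: for the left-to-right direction it ranks each judgment occurring in the (possibly infinite) derivation of $S \lred{\inp\MessageType}$ by the size of its finite bounding derivation $S_i \ired{\inp\MessageType}$, observes that along an infinite strongly fair sequence of immediate outputs the set of infinitely-visited subtrees of $S$ is closed under transitions, and extracts an infinite strictly decreasing sequence of ranks; for the right-to-left direction it builds the coinductive derivation layer by layer by case analysis on the shape of $S$, and the bounding derivations by induction on the length of a strongly fair sequence. You instead factor everything through two explicit characterizations --- $\ired{}$ as reachability of an ``accepting'' sink in the finite immediate-output graph, and $\lred{}$ as your (Inv)+(Bound) via bounded coinduction --- plus a self-contained combinatorial lemma about strong fairness on finite digraphs. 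The essential ingredients (regularity of session types, edge-closure under strong fairness of the infinitely-visited set, the bounding derivations supplying the well-founded component) coincide; your route makes the decomposition of the GIS into invariant and well-founded parts, and the fairness reasoning, explicit and reusable, at the price of two extra lemmas. Two points to tighten: (i) $\One$ does have an outgoing immediate output transition, the self-loop $\One \lred{\out\unit} \One$ by \OneT, so it is not ``terminal'' in your sense; you silently correct this later (``its only edge being $\One\to\One$'') and the argument survives either way, but the classification should be stated consistently, since whether a maximal fair run at $\One$ is finite-but-non-accepting or infinite is exactly what is at stake there. (ii) The converse half of your fact (3) --- if some reachable node cannot reach a sink then an infinite maximal strongly fair path exists --- is a feasibility-of-strong-fairness claim that you assert rather than prove; the paper's right-to-left direction leans on the same fact, so you are in good company, but it deserves at least the standard scheduling argument on the finite graph.
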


Note that, in the statement of \Cref{thm:fas}, we cannot require the transition
$T \lred\Action$ to be immediate because of the phony early/late transitions
enabled by $\Zero$ and $\Top$.

We are accustomed to think that session types are used to describe race-free
interactions, but we have seen examples of session types that simultaneously
enable both input and output transitions. When this happens, such transitions
are independent and do not interfere with each other. We formalize this fact by
establishing a diamond property for session types.

\begin{restatable}{proposition}{propdiamond}
    \label{prop:diamond}
    If\/ $S \lred{\inp\MessageTypeS} S'$ and $S \lred{\out\MessageTypeT} S''$,
    then  $S' \lred{\out\MessageTypeT} T$ and $S'' \lred{\inp\MessageTypeS} T$
    for some $T$.
\end{restatable}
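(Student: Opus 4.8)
The plan is to prove the claim by a single case analysis on the syntactic shape of $S$. The essential tool is that the transition relation defined by the GIS of \Cref{tab:lts} is \emph{backward closed} under the singly-lined rules, meaning that every transition $S \lred\Action U$ arises as the conclusion of one such rule whose premises are themselves valid transitions. Inspecting \Cref{tab:lts}, for each head connective and each polarity (input versus output) at most one singly-lined rule applies; in particular, $\One$ has no input transition and $\Bot$ has no output transition, so the hypotheses immediately rule out $S = \One$ and $S = \Bot$. In each of the remaining four cases, the two hypotheses $S \lred{\inp\MessageTypeS} S'$ and $S \lred{\out\MessageTypeT} S''$ therefore pin down the head connective of $S$, the shapes of $S'$ and $S''$, and the premises that were used to derive the two transitions.

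The key observation is that, in each of these four cases, one of the two transitions is \emph{structural} — derived by the \rulename{must-*} axiom matching the head connective of $S$ — while the other is \emph{deep} — derived by the corresponding \rulename{may-*} rule acting within a sub-component of $S$; a structural action and a deep one touch disjoint parts of the type, hence commute, and the common reduct $T$ is obtained by performing them in the opposite order. For $S = \Plus{\Tag_i : S_i}_{i\in I}$, the output is $S \lred{\out\Tag_k} S_k$ for some $k$ by \PlusT, the input is $S \lred{\inp\MessageTypeS} \Plus{\Tag_i : T_i}_{i\in I}$ by \APlusT with $S_i \lred{\inp\MessageTypeS} T_i$ for every $i\in I$, and $T = T_k$ works: $\Plus{\Tag_i : T_i}_{i\in I} \lred{\out\Tag_k} T_k$ holds by \PlusT and $S_k \lred{\inp\MessageTypeS} T_k$ is one of the premises already at hand. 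The cases $S = \With{\Tag_i : S_i}_{i\in I}$ (structural input by \WithT, deep output by \AWithT), $S = \Times{S_1}{S_2}$ (structural output $S \lred{\out S_1} S_2$ by \TimesT, deep input by \ATimesT) and $S = \Par{S_1}{S_2}$ (structural input $S \lred{\inp S_1} S_2$ by \ParT, deep output by \AParT) follow the identical recipe, with the roles of the two polarities exchanged; in the two multiplicative cases one additionally uses that the message carried by the structural transition is forced to be $S_1$.

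The one point deserving care — and the step I would flag as the main obstacle, though it turns out to be routine — is verifying that the two transitions witnessing the diamond are \emph{genuine} GIS transitions and not merely derivations assembled from the singly-lined rules. This is immediate, since the argument stays one level deep and invokes no coinduction: the structural transition out of the already-transformed type (for instance $\Plus{\Tag_i : T_i}_{i\in I} \lred{\out\Tag_k} T_k$) is an instance of a \rulename{must-*} axiom and hence trivially a GIS transition, and the deep transition (for instance $S_k \lred{\inp\MessageTypeS} T_k$) is literally a premise of the GIS derivation supplied by the hypotheses and hence already belongs to the relation. So all four cases close by direct inspection and the proposition follows.
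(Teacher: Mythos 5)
Your proof is correct and follows essentially the same route as the paper's: identify that one of the two transitions is immediate (an axiom instance, forced by the polarity of $S$) and the other is deep (a \rulename{may-*} instance), then read off the common reduct from the premises of the deep rule; the paper merely phrases the case split as being on the rule deriving the immediate transition rather than on the head connective of $S$, and leaves the symmetric $\With{}{}$/$\Par{}{}$ cases and the GIS-validity remark implicit.
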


\section{Fair Asynchronous Subtyping}
\label{sec:subtyping}

We define the subtyping relation for asynchronous session types in three steps.
First of all, we formalize what we mean by \emph{correct asynchronous
composition} between the session types describing the protocols implemented by
two processes using the two endpoints of a session. In many session type
systems, this notion coincides with session type duality. Since we have to take
asynchrony into account, duality alone is too strict so we need a notion of
correct composition using the LTS defined in \Cref{sec:session-types}. Once this
notion is in place, asynchronous subtyping can be defined as the relation that
preserves correct asynchronous composition. This relation is ``sound'' by
definition, but its properties can be intrinsically difficult to grasp. The
final step will be to provide a \emph{precise} (\ie sound and complete)
alternative characterization of asynchronous subtyping that sheds light on its
properties.

The definition of correct asynchronous composition is given below.

\begin{definition}
  \label{def:cac}
  We say that $\rrel$ is a \emph{correct asynchronous composition} if $(S,T) \in
  \rrel$ implies:
  \begin{enumerate}
  \item\label{cac:pp} either $\positive(S)$ or $\positive(T)$;
  \item\label{cac:oi} if $S \lred{\out\MessageType} S'$ and
      $\MessageType\in\set\unit\cup\TagSet$, then $T \lred{\inp\MessageType} T'$
      and $(S',T') \in \rrel$;
  \item\label{cac:io} if $T \lred{\out\MessageType} T'$ and
      $\MessageType\in\set\unit\cup\TagSet$, then $S \lred{\inp\MessageType} S'$
      and $(S',T') \in \rrel$;
  \item\label{cac:ois} if $S \xlred{\out S_1} S_2$, then $T \xlred{\inp T_1}
      T_2$ and $(S_1,T_1) \in \rrel$ and $(S_2,T_2) \in \rrel$;
  \item\label{cac:ios} if $T \xlred{\out T_1} T_2$, then $S \xlred{\inp S_1}
      S_2$ and $(S_1,T_1) \in \rrel$ and $(S_2,T_2) \in \rrel$.
  \end{enumerate}
  We write $\correct{}{}$ for the largest correct asynchronous composition.
\end{definition}

\newcommand{\rcac}[1]{\Cref{def:cac}(\ref{cac:#1})}

In words, \cref{cac:oi,cac:io,cac:ois,cac:ios} state that $(S,T)$ forms a
correct composition if, whenever one of the two types performs a (possibly
early) output transition $\out\MessageType$, the other type is able to respond
with a (possibly late) compatible input transition $\inp\MessageTypeT$ and the
continuations remain correct. In general this is not enough to guarantee
progress because early outputs are only allowed but not mandatory. For example,
the types $\With{\TagA : \Plus{\TagB : \One}}$ and $\With{\TagB : \Plus{\TagA :
\Bot}}$ satisfy \cref{cac:oi,cac:io}, but two processes strictly adhering to
these protocols (\ie without performing early outputs) would starve.
\Cref{cac:pp} requires that at least one of the two types is positive, namely
that at least on one side of the session the outputs are guaranteed to be
immediate. This is enough to ensure progress. For example, we have
$\correct{\With{\TagA : \Plus{\TagB : \One}}}{\Plus{\TagA : \With{\TagB :
\Bot}}}$ as well as $\correct{\Plus{\TagB : \With{\TagA : \One}}}{\Plus{\TagA :
\With{\TagB : \Bot}}}$.

It is easy to see that $\correct{}{}$ is symmetric and that duality implies
correctness:
\begin{restatable}{proposition}{propdualcorrect}
  \label{prop:dual-correct}
  $\correct{\dual{S}}S$ holds for every session type $S$.
\end{restatable}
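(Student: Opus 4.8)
The plan is to exhibit a single relation that witnesses $\correct{\dual S}{S}$ for every $S$ at once, and then appeal to the fact that $\correct{}{}$ is the \emph{largest} correct asynchronous composition. Concretely I take $\rrel \eqdef \set{ (S, \dual S) \mid S \text{ a session type} }$. Since $\dual{\dual S} = S$, the pair $(\dual S, S)$ belongs to $\rrel$, so it suffices to prove that $\rrel$ is a correct asynchronous composition in the sense of \Cref{def:cac}; then $\rrel \subseteq \correct{}{}$ and in particular $\correct{\dual S}{S}$. Clause \rcac{pp} is immediate: every session type is either positive or negative, and $\dual{\cdot}$ swaps the two polarities (it turns $\One, \oplus, \otimes$ into $\Bot, \with, \parr$ and conversely), so exactly one of $S$ and $\dual S$ is positive. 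The same identity $\dual{\dual S} = S$ also shows that $\rrel$ is symmetric, consistently with the symmetry of $\correct{}{}$, although we do not actually need this.

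The remaining clauses \rcac{oi}, \rcac{io}, \rcac{ois}, \rcac{ios} all reduce to a single \emph{duality lemma} for the LTS. Extend $\dual{\cdot}$ to labels by $\dual{(\out\MessageType)} \eqdef \inp{\dual\MessageType}$ and $\dual{(\inp\MessageType)} \eqdef \out{\dual\MessageType}$, where $\dual\unit \eqdef \unit$, $\dual\Tag \eqdef \Tag$, and $\dual\MessageType$ is ordinary session-type duality on session-type messages. The lemma is: $S \lred\Action S'$ if and only if $\dual S \lred{\dual\Action} \dual{S'}$. Granting it, clause \rcac{oi} holds because if $\dual S \lred{\out\MessageType} U$ with $\MessageType \in \set\unit \cup \TagSet$ then $\dual\MessageType = \MessageType$, so the lemma gives $S \lred{\inp\MessageType} \dual U$ with $(U, \dual U) \in \rrel$; clause \rcac{io} is the mirror image; and \rcac{ois}, \rcac{ios} are obtained the same way by reading, say, $\dual S \xlred{\out S_1} S_2$ as a transition carrying the message label $S_1$, so that the lemma supplies $S \lred{\inp{\dual{S_1}}} \dual{S_2}$ and we choose $T_1 \eqdef \dual{S_1}$ and $T_2 \eqdef \dual{S_2}$, both correctly paired with $S_1$ and $S_2$ in $\rrel$.

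The crux --- and the step I expect to be the real obstacle --- is the duality lemma itself, because $\lred{}$ is not an ordinary closed relation: it is defined by a generalized inference system, so $\lred\Action$ is the largest relation included in $\ired\Action$ that satisfies the singly-lined rules of \Cref{tab:lts}. What makes the argument work is that those rules come in dual pairs --- \OneT/\BotT, \TimesT/\ParT, \ATimesT/\AParT, \PlusT/\WithT, \APlusT/\AWithT, and the corules \CoPlusT/\CoWithT --- and that $\dual{\cdot}$ is a structure-preserving involution on the (regular, possibly infinite) session-type trees, so it turns every instance of a rule into an instance of its partner. I would therefore consider the ``dualized'' relation $\rrel' \eqdef \set{ (\dual S, \dual\Action, \dual{S'}) \mid S \lred\Action S' }$ and establish two things: first, $\rrel'$ is closed under the singly-lined rules, obtained by dualizing, instance by instance, the (possibly infinite) singly-lined derivation that justifies $S \lred\Action S'$; second, every triple of $\rrel'$ admits a \emph{finite} derivation witnessing $\dual S \ired{\dual\Action} \dual{S'}$, obtained by induction on the finite derivation of $S \ired\Action S'$ and dualizing each rule instance, where now \CoPlusT turns into \CoWithT and vice versa --- precisely the step that keeps the finite derivation alive across a deep input/output in a looping type. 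By the maximality characterization of $\lred{}$, these two facts give that every triple of $\rrel'$ is a genuine transition, i.e. $S \lred\Action S' \Rightarrow \dual S \lred{\dual\Action} \dual{S'}$, and the converse implication is the same statement instantiated at $\dual S$, using $\dual{\dual S} = S$ and $\dual{\dual\Action} = \Action$. Apart from this bookkeeping the proof is routine; the only point that needs care is checking that the dualization respects the inductive/coinductive stratification of the GIS, which it does because it sends singly-lined rules to singly-lined rules and corules to corules.
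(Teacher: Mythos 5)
Your proof is correct and takes essentially the same route as the paper's: exhibit the set of all dual pairs as a correct asynchronous composition in the sense of \Cref{def:cac}, with clause 1 following from the polarity-flipping of $\dual{\cdot}$ and the remaining clauses reducing to the duality lemma $S \lred{\Action} S'$ iff $\dual{S} \lred{\dual{\Action}} \dual{S'}$, whose GIS-respecting proof (singly-lined rules and corules map to their duals, finite witnesses to finite witnesses) you rightly identify as the only real work. The paper compresses all of this into the single line ``it suffices to show that $\set{(S,\dual{S})}$ is a coinductive asynchronous subtyping'' --- which, read literally against clause 1 of \Cref{def:cas}, fails for negative $S$ and is presumably a slip for ``correct asynchronous composition'' --- so your explicit verification is precisely the argument the paper leaves implicit.
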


Other properties of $\correct{}{}$ are more surprising. For example, if $S =
\With{\TagA : S, \TagB : \Plus{\TagC : \One}}$ and $R = \Plus{\TagA : R}$, we
have that $\correct{S}{R}$ does \emph{not} hold because of the early output
transition $S \lred{\out\TagC}$ to which $R$ is unable to respond. The lack of
compatibility between $S$ and $R$ is due to the fact that (the process behaving
as) $S$ makes a \emph{fairness assumption} on the behavior of the process it is
interacting with. More precisely, a process complying with $S$ assumes that
sooner or later a $\TagB$ message will be received, finally enabling the output
of $\TagC$. In anticipation of this, the process may decide to perform an early
output of $\TagC$, but in doing so it would generate an orphan message when
interacting with another process adhering to $R$, which is not honoring this
fairness assumption.
It is also easy to see that $\correct\Zero{S}$ holds for every $S$ and that
$\correct\Top{S}$ implies $S = \Zero$. These properties of $\Zero$ and $\Top$
follow directly from the \APlusT and \AWithT rules that we have commented in
\Cref{sec:session-types}.

\begin{example}
  \label{ex:server-worker-correct}
  Let $\out\Tag.S$ stand for $\Plus{\Tag : S}$ and $(\out\Tag)^n.S$ stand for
  $\out\Tag\ldots\out\Tag.S$ with $n$ $\out\Tag$ prefixes.
  Consider the session types $S = \Plus{ \TagTask : S, \TagStop : T}$ and $T =
  \With{\TagRes : T, \TagStop : \Bot}$ and $U = \With{\TagTask : \Plus{\TagRes :
  U}, \TagStop : \Plus{\TagStop : \One}}$ which respectively describe the
  behaviors of $\Split$, $\Gather$ and $\Worker$ of \Cref{sec:introduction} on
  the channel $y$.
  %
  % We have
  % \begin{itemize}
  % \item $S \xlred{\out\TagTask} S$ and $(\out\TagRes)^n.U \xlred{\inp\TagTask}
  %   (\out\TagRes)^{n+1}.U$;
  % \item $S \xlred{\out\TagStop} T$ and $(\out\TagRes)^n.U \xlred{\inp\TagStop}
  %   (\out\TagRes)^n.\out\TagStop.\One$;
  % \item $(\out\TagRes)^{n+1}.U \xlred{\out\TagRes} (\out\TagRes)^n.U$ and $S
  %   \xlred{\inp\TagRes} S$ and $T \xlred{\inp\TagRes} T$;
  % \item $\out\TagStop.\One \xlred{\out\TagStop} \One$ and $T
  %   \xlred{\inp\TagStop} \Bot$;
  % \item $\One \xlred{\out\unit} \One$ and $\Bot \xlred{\inp\unit} \Bot$.
  % \end{itemize}
  % therefore
  It is easy to establish that
  \[
    \srel \eqdef \set{
      (S, (\out\TagRes)^n.U) \mid n \in \Nat
    } \cup \set{
      (T, (\out\TagRes)^n.\out\TagStop.\One) \mid n \in \Nat
    } \cup \set{
      (\Bot, \One)
    }
  \]
  is a correct asynchronous composition hence $\correct{S}{U}$.
  \eoe
\end{example}

We can now define fair asynchronous subtyping semantically using \emph{Liskov's
substitution principle}~\cite{LiskovWing94} where the property being preserved
is session correctness (\Cref{def:cac}).

\begin{definition}[fair asynchronous subtyping]
  \label{def:subt}
  We say that $S$ is a \emph{fair asynchronous subtype} (or just \emph{subtype})
  of $T$, notation $S \subt T$, if $\correct{R}{T}$ implies $\correct{R}{S}$ for
  every $R$.
\end{definition}

Paraphrasing, this definition says that a process using a channel $x$ according
to $T$ can be safely replaced by a process using $x$ according to $S$ when $S$
is a subtype of $T$. Indeed, the peer process, which is assumed to use the same
channel $x$ according to some session type $R$ such that $\correct{R}{T}$, will
still interact correctly after the substitution has taken place.

A few subtyping relations are easy to figure out. For example, we have $\Zero
\subt S$ and $S \subt \Top$ for every $S$ because of the properties of $\Zero$
and $\Top$ that we have pointed out above.
It is also easy to see that $\Plus{\TagA : \With{\TagB : S}} \subt \With{\TagB :
\Plus{\TagA : S}}$ holds in an asynchronous setting. After all, the process that
behaves according to $\Plus{\TagA : \With{\TagB : S}}$ is sending a tag $\TagA$
that also the process that behaves according to $\With{\TagB : \Plus{\TagA :
S}}$ \emph{may} anticipate. Note that the inverse relation $\With{\TagB :
\Plus{\TagA : S}} \subt \Plus{\TagA : \With{\TagB : S}}$ does not hold, despite
the fact that $\Plus{\TagA : \With{\TagB : S}}$ and $\With{\TagB : \Plus{\TagA :
S}}$ perform exactly the same transitions, because $\With{\TagA : \Plus{\TagB :
\dual{S}}}$ forms a correct asynchronous composition with $\Plus{\TagA :
\With{\TagB : S}}$ but not with $\With{\TagB : \Plus{\TagA : S}}$ (\Cref{cac:pp}
of \Cref{def:cac} is violated).

The above notion of subtyping is sound ``by definition'', but provides little
information concerning the shape of related session types. To compensate for
this problem, we also give a sound and complete coinductive characterization of
$\subt$, which relates directly to \Cref{def:cac}.

\begin{definition}[coinductive asynchronous subtyping]
  \label{def:cas}
  We say that $\srel$ is a \emph{coinductive asynchronous subtyping} if $(S,T)
  \in \srel$ implies:
  \begin{enumerate}
  \item\label{cas:pol} either $\positive(S)$ or $\negative(T)$;
  \item\label{cas:inp} if $T \lred{\inp\MessageType} T'$ and
      $\MessageType\in\set\unit\cup\TagSet$, then $S \lred{\inp\MessageType} S'$
      and $(S',T') \in \srel$;
  \item\label{cas:out} if $S \lred{\out\MessageType} S'$ and
      $\MessageType\in\set\unit\cup\TagSet$, then $T \lred{\out\MessageType} T'$
      and $(S',T') \in \srel$;
  \item\label{cas:inps} if $T \xlred{\inp T_1} T_2$, then $S \xlred{\inp S_1}
      S_2$ and $(S_1,T_1) \in \srel$ and $(S_2,T_2) \in \srel$;
  \item\label{cas:outs} if $S \xlred{\out S_1} S_2$, then $T \xlred{\out T_1}
      T_2$ and $(S_1,T_1) \in \srel$ and $(S_2,T_2) \in \srel$.
  \end{enumerate}
\end{definition}

\newcommand{\rcas}[1]{\Cref{def:cas}(\ref{cas:#1})}

\cref{cas:inp,cas:out,cas:inps,cas:outs} of \Cref{def:cas} specify the expected
requirements for a session subtyping relation: every input transition of the
supertype $T$ must be matched by an input transition of the subtype $S$ and the
corresponding continuations should still be related by subtyping; dually, every
output transition of the subtype $S$ must be matched by an output transition of
the supertype $T$ and the corresponding continuations should still be related by
subtyping.
Interestingly, these items are essentially the same found in analogous
characterizations of \emph{synchronous subtyping for session
types}~\cite{GayHole05}, modulo the different orientation of $\subt$ due to our
viewpoint based on the substitution of processes rather than on the substitution
of channels.\footnote{The interested reader may refer to Gay~\cite{Gay16} for a
comparison of the two viewpoints.}
However, the clauses of \cref{def:cas} are not mutually exclusive, because the
same session type may perform both input and output transitions. Also, session
types related by subtyping need not start with the same type constructor. In
this respect, \cref{cas:pol} makes sure that the smaller session type can only
anticipate (and not postpone) outputs, as argued above.

\Cref{def:cas} is a sound and complete characterization of $\subt$.

\begin{restatable}{theorem}{thmsubt}
    \label{thm:subt}
    $\subt$ is the largest coinductive asynchronous subtyping.
\end{restatable}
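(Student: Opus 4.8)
The plan is to prove the statement as two inclusions. Writing $\subt_c$ for the largest coinductive asynchronous subtyping (which exists, since the union of all coinductive asynchronous subtypings is again one), I would show (i) that every coinductive asynchronous subtyping is contained in $\subt$, so $\subt_c\subseteq\subt$, and (ii) that $\subt$ is itself a coinductive asynchronous subtyping, so $\subt\subseteq\subt_c$; together these give $\subt=\subt_c$. For (i), let $\srel$ be a coinductive asynchronous subtyping and $(S,T)\in\srel$; to prove $(S,T)\in\subt$, fix $R$ with $\correct{R}{T}$ and show $\correct{R}{S}$ using the relational composition
\[
  \rrel \eqdef \set{ (R',S') \mid \exists T' :\ \correct{R'}{T'} \text{ and } (S',T') \in \srel },
\]
which contains $(R,S)$ by taking $T'=T$. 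It remains to check that $\rrel$ is a correct asynchronous composition, which is a diagram chase pairing, clause by clause, the five requirements of \Cref{def:cac} for $\correct{}{}$ (applied to $(R',T')$) with the five requirements of \Cref{def:cas} for $\srel$ (applied to $(S',T')$): \rcac{pp} for $\rrel$ follows from \rcac{pp} and \rcas{pol} (if $T'$ is positive then, not being negative, $S'$ is positive); when $R'$ performs an output we respond via \rcac{oi}/\rcac{ois} on $(R',T')$ followed by \rcas{inp}/\rcas{inps} on $(S',T')$; when $S'$ performs an output we use \rcas{out}/\rcas{outs} on $(S',T')$ followed by \rcac{io}/\rcac{ios} on $(R',T')$; in every case the continuations land back in $\rrel$ by construction. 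Maximality of $\correct{}{}$ then yields $\correct{R}{S}$. This direction is routine.

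Inclusion (ii) is the substantial one: I must show that $\subt$ satisfies the five clauses of \Cref{def:cas}. I would first record two auxiliary facts about the LTS, each proved by a short induction on the inductive fragment of the GIS combined with coinduction: (a) the LTS respects duality, $S \lred{\inp\MessageType} S'$ iff $\dual S \lred{\out\MessageType} \dual{S'}$ and symmetrically, obtained by inspecting the rules of \Cref{tab:lts} pairwise; and (b) the LTS is deterministic, i.e.\ for a fixed label there is at most one continuation, which follows from the pairwise disjointness of tags in branchings. I would also note that, by \Cref{prop:dual-correct}, $\correct{\dual T}{T}$ holds, so $(S,T)\in\subt$ immediately gives $\correct{\dual T}{S}$, hence $\correct{S}{\dual T}$ by symmetry. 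Clause \rcas{pol} is then easy: if both $\negative(S)$ and $\positive(T)$, then $\dual T$ is negative, and $\correct{S}{\dual T}$ would be a correct asynchronous composition of two negative types, contradicting \rcac{pp}. For the existence halves of \rcas{inp}, \rcas{out}, \rcas{inps}, \rcas{outs}, I would feed the given transition through $\correct{S}{\dual T}$: for example, from $T \lred{\inp\MessageType} T'$ fact (a) gives $\dual T \lred{\out\MessageType} \dual{T'}$, and \rcac{io} applied to $\correct{S}{\dual T}$ yields $S \lred{\inp\MessageType} S'$ together with $\correct{S'}{\dual{T'}}$; dually, an output of $S$ is reflected, through $\correct{S}{\dual T}$ and fact (a), into an output of $T$.

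The main obstacle is the continuation-closure: showing that the $S'$ and $T'$ produced above remain related by $\subt$. For \rcas{inp} with $\MessageType$ a tag $\Tag$ (the case $\MessageType=\unit$ is degenerate, as then $T'=T$ and $S'=S$), we are given $Q$ with $\correct{Q}{T'}$ and must establish $\correct{Q}{S'}$. The idea is to prefix $Q$ with an output of $\Tag$, i.e.\ to consider the peer $\Plus{\Tag:Q}$, and to prove $\correct{\Plus{\Tag:Q}}{T}$; for this one checks that the relation obtained from $\correct{}{}$ by adjoining all pairs $(\Plus{\Tag:Q'},T'')$ with $T'' \lred{\inp\Tag} T'''$ and $\correct{Q'}{T'''}$ is a correct asynchronous composition. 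The delicate point is that $T$ may perform early outputs that $\Plus{\Tag:Q}$ has to absorb; here \Cref{prop:diamond} is exactly what is needed, since it shows that any early output of $T$ survives the input of $\Tag$ and is therefore matched by $T'$, hence by $Q$ via $\correct{Q}{T'}$. Once $\correct{\Plus{\Tag:Q}}{T}$ is in hand, $S \subt T$ gives $\correct{\Plus{\Tag:Q}}{S}$; since $\Plus{\Tag:Q} \lred{\out\Tag} Q$, clause \rcac{oi} forces $S \lred{\inp\Tag} S''$ with $\correct{Q}{S''}$, and by determinism (fact (b)) $S''=S'$, so $\correct{Q}{S'}$, as required. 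The channel-passing clause \rcas{inps} goes the same way with the obvious channel-output prefix, and \rcas{out}/\rcas{outs} are handled by the symmetric argument, prefixing with a $\With$-peer that must in addition carry a branch for every tag (or channel) that $T$ can output early; it is again \Cref{prop:diamond}, together with \Cref{thm:fas} to tame the infinitary cases of early output, that makes these prefixed peers genuinely correct with $T$. I expect this book-keeping of early outputs in the prefixed peers to be where the argument is most delicate.
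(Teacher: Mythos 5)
Your proposal is correct and follows essentially the same route as the paper: part (i) via the relational composition $\rrel = \{(R,S) \mid \exists T.\ (S,T)\in\srel \wedge \correct{R}{T}\}$, and part (ii) by showing $\subt$ is itself a coinductive asynchronous subtyping using constructed witness peers ($\Plus{\Tag:R'}$ for inputs, a $\With$-peer with a branch for every output of $T$ for outputs) and concluding $S'\subt T'$ from the arbitrariness of the continuation peer. Your additional care in verifying that the prefixed peers genuinely form correct compositions (via the diamond property) fills in a step the paper's proof asserts without detail.
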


Using \Cref{def:cas,thm:subt} we can prove some significant properties of
$\subt$:
\begin{itemize}
  \item (input contravariance) $\With{\Tag_i : S_i}_{i\in I} \subt \With{\Tag_i :
    S_i}_{i\in J}$ if $J \subseteq I$ by \Cref{cas:inp};
  \item (output covariance) $\Plus{\Tag_i : S_i}_{i\in I} \subt \Plus{\Tag_i :
    S_i}_{i\in J}$ if $I \subseteq J$ by \Cref{cas:out};
  \item (output anticipation) $\Plus{\TagA_j : \With{\TagB_i : S_{ij}}_{i\in
    I}}_{j\in J} \subt \With{\TagB_i : \Plus{\TagA_j : S_{ij}}_{j\in J}}_{i\in
    I}$. Note that the inverse relation does not hold, despite these two session
    types have exactly the same transitions, because of \Cref{cas:pol}.
\end{itemize}

There are exceptions to output covariance and input contravariance when one of
the two types specifies a non-terminating protocol. The next example illustrates
one of such cases.

\begin{example}
  \label{ex:failed-variance}
  Consider $S = \With{\TagA : S, \TagB : \Plus{\TagC : \One}}$ and $T =
  \With{\TagA : T}$. Despite $S$ is a subtype of $T$ for other session
  subtypings
  \cite{GayHole05,MostrousYoshida15,ChenDezaniScalasYoshida17,BravettiLangeZavattaro24},
  we have $S \not\subt T$ because $S \lred{\out\TagC}$ and $T
  \nlred{\out\TagC}$. To see the reason why admitting this relation could cause
  a problem, consider a process that complies with the protocol $R = \Plus{\TagA
  : R}$. Such a process would output infinitely many $\TagA$'s and would not
  expect to input anything. Still, if we consider $S' = \With{\TagA : S', \TagB
  : \One}$ we have that $\Plus{\TagC : S'} \subt S$ holds. That is, a process
  complying with $\Plus{\TagC : S'}$ performs immediately the early output in
  $S$. By transitivity of $\subt$ (whose validity is implied by
  \Cref{thm:subt}), we would also have $\Plus{\TagC : S'} \subt T$. Now the
  problem is clear: $\correct{R}{T}$ holds, but a process complying with $R$
  would not be able to handle the incoming $\TagC$ message if we composed it
  with a process complying with $\Plus{\TagC : S'}$.
  \eoe
\end{example}

The cases in which co/contra variance does not hold have no impact on the
typeability of \Calculus processes: since our type system ensures the fair
termination of well-typed processes, protocols like $T$ in
\cref{ex:failed-variance} are not inhabited. We will see in
\cref{sec:subtyping-inclusions} that $\subt$ includes other subtyping relations
supporting full co/contra variance for the family of fairly terminating session
types, those that always allow the protocol to end.

\begin{example}
  \label{ex:satellite}
  We borrow a scenario from Bravetti, Lange and
  Zavattaro~\cite{BravettiLangeZavattaro24} to showcase an interesting example
  of fair asynchronous subtyping. Imagine a system made of a ground station and
  a satellite such that, at each flyby, the satellite sends data from the
  previous orbit and receives commands to execute in the next one.
  In principle, the ground station should follow the protocol $U =
  \With{\TagData : U, \TagStop : V}$ where $V = \Plus{\TagCommand : V, \TagStop
  : \Bot}$.
  However, since the flyby window may be short, it makes sense to implement the
  ground station so that it communicates with the satellite in full duplex by
  anticipating the output of the commands. In this case, the ground station
  follows the protocol $S = \Plus{\TagCommand : S, \TagStop : T}$ where $T =
  \With{\TagData : T, \TagStop : \Bot}$.
  Is this implementation correct? We can answer in the affirmative by proving $S
  \subt U$ with the diagram below, which represents a coinductive asynchronous
  subtyping containing the pair $(S,U)$. In the diagram we also use the types
  $S' = \Plus{\TagTask : S', \TagStop : \Bot}$ and $U' = \With{\TagRes : U',
  \TagStop : \Bot}$.
  \begin{center}
    \begin{tikzpicture}[xscale=4,thick,auto]
      \node (Q0) at (0,0) {$(S,U)$};
      \node (Q1) at (1,1) {$(T,U')$};
      \node (Q2) at (1,-1) {$(S',V)$};
      \node (Q3) at (2,0) {$(\Bot,\Bot)$};
      \draw[->,loop above] (Q0) to node {$\out\TagCommand$,$\inp\TagData$} (Q0);
      \draw[->] (Q0) to node {$\out\TagStop$} (Q1);
      \draw[->] (Q0) to node[swap] {$\inp\TagStop$} (Q2);
      \draw[->,loop above] (Q1) to node {$\inp\TagData$} (Q1);
      \draw[->,loop above] (Q2) to node {$\out\TagCommand$} (Q2);
      \draw[->] (Q1) to node {$\inp\TagStop$} (Q3);
      \draw[->] (Q2) to node[swap] {$\out\TagStop$} (Q3);
      \draw[->,loop above] (Q3) to node {$\inp\unit$} (Q3);
    \end{tikzpicture}
  \end{center}

  Note that $S$ allows for the anticipation of an unbounded number of outputs
  before an unbounded number of inputs.
%  \LP{Cercherei di tenere gli esempi
%  ``monotematici''. L'esempio aggiuntivo di Gigio è commentato nel testo, ne ho
%  discusso uno più elaborato in dettaglio nella
%  \cref{sec:subtyping-inclusions}.}
  %
  % Differently from Bravetti, Lange and Zavattaro~\cite{BravettiLangeZavattaro24}, 
  % our subytping allows also for other forms of anticipations. For instance, consider
  % a ground station which alternates input with output actions following the protocol
  % $R = \With{\TagData : \Plus{ \TagCommand : R}, \TagStop : \Plus{\TagStop : \Bot}}$.
  % We have that $S \subt R$ **** 
  % \GZ{ATTENZIONE: guardare il testo che è incompleto e contiene commenti}
  % LUCA, TI PIACE COME ESEMPIO? NEL CASO, RIESCI TU
  % AD AGGIUNGERE UNA GIUSTIFICAZIONE DEL PERCHE' SONO IN SUBTYPING COMMENTANDO CHE
  % BRAVETTI NON LO PERMETTONO PER LA MANCANZA DI COVARIANZA? *****
  %
%  \GZ{Dopo questo esempio potrebbe starci
%  l'esempio nuovo del ServerStream sottotipo del ServerBatch (eventualmente
%  facendo riferimento alla killer application?)}\LP{Ci avevo pensato ma di fatto
%  si tratta delle stesse coppie già mostrate
%  nell'\cref{ex:server-worker-correct} con la seconda componente dualizzata e
%  quindi mi sono detto che forse non è così interessante. Boh!?}
%  \GZ{Continuo a pensare che l'esempio di covarianza serva già qui
%  per far capire la novità rispetto a Bravetti et al.}
  %
  \eoe
\end{example}

We now list a few properties of $\subt$. First of all, we establish that $\subt$
is closed by duality.

\begin{proposition}
  \label{prop:dual-subtype}
  If $S \subt T$, then $\dual{T} \subt \dual{S}$.
\end{proposition}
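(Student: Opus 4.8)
The plan is to prove the statement directly from the semantic definition of $\subt$ (\Cref{def:subt}), leveraging the symmetry of $\correct{}{}$ and its closure under duality (\Cref{prop:dual-correct}). Concretely, assume $S \subt T$; by definition this means that for every $R$, $\correct{R}{T}$ implies $\correct{R}{S}$. I want to show $\dual{T} \subt \dual{S}$, i.e.\ that for every $R$, $\correct{R}{\dual{S}}$ implies $\correct{R}{\dual{T}}$.

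First I would observe that it suffices to relate $\correct{R}{\dual{S}}$ with a correctness statement about $S$ itself. The key auxiliary fact is that $\correct{R}{\dual{S}}$ holds if and only if $\correct{\dual{R}}{S}$ holds. One direction of this equivalence is essentially an application of \Cref{prop:dual-correct} combined with the dualization of the LTS: I expect that, given a correct asynchronous composition $\rrel$ witnessing $\correct{R}{\dual{S}}$, the relation $\dual\rrel \eqdef \set{(\dual{R'},\dual{S'}) \mid (R', S') \in \rrel}$ is a correct asynchronous composition witnessing $\correct{\dual{R}}{S}$. To see this I would need to check that each clause of \Cref{def:cac} is preserved under dualization --- using that $\dual{(\dual{S})} = S$, that $\dual{\Plus{\cdots}} = \With{\dual\cdots}$ and vice versa (so $\positive(S) \iff \negative(\dual{S})$), and, crucially, that the LTS is symmetric under dualization of both the type and the label, i.e.\ $S \lred{\out\MessageType} S'$ iff $\dual{S} \lred{\inp{\dual\MessageType}} \dual{S'}$ (with $\dual\unit = \unit$ and $\dual\Tag = \Tag$). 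This last fact about the LTS is something I would either cite as an already-established property of \Cref{tab:lts} or prove by a short induction/coinduction on the GIS derivation, matching \OneT with \BotT, \PlusT with \WithT, \APlusT with \AWithT, and \CoPlusT with \CoWithT. With this LTS-duality in hand, clauses \rcac{pp}, \rcac{oi}/\rcac{io}, \rcac{ois}/\rcac{ios} of \Cref{def:cac} for $\dual\rrel$ follow mechanically from the corresponding clauses for $\rrel$ (note that $\dual\rrel$ is symmetric since $\rrel$ is, and dualization swaps the roles of output and input).

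With the equivalence $\correct{R}{\dual{S}} \iff \correct{\dual{R}}{S}$ established, the rest is a short chain of implications. Suppose $\correct{R}{\dual{S}}$. Then $\correct{\dual{R}}{S}$, hence by symmetry $\correct{S}{\dual{R}}$. Now apply the hypothesis $S \subt T$ in the form given by \Cref{def:subt}: since $\correct{S}{\dual{R}}$ --- reading $\dual{R}$ as the ``$R$'' of the definition --- wait, I must be careful about orientation. \Cref{def:subt} says $S \subt T$ iff for all $R'$, $\correct{R'}{T} \Rightarrow \correct{R'}{S}$. Taking $R' = \dual{R}$: from $\correct{\dual{R}}{S}$ I cannot immediately conclude; instead I should use the hypothesis contrapositively together with symmetry. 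The cleaner route: from $\correct{\dual{R}}{S}$ and symmetry, and the fact that $S\subt T$ means $\correct{R'}{S}$ is implied by $\correct{R'}{T}$ --- this is the wrong direction. So I should instead argue: we want $\correct{R}{\dual{T}}$, equivalently (by the same LTS-duality lemma) $\correct{\dual{R}}{T}$. We are given $\correct{R}{\dual{S}}$, equivalently $\correct{\dual{R}}{S}$. So it suffices to show: $\correct{\dual{R}}{S} \Rightarrow \correct{\dual{R}}{T}$. This is exactly an instance of $S \subt T$ applied with ``$R$'' $:= \dual{R}$ --- no wait, $S\subt T$ gives $\correct{R'}{T}\Rightarrow\correct{R'}{S}$, which is $T$-to-$S$, not $S$-to-$T$.

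The resolution is that I have the orientation of $\subt$ backwards in the above and the actual proof must instead go through $\dual T\subt\dual S$ meaning $\correct{R'}{\dual S}\Rightarrow\correct{R'}{\dual T}$; rewriting both sides via the duality lemma, this is $\correct{\dual{R'}}{S}\Rightarrow\correct{\dual{R'}}{T}$, and since $\dual{R'}$ ranges over all session types as $R'$ does, this is equivalent to: for all $R''$, $\correct{R''}{S}\Rightarrow\correct{R''}{T}$ --- which is precisely $T\subt S$, \emph{not} $S\subt T$. Hence the duality lemma alone does not close the gap; I will additionally need the fact that $\subt$ is \emph{symmetric-free}, i.e.\ that the semantic subtyping as defined is actually self-dual in the appropriate sense, OR --- more likely the intended route --- I should prove the statement by directly exhibiting, from a coinductive asynchronous subtyping $\srel$ for $(S,T)$ (available by \Cref{thm:subt}), the relation $\dual\srel \eqdef \set{(\dual{T'},\dual{S'}) \mid (S',T')\in\srel}$ and checking it is a coinductive asynchronous subtyping containing $(\dual T,\dual S)$; then \Cref{thm:subt} again gives $\dual T\subt\dual S$. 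The main obstacle, and the step I would spend the most care on, is verifying that each clause of \Cref{def:cas} is preserved under this dualization --- in particular that \rcas{pol} for $\dual\srel$ ($\positive(\dual{T'})$ or $\negative(\dual{S'})$) follows from \rcas{pol} for $\srel$ ($\positive(S')$ or $\negative(T')$), using $\positive(\cdot)\iff\negative(\dual\cdot)$, and that the input/output clauses \rcas{inp}--\rcas{outs} transpose correctly under the LTS-duality lemma (output transitions of $S'$ become input transitions of $\dual{S'}$, matching the swap of the roles of the two components). This coinductive route avoids the orientation confusion entirely, since \Cref{def:cas} is visibly ``antisymmetric'' in a way that makes the dualized relation land on $(\dual T, \dual S)$ rather than $(\dual S, \dual T)$.
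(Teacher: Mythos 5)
Your final route is exactly the paper's proof: by \Cref{thm:subt} it suffices to check that $\dual\srel \eqdef \set{(\dual{T'},\dual{S'}) \mid (S',T')\in\srel}$ is a coinductive asynchronous subtyping, which follows from \Cref{def:cas} together with the input/output symmetry of the LTS under duality, so your settled argument is correct and coincides with the paper's. One caution about the branch you abandoned: the auxiliary claim $\correct{R}{\dual{S}} \iff \correct{\dual{R}}{S}$ amounts to dualizing \emph{both} components of a correct composition, and this is false in general because clause (1) of \Cref{def:cac} (at least one side positive) is not preserved under dualizing both sides -- so that route fails for a reason beyond the orientation problem you identified.
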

\begin{proof}
  By \Cref{thm:subt} it suffices to show that if $\srel$ is a coinductive
  asynchronous subtyping, then so is $\dual\srel \eqdef \set{ (\dual{T},
  \dual{S}) \mid (S,T) \in \srel}$. This follows immediately from
  \Cref{def:cas}.
\end{proof}

Then, we show how to characterize $\correct{}{}$ solely in terms of
$\subt$.

\begin{restatable}{theorem}{thmcorrectsubt}
    \label{thm:correct-subt}
    $\correct{S}{T}$ if and only if $S \subt \dual{T}$
\end{restatable}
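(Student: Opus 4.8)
The plan is to establish that a pair $(S,T)$ is a correct asynchronous composition exactly when the pair $(S,\dual{T})$, obtained by dualizing the right-hand component, meets the requirements of coinductive asynchronous subtyping, and to connect the two definitions through the fact that the LTS of \Cref{tab:lts} is closed under duality. Concretely, I would use the routine \emph{LTS duality lemma}: $S \lred\Action T$ holds iff $\dual{S} \lred{\co\Action} \dual{T}$ holds, where $\co\Action$ swaps input and output and replaces a channel-carrying message $S_0$ with $\dual{S_0}$ (leaving $\unit$ and tag messages unchanged). This lemma is proved by the usual combination of induction on finite derivations and coinduction, using that the rules of \Cref{tab:lts} come in dual pairs ($\PlusT$ with $\WithT$, $\APlusT$ with $\AWithT$, $\CoPlusT$ with $\CoWithT$, and so on); it is essentially the same observation that makes \Cref{prop:dual-subtype} immediate.

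The core step is the claim that if $\rrel$ is a correct asynchronous composition, then $\srel \eqdef \set{ (S,\dual{T}) \mid (S,T)\in\rrel }$ is a coinductive asynchronous subtyping. I would check the five clauses of \Cref{def:cas} for an arbitrary pair $(S,\dual{T})$ with $(S,T)\in\rrel$ against the five clauses of \Cref{def:cac} for $(S,T)$, using the LTS duality lemma each time to turn an input (resp.\ output) transition of $\dual{T}$ into an output (resp.\ input) transition of $T$. Clause \rcas{pol} follows from \rcac{pp}, since $\positive(T)$ iff $\negative(\dual{T})$. Clauses \rcas{inp} and \rcas{inps} follow from \rcac{io} and \rcac{ios}: an input transition of $\dual{T}$ is an output transition of $T$, which $S$ answers with a matching input, and—because duality is involutive—the $\rrel$-related continuations handed back are precisely the $\srel$-pairs demanded by \Cref{def:cas}. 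Symmetrically, clauses \rcas{out} and \rcas{outs} follow from \rcac{oi} and \rcac{ois}: an output transition of $S$ is answered by an input transition of $T$, which is then re-read as an output transition of $\dual{T}$. The converse implication—dualizing the right component of a coinductive asynchronous subtyping yields a correct asynchronous composition—holds by the mirror argument, although it is not needed for what follows.

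Granting the core step, the two directions of the theorem are short. For the forward direction, assume $\correct{S}{T}$; since $\correct{}{}$ is itself a correct asynchronous composition, the core step shows that $\set{ (S_1,\dual{S_2}) \mid \correct{S_1}{S_2} }$ is a coinductive asynchronous subtyping, and it contains $(S,\dual{T})$, so $S \subt \dual{T}$ by \Cref{thm:subt}. For the converse, assume $S \subt \dual{T}$; by \Cref{prop:dual-correct} applied to $\dual{T}$ we have $\correct{T}{\dual{T}}$, so instantiating \Cref{def:subt} (the definition of $S \subt \dual{T}$) with $R \eqdef T$ gives $\correct{T}{S}$, and symmetry of $\correct{}{}$ yields $\correct{S}{T}$.

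The main obstacle I anticipate is the bookkeeping inside the core step, and in particular getting the LTS duality lemma exactly right for the channel-carrying transitions used in \rcac{ois}/\rcac{ios}—there the transmitted message is itself a session type, so both the message and the residual get dualized, and one must track the involutivity rewritings carefully. One also has to make sure the duality lemma respects the generalized inference system of \Cref{tab:lts} (a finite derivation in the inductive part must map to a finite derivation of the dual judgment, and corules to corules), so that the early outputs and late inputs of $S$ genuinely correspond to those of $\dual{S}$; with that in hand, the remaining work is purely mechanical clause-chasing.
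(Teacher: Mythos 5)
Your proposal is correct and follows essentially the same route as the paper: the forward direction shows that $\set{(S,\dual{T}) \mid \correct{S}{T}}$ is a coinductive asynchronous subtyping by checking the clauses of \Cref{def:cas} against those of \Cref{def:cac} via LTS duality and concludes by \Cref{thm:subt}, while the backward direction instantiates \Cref{def:subt} at $R = T$ using \Cref{prop:dual-correct} and the symmetry of $\correct{}{}$. The only differences are presentational: you factor out an explicit LTS-duality lemma that the paper uses silently, and you phrase the core step for an arbitrary correct composition rather than just the largest one.
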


Finally, we observe that similarly to the other asynchronous session type
theories correct composition is undecidable. The source of undecidability
follows from the possibility to use communicating buffers to model unbounded
memories like tapes of Turing Machines \cite{LangeY17} or queues of Queue
Machines \cite{BravettiCZ17}.  

\begin{restatable}{theorem}{undecidability}
    \label{thm:undecidable}
    Given two types $S$ and $T$, the problem of checking $\correct{S}{T}$ is
    undecidable.
\end{restatable}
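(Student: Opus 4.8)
The plan is to reduce the halting problem (or, equivalently, the termination problem for an appropriate machine model) to the problem of checking correct asynchronous composition. By \Cref{thm:correct-subt} the problem $\correct{S}{T}$ is equivalent to $S \subt \dual{T}$, so it suffices to reduce an undecidable problem to $\subt$; I will do the reduction directly for $\correct{}{}$ since \Cref{def:cac} is the more concrete handle. The key observation enabling the reduction is that the early-output / late-input transitions of our LTS let one session type accumulate an unbounded sequence of pending messages while its peer consumes them in FIFO order — exactly the behaviour of a queue. This is the same phenomenon exploited in the undecidability proofs of asynchronous subtyping by Bravetti, Carbone and Zavattaro~\cite{BravettiCZ17} and by Lange and Yoshida~\cite{LangeY17}, and the cleanest route is to follow the Queue Machine encoding of~\cite{BravettiCZ17}.

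Concretely, I would first fix a Queue Machine $M$ (a deterministic machine with a single FIFO queue, which is Turing powerful) and encode its configurations as a family of session types. One session type, say $S_M$, plays the role of the machine: at each step it reads the tag at the head of the queue by performing an input transition, and according to its finite control it performs a bounded sequence of output transitions that push new tags onto the tail of the queue; crucially it uses the \rulename{may-*} / \rulename{fair-*} rules so that, although syntactically $S_M$ is a sequence of inputs interleaved with outputs, it is able to anticipate the outputs that represent enqueuing. The peer session type $T_M$ is essentially the dual ``buffer'' type: it forwards to $S_M$ the tags in the order they were produced, i.e.\ it behaves as an unbounded FIFO channel. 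I would then arrange the encoding so that the run of $M$ halts if and only if the pair $(S_M, T_M)$ eventually reaches a state from which both sides can terminate (reach $\Bot$/$\One$), whereas a non-halting run of $M$ corresponds to an infinite correct interaction in which the queue never empties. The soundness condition \rcac{pp} (one side must be positive) is satisfied throughout because at every reachable configuration the buffer side $T_M$ is positive (it always has an output ready, namely the head of the queue it forwards), or symmetrically $S_M$ is the positive one when it is pushing. Designing the encoding so that exactly one of the two is positive at each step, while still leaving room for the early-output / late-input derivations required by \cref{cac:ois,cac:ios}, is the part that needs care.

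The main obstacle, and the place where most of the technical work lies, is showing the correspondence is tight in \emph{both} directions: (i) if $M$ does not halt then the relation $\rrel_M$ collecting all pairs $(S', T')$ reachable from $(S_M, T_M)$ along the simulated run is genuinely a correct asynchronous composition — this requires checking \cref{cac:pp,cac:oi,cac:io,cac:ois,cac:ios} at every such pair, and in particular verifying that the long early-output transitions demanded by \cref{cac:ois} are derivable in the GIS of \Cref{tab:lts}, which is exactly where the corules \CoPlusT/\CoWithT matter (they guarantee the needed finite witness derivations); and (ii) if $M$ halts, showing $\correct{S_M}{T_M}$ fails — here the subtlety is that our notion of correctness carries a \emph{fairness assumption}, so I must make sure the encoding cannot ``cheat'' by following an unfair branch that avoids the halting configuration; the session types must be built so that every branch of the control eventually forces the halting step, ruling out the kind of escape hatch seen with $\Zero$ and $\Top$. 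Once the encoding is pinned down, reachability of a terminating pair is decidable relative to an oracle for halting, and undecidability of $\correct{}{}$ follows; via \Cref{thm:correct-subt} the same reduction shows $\subt$ is undecidable as well, which we may record as an immediate corollary.
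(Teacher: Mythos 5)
Your proposal follows essentially the same route as the paper: a reduction from acceptance of Queue Machines (as in~\cite{BravettiCZ17}), encoding the queue content as one session type (a finite output prefix followed by a recursive input-then-echo type) and the transition function as the other, so that $\correct{S}{T}$ fails exactly when the machine accepts --- once the queue empties both types become negative, violating \rcac{pp}. The technical work you identify as the main obstacle (checking every clause of \Cref{def:cac} along the simulated run and verifying that the required late-input transitions through finite output prefixes are derivable in the GIS) is precisely what the paper's two lemmas carry out, the only cosmetic difference being that the paper's deterministic encoding makes your fairness worries moot.
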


As a direct consequence of \cref{thm:undecidable,thm:correct-subt}, we have that
checking $S \subt T$ is also an undecidable problem. Despite these negative
results, the coinductive characterizations of correctness (\Cref{def:cac}) and
subtyping (\Cref{def:cas}) are useful to prove that these relations hold in
specific cases, as done in \Cref{ex:server-worker-correct,ex:satellite}.
\section{Type System}
\label{sec:type-system}

In this section we describe the type system for \Calculus ensuring that
well-typed processes weakly terminate, besides being free from communication
errors and deadlocks. \Cref{thm:proof-principle} then allows us to conclude that
the very same type system also ensures fair termination when we make the
fairness assumption stated in \Cref{def:fair-run}.
The type system is based on the proof rules of \MALL but also includes typing
rules for the non-logical process forms, namely termination, process invocation
and non-deterministic choice.

\subsection{Measuring Processes and Type Annotations}
\label{sec:measure}

It is a known fact that infinitary proof systems for linear logic (\eg
\muMALL~\cite{BaeldeDoumaneSaurin16,Doumane17}) require validity conditions to
enjoy the cut elimination property. As a consequence, the infinitary type system
for \Calculus requires validity conditions to ensure the (weak) termination
property. We implement these validity conditions using the technique of Dagnino
and Padovani~\cite{DagninoPadovani24} and decorate types and typing judgments
with quantitative information that estimates the amount of effort required to
terminate a process. It is natural to measure such effort in terms of
\emph{number of reductions} of that process (\Cref{tab:semantics}). Since most
reductions arise from session interactions and each session interaction involves
a process outputting a message on a channel (with a \emph{positive} type) and a
process inputting a message from the same channel (with a \emph{negative} type)
we count the number of process forms representing outputs to establish the
number of reductions that are necessary to normalize a process. In the presence
of branching processes (those whose behavior depends on a tag received from a
channel), we can tentatively compute an upper bound for terminating \emph{each}
branch.

Let us use the definitions in \Cref{sec:introduction} to walk through this
approach on increasingly complex examples. To start, consider the process
$\Gather$, which we repeat here for convenience:
\begin{align*}
    \Let\Gather{x,y} & = \Case\y{\TagRes : \Call\Gather{x,y}, \TagStop : \Wait\y.\Select\x\TagResp.\Close\x }
\end{align*}

Now suppose that $n$ is the measure associated with $\Gather$ and note that
$\Gather$ is a branching and recursive process. The $\TagRes$ branch does not
contain any output actions, whereas the $\TagStop$ branch performs two outputs
on $x$. Therefore, $n$ should satify the relations $n \geq n$ (for the $\TagRes$
branch) and $n \geq 2$ (for the $\TagStop$ branch) whose least solution is $n =
2$. That is, $\Gather$ weakly terminates by performing at most two outputs. Note
that the actual number of interactions performed by $\Gather$ depends on the
number of received $\TagRes$ messages and may be larger than $2$. However, these
received messages are accounted for in the measure of the sender, while the
measure $2$ we give to $\Gather$ only accounts for the messages sent by
$\Gather$.

If we now consider the process
\begin{align*}
    \Let\Split{x,y} = \Select\y\TagTask.\Call\Split{x,y} \choice \Select\y\TagStop.\Call\Gather{x,y}
\end{align*}
we have to measure a non-deterministic choice.
Since non-deterministic choices are performed autonomously by a process and we
are interested in proving that processes \emph{weakly} terminate, we can measure
a non-deterministic choice by considering the branch with the least measure and
the additional reduction due to \rulename{r-choice}. So, in this case, the
measure of $\Split$ -- say $m$ -- must satisfy the equation $m \geq 1 +
\min\set{1 + m, 3}$ where $1 + m$ is the measure of the left branch and $3$ is
the (least) measure of the right branch, having already established that
$\Gather$ can be given measure $2$. The least least solution of this constraint
is $m = 4$.

A more challenging process to measure is
\[
    \Let\Worker\y = \Case\y{
        \TagTask : \Select\y\TagRes.\Call\Worker\y,
        \TagStop : \Select\y\TagStop.\Close\y
    }
\]
in which each $\TagTask$ input is immediately followed by a $\TagRes$ output.
If we try to measure $\Worker$ following the same reasoning applied to
$\Gather$, we end up looking for some $n$ satisfying the system of inequations
$n \geq 1 + n$ (for the $\TagTask$ branch) and $n \geq 2$ (for the $\TagStop$
branch). Since no $n$ satisfies these inequations, we have to refine our
measuring strategy or else $\Worker$ would end up being ill typed.
This is precisely what happens in some type systems ensuring the fair
termination of sessions~\cite{CicconePadovani22A,CicconeDagninoPadovani22} which
is unfortunate because $\Worker$ is not ill behaved after all: the fairness
assumption we are making ensures that the number of tasks will be finite, albeit
unbounded. Likewise, the number of results will be finite but unbounded as well.
The problem is that we are unable to find a \emph{single} measure for $\Worker$
that accounts for all possibilities.

We refine our measuring technique allowing the measure of $\Worker$ to
\emph{depend} on the tags it receives. We realize this dependency by annotating
tags with measures that are charged to senders and discharged from receivers.
To see this mechanism at work in the case of $\Split$ and $\Worker$, consider
the following annotated session types:
\begin{align}
    \label{eq:split-type}
    S & = \Plus{\ann[1]\TagTask : S, \ann[0]\TagStop : T} \\
    \label{eq:gather-type}
    T & = \With{\ann[0]\TagRes : T, \ann[0]\TagStop : \Bot} \\
    \label{eq:worker-type}
    U & = \With{\ann[1]\TagTask : \Plus{\ann[0]\TagRes : U}, \ann[0]\TagStop : \Plus{\ann[0]\TagStop : \One}}
\end{align}

The annotations in $\ann[1]\TagTask$ and $\ann[0]\TagStop$ mean that a process
like $\Split$, which complies with $S$, is \emph{charged} by $1$ unit of measure
whenever it sends $\TagTask$ and by 0 when it sends $\TagStop$. This measure is
charged \emph{in addition} to the cost of the output it is performing and
accounts for the cost of sending the result in $\Worker$. Since this cost is
already charged on $\Split$, it can be \emph{discharged} from the $\TagTask$
branch of $\Worker$ when we compute its measure, as shown by $U$. This way we
end up solving for $\Worker$ the equations $n \geq n$ (for the $\TagTask$
branch) and $n \geq 2$ (for the $\TagStop$ branch) whose least solution is $n =
2$.
In principle we have to reconsider the measure we gave to $\Split$ to account
for the 1 unit that is now charged for each output of $\TagTask$, but since the
measure of $\Split$ was determined by the $\TagStop$ branch (which has a null
annotation) it remains unchanged.

From now on we consider a refinement of the session types presented in
\Cref{sec:session-types} where tags are annotated with natural numbers.
These annotations do not interfere in any way with the properties and
characterizations of session types we have presented there and
in~\Cref{sec:subtyping}, with the proviso that wherever we have written matching
message types in \cref{sec:session-types,sec:subtyping} we mean that measures
also match.
In particular, \Cref{def:cac} entails that the amount of measure charged on one
side of the session matches the amount of measure discharged from the other side
of the session (like for $S$ and $U$ above) and the clauses of \Cref{def:cas}
entail that matching actions in session types related by subtyping carry the
same measure.
In principle, it would be possible to relax these constraints and allow some
variance of measure annotations. Since we do not have concrete examples that
take advantage of this further refinement, we spare the additional complexity.

\subsection{Typing Rules}
\label{sec:typing-rules}

\begin{table}
    \caption{Typing rules for \Calculus.}
    \label{tab:typing-rules}
    \begin{mathpar}
        \inferrule[\DoneRule]{~}{
            \wtp[n]\Done\EmptyContext
        }
        \and
        \inferrule[\CallRule]{
            \wtp{P}{\seqof{x : S}}
        }{
            \wtp[m+n]{\Call{A}{\seqof\x}}{\seqof{x : S}}
        }
        ~
        \Let{A}{\seqof\x} = P
        \and
        \inferrule[\CloseRule]{~}{
            \wtp[1+n]{\Close\x}{x : \One}
        }
        \and
        \inferrule[\WaitRule]{
            \wtp{P}\Context
        }{
            \wtp{\Wait\x.P}{\Context, x : \Bot}
        }
        \and
        \inferrule[\ChoiceRule]{
            \wtp[n_1]{P}\Context
            \\
            \wtp[n_2]{Q}\Context
        }{
            \wtp[1+n_k]{P \choice Q}\Context
        }
        \and
        \inferrule[\ForkRule]{
            \wtp[m]{P}{\ContextC, y : S}
            \\
            \wtp[n]{Q}{\ContextD, x : T}
        }{
            \wtp[1+m+n]{\Fork\x\y{P}{Q}}{\ContextC, \ContextD, x : \Times{S}{T}}
        }
        \and
        \inferrule[\JoinRule]{
            \wtp{P}{\Context, y : S, x : T}
        }{
            \wtp{\Join\x\y.P}{\Context, x : \Par{S}{T}}
        }
        \and
        \inferrule[\SelectRule]{
            \wtp[n]{P}{\Context, x : S_k}
        }{
            \wtp[1+n+m_k]{\Select\x{\Tag_k}.P}{\Context, x : \Plus{\ann[m_i]{\Tag_i} : S_i}_{i\in I}}
        }
        \and
        \inferrule[\CaseRule]{
            \forall i\in I: \wtp[n+m_i]{P_i}{\Context, x : S_i}
        }{
            \wtp[n]{\Case\x{\Tag_i:P_i}_{i\in I\cup J}}{\Context, x : \With{\ann[m_i]{\Tag_i} : S_i}_{i\in I}}
        }
        \and
        \inferrule[\LinkRule]{~}{
            \wtp[1+n]{\Link\x\y}{x : S, y : T}
        }
        ~
        \dual{S} \subt T
        \and
        \inferrule[\CutRule]{
            \wtp[m]{P}{\ContextC, x : S}
            \\
            \wtp[n]{Q}{\ContextD, x : T}
        }{
            \wtp[m+n]{\Cut\x{P}{Q}}{\ContextC, \ContextD}
        }
        ~
        \cc{S}{T}
    \end{mathpar}
\end{table}

The typing rules for \Calculus are shown in \Cref{tab:typing-rules}.
Judgments have the form $\wtp[n]{P}\Context$ where $P$ is the process being
typed, $n$ is its measure and $\Context$ is a \emph{typing context}, namely a
partial function that maps channels to types.
We let $\ContextC$ and $\ContextD$ range over typing contexts, we write
$\EmptyContext$ for the empty context, $x : S$ for the singleton context that
maps $x$ to $S$, and $\ContextC, \ContextD$ for the union of $\ContextC$ and
$\ContextD$ when they have disjoint domains.
The typing rules shown in \Cref{tab:typing-rules} are meant to be interpreted
coinductively, hence a process is well typed provided there is a (possibly
infinite) derivation built using those rules. The structure of the rules is
essentially the same of other session type systems based on (classical) linear
logic~\citep{Wadler14,LindleyMorris16,DagninoPadovani24}, so we will mainly
focus on those aspects of the rules that are new or different in our setting.

The rule \DoneRule states that the terminated process can have any measure and
is well typed only in the empty context.

The rule \CallRule states that a process invocation is well typed in a context
$\seqof{x : S}$ if so is its definition in the same context. Recall that a
typing derivation can be infinite in the case of recursive processes and note
that the measure of the invocation may be larger than that of its definition,
allowing some measure to be discarded from one invocation to the next. This is
not strictly necessary for the soundness of type system, but it is sometimes
convenient to obtain simpler typing derivations.

The rules \CloseRule and \WaitRule deal with session termination in the expected
way. The measure of $\Close\x$ is strictly positive whereas the measure of
$\Wait\x.P$ coincides with that of $P$ since the measure $n$ in a typing
judgment $\wtp\Context{P}$ only accounts for the outputs performed by $P$.

The rules \ForkRule and \JoinRule deal with the communication of channels.
The measure of a channel output $\Fork\x\y{P}Q$ accounts for both the measure of
$P$ (the process being spawned that uses one end of the fresh session $y$) and
the measure of $Q$, while the additional unit of measure accounts for the output
being performed.
The measure of a channel input $\Join\x\y.P$ simply coincides with that of the
continuation process.

The rules \SelectRule and \CaseRule deal with the communication of tags.
The measure of a tag output $\Select\x\Tag.P$ accounts for the measure of the
continuation $P$ and of the measure annotation $m$ associated with $\Tag$ in the
type of $x$, while the additional unit of measure accounts for the output being
performed.
The measure of a tag input $\Case\x{\Tag_i : P_i}_{i\in I}$ is the residual
measure of each branch after the measure annotation $m_i$ associated with
$\Tag_i$ in the type of $x$ has been discharged.
Note that a tag input may in general provide more branches than those actually
occurring in the type of $x$.
As a special case, \CaseRule also captures the introduction rule for $\Top$ in
\MALL. In particular, we have $\wtp[n]{\Case\x{}}{\Context, x : \Top}$ for every
$n$ and every $\Context$.

The rule \ChoiceRule deals with non-deterministic choices in the expected way.
The rule does not specify which branch of a non-deterministic choice determines
the measure of the whole choice. Since the typeability of a process rests on its
measurability, the obvious strategy is to pick the continuation with the
smallest measure (\cf \Cref{sec:measure}).

The rule \LinkRule is a generalization of the axiom in \MALL. It allows the
unification of $x$ and $y$ provided that $\dual{S} \subt T$ (or equivalently
$\dual{T} \subt S$, by \Cref{prop:dual-subtype}) where $S$ is the type of $x$
and $T$ is the type of $y$. As we will see in
\Cref{sec:subtyping-usage-patterns}, this formulation of the link typing rule
allows us to capture some recurring usage patterns of subtyping. A link has a
strictly positive measure since it accounts for one \rulename{r-link} reduction.

Finally, the rule \CutRule rule ensures that the processes $P$ and $Q$ connected
by the new session $x$ use the channel correctly requiring $\cc{S}{T}$ to hold
where $S$ is the type of $x$ as used by $P$ and $T$ is the type of $x$ as used
by $Q$. Note that this correctness condition can be equivalently expressed in
terms of subtyping with the relation $S \subt \dual{T}$ by
\Cref{thm:correct-subt}.
The measure of the composition is the combination of the measures of $P$ and
$Q$.

\begin{table}
    \caption{\label{tab:server-typing}Typing derivations for $\Split$, $\Gather$ and $\Worker$.}
    \begin{center}
        \begin{prooftree}
            \[
                \[
                    \mathstrut\smash\vdots
                    \justifies
                    \wtp[2]{
                        \Call\Gather{x,y}
                    }{
                        x : V,
                        y : T    
                    }
                \]
                \[
                    \[
                        \[
                            \justifies
                            \wtp[1]{
                                \Close\x
                            }{
                                x : \One
                            }
                            \using\CloseRule
                        \]
                        \justifies
                        \wtp[2]{
                            \Select\x\TagResp.\Close\x
                        }{
                            x : V
                        }
                        \using\SelectRule
                    \]
                    \justifies
                    \wtp[2]{
                        \Wait\y.\Select\x\TagResp.\Close\x
                    }{
                        x : V,
                        y : \Bot
                    }
                    \using\WaitRule
                \]
                \justifies
                \wtp[2]{
                    \Case\y{\TagRes : \Call\Gather{x,y}, \TagStop : \Wait\y.\Select\x\TagResp.\Close\x }
                }{
                    x : V,
                    y : T    
                }
                \using\CaseRule
            \]
            \justifies
            \wtp[2]{
                \Call\Gather{x,y}
            }{
                x : V,
                y : T
            }
            \using\CallRule
        \end{prooftree}
        \\\bigskip
        \begin{prooftree}
            \[
                \[
                    \[
                        \mathstrut\smash\vdots
                        \justifies
                        \wtp[4]{
                            \Call\Split{x,y}
                        }{
                            x : V,
                            y : S
                        }
                    \]
                    \justifies
                    \wtp[6]{
                        \Select\y\TagTask.\Call\Split{x,y}
                    }{
                        x : V,
                        y : S
                    }
                    \using\SelectRule
                \]
                \[
                    \[
                        \mathstrut\smash\vdots
                        \justifies
                        \wtp[2]{
                            \Call\Gather{x,y}
                        }{
                            x : V,
                            y : T
                        }
                        \using\LinkRule
                    \]
                    \justifies
                    \wtp[3]{
                        \Select\y\TagStop.\Call\Gather{x,y}
                    }{
                        x : V,
                        y : S
                    }
                    \using\SelectRule
                \]
                \justifies
                \wtp[4]{
                    \Select\y\TagTask.\Call\Split{x,y}
                    \choice
                    \Select\y\TagStop.\Call\Gather{x,y}
                }{
                    x : V,
                    y : S
                }
                \using\ChoiceRule
            \]
            \justifies
            \wtp[4]{
                \Call\Split{x,y}
            }{
                x : V,
                y : S
            }
            \using\CallRule
        \end{prooftree}
        \\\bigskip
        \begin{prooftree}
            \[
                \[
                    \[
                        \mathstrut\smash\vdots
                        \justifies
                        \wtp[2]{
                            \Call\Worker\y
                        }{
                            y : U
                        }
                    \]
                    \justifies
                    \wtp[3]{
                        \Select\y\TagResp.\Call\Worker\y
                    }{
                        y : \Plus{ \ann[0]\TagResp : U }
                    }
                    \using\SelectRule
                \]
                \[
                    \[
                        \justifies
                        \wtp[1]{
                            \Close\y
                        }{
                            y : \One
                        }
                        \using\CloseRule
                    \]
                    \justifies
                    \wtp[2]{
                        \Select\y\TagStop.\Close\y
                    }{
                        y : \Plus{ \ann[0]\TagStop : \One }
                    }
                    \using\SelectRule
                \]
                \justifies
                \wtp[2]{
                    \Case\y{
                        \TagTask : \Select\y\TagResp.\Call\Worker\y,
                        \TagStop : \Select\y\TagStop.\Close\y
                    }
                }{
                    y : U
                }
                \using\CaseRule
            \]
            \justifies
            \wtp[2]{
                \Call\Worker\y
            }{
                y : U
            }
            \using\CallRule
        \end{prooftree}
    \end{center}
\end{table}

\begin{example}
    \label{ex:server-typing}
    Let us build a typing derivation for the server $\Case\x{\TagReq :
    \Cut\y{\Call\Split{x,y}}{\Call\Worker\y}}$ using the types $S$, $T$ and $U$
    defined in \Cref{eq:split-type,eq:gather-type,eq:worker-type}. It is
    convenient to also use the type $V = \Plus{ \ann[0]\TagResp : \One }$.
    \Cref{tab:server-typing}
%    \GZ{Attenzione a quello che mi sembra il solito errore: 
%    Gather riceve su $y$ e non $x$} 
    shows the typing derivations for the $\Split$,
    $\Gather$ and $\Worker$ processes. Note that the measures obtained for these
    processes coincide with those previously inferred in \Cref{sec:measure}.
    From $\correct{S}{U}$ (\Cref{ex:server-worker-correct}) we derive
    \[
        \begin{prooftree}
            \[
                \[
                    \mathstrut\smash\vdots
                    \justifies
                    \wtp[4]{
                        \Call\Split{x,y}
                    }{
                        x : V,
                        y : S
                    }
                \]
                \[
                    \mathstrut\smash\vdots
                    \justifies
                    \wtp[2]{
                        \Call\Worker\y
                    }{
                        y : U
                    }
                \]
                \justifies
                \wtp[6]{
                    \Cut\y{\Call\Split{x,y}}{\Call\Worker\y}
                }{
                    x : V
                }
                \using\CutRule
            \]
            \justifies
            \wtp[6]{
                \Case\x{\TagReq : \Cut\y{\Call\Split{x,y}}{\Call\Worker\y}}
            }{
                x : \With{ \ann[0]\TagReq : V }
            }
            \using\CaseRule
        \end{prooftree}
    \]
    confirming that the server is well typed.
    \eoe
\end{example}

\subsection{Properties of Well-Typed Processes}
\label{sec:soundness}

We conclude this section presenting a series of results showing the main
properties of well-typed processes, starting from the preservation of typing
under reductions (\ie subject reduction) which underlies most of the subsequent
results. Subject reduction is formulated in a slightly non-standard way because
the typing context may become ``smaller'' -- \ie ``more precise'' -- according
to the subtyping relation $\subt$ after each reduction. This is a consequence of
the formulation of the rule \LinkRule, where the types of $x$ and $y$ need not
be dual to each other, but can be related by subtyping. In the statement of
\Cref{thm:sr}, we write $\subt$ for the pointwise extension of subtyping to
typing contexts.

\begin{restatable}[subject reduction]{theorem}{thmsr}
    \label{thm:sr}
    If $P \red Q$, then $\wtp[n]{P}\ContextC$ implies $\wtp[m]{Q}\ContextD$ for
    some $\ContextD \subt \ContextC$.
\end{restatable}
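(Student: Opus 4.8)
The plan is to prove \Cref{thm:sr} by induction on the derivation of $P \red Q$, after establishing several auxiliary results. I would first prove the standard \emph{inversion lemmas} for the rules of \Cref{tab:typing-rules} (each describing the shape of any typing derivation ending in a judgment for a given process form) and a \emph{renaming lemma} stating that $\wtp[n]{P}{\ContextC, x : S}$ implies $\wtp[n]{P\subst\y\x}{\ContextC, y : S}$, which is needed for \rulename{r-link}. Throughout, typing derivations may be infinite, but since every redex occurs at finite depth, each rewriting step touches only a finite prefix of the derivation and reuses the (possibly infinite) sub-derivations of recursive invocations unchanged.

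The technical core is a group of \emph{buffer lemmas} reconciling the asynchronous LTS of \Cref{tab:lts} with typing. The \emph{decomposition lemma} states that $\wtp[n]{\Buffer_x[P]}{\ContextC, x : T}$ holds exactly when $T$ performs a sequence of immediate output transitions $T \xlred\ActionsA T'$ mirroring the structure of $\Buffer_x$, the processes forked inside $\Buffer_x$ are typed in pairwise disjoint fragments of $\ContextC$, and $\wtp[n']{P}{\ContextC', x : T'}$ for the residual fragment $\ContextC'$ of $\ContextC$. The \emph{commutation lemma} states that if $\wtp[n]{\Buffer_x[\Case\x{\Tag_i:Q_i}_{i}]}{\ContextC, x : T}$ and $T \lred{\inp\Tag_k} T'$ (a possibly \emph{late} input), then $\wtp[m]{\Buffer_x[Q_k]}{\ContextD, x : T'}$ for some $m$ and some $\ContextD \subt \ContextC$; symmetrically for an early channel output $T \xlred{\inp T_1} T_2$ absorbed by a $\Join\x\y$ prefix inside $\Buffer_x$, and dually when the buffer sits on the input side of a cut and the relevant transition is an early output. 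I would prove the commutation lemma by induction on $\Buffer_x$: the key observation is that the top-level connective of $T$ determines which LTS rule can derive the transition at its root, so a late input in front of an output buffer prefix of type $\Plus{\ann[m_j]{\Tag_j} : T_j}_j$ must come from \APlusT and is therefore enabled in \emph{every} branch, in particular in the one that the buffer selects; the base case $\Buffer_x = \Hole$ is an immediate \WithT (resp. \ParT, \BotT) transition discharged by inversion of \CaseRule (resp. \JoinRule, \WaitRule). A companion \emph{buffer-closure lemma} --- if $P \red Q$ and $\wtp[n]{\Buffer_x[P]}{\ContextC}$ then $\wtp[m]{\Buffer_x[Q]}{\ContextD}$ for some $m$ and some $\ContextD \subt \ContextC$ --- follows by the same induction, using congruence of $\subt$ with respect to the session type constructors (which follows by \Cref{thm:subt} from an easy coinductive argument) to propagate context refinements past the buffer prefix.

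Next I would prove \emph{subject congruence}: $P \pcong Q$ and $\wtp[n]{P}{\ContextC}$ imply $\wtp[m]{Q}{\ContextD}$ for some $m$ and some $\ContextD \subt \ContextC$, by cases on the pre-congruence rule. \rulename{s-call} uses well-typedness of the global definition of $A$ and the measure slack in \CallRule; \rulename{s-link} and \rulename{s-comm} use that $\subt$ is closed under duality (\Cref{prop:dual-subtype}) and that $\correct{}{}$ is symmetric; the rules \rulename{s-wait}, \rulename{s-case}, \rulename{s-join} and \rulename{s-pull-$k$} follow from the decomposition lemma together with careful bookkeeping of how the context of the processes forked inside a buffer is \emph{shared} across the branches of a $\Case$ (which is exactly what \CaseRule permits) and across the two sides of a cut, with \rulename{s-pull-0} additionally using \LinkRule and \Cref{prop:dual-subtype} to justify the rewriting of $\Buffer_x$ into $\Buffer_y$.

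Finally, for the main induction on $P \red Q$: \rulename{r-choice} follows from \ChoiceRule having both branches as premises (the statement of \Cref{thm:sr} places no constraint on $m$, so no measure decrease is needed anywhere); \rulename{r-link} uses \Cref{thm:correct-subt}, \Cref{prop:dual-subtype} and transitivity of $\subt$ (both implied by \Cref{thm:subt}) together with the renaming lemma --- this is the only reduction that genuinely shrinks the context, and $\subt$ appears in the conclusion of \Cref{thm:sr} precisely to accommodate it; \rulename{r-close} follows from duality implying correctness (\Cref{prop:dual-correct}); \rulename{r-select} and \rulename{r-fork} are the interesting cases, handled by inverting \CutRule and \SelectRule (resp. \ForkRule), using \rcac{oi} (resp. \rcac{ois}) to obtain the matching late/early transition of the peer type, feeding it into the commutation lemma, and recomposing the cut with the continuation correctness $\correct{S_k}{T'}$ (resp. $\correct{S_1}{T_1}$ and $\correct{S_2}{T_2}$); \rulename{r-cut} and \rulename{r-buffer} are the inductive cases, closed using \CutRule and the buffer-closure lemma plus congruence of $\subt$; and \rulename{r-str} combines subject congruence with the induction hypothesis. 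I expect the main obstacle to be the commutation lemma and, more broadly, keeping the accounting of contexts, subtyping refinements and buffer-distributed forked processes coherent across \rulename{r-select}/\rulename{r-fork} and \rulename{s-case}/\rulename{s-pull-$k$}, since these mix the asynchronous LTS with the context-splitting performed by \CutRule, \ForkRule and \CaseRule.
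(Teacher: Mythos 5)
Your proposal is correct and follows essentially the same route as the paper: a substitution/renaming lemma, buffer (de)composition lemmas, a subject-congruence lemma for $\pcong$ with a $\subt$-refined context, and a final induction on the derivation of $P \red Q$. The paper packages your ``decomposition'' and ``commutation'' lemmas slightly differently --- it introduces explicit typing judgments for buffers ($\Buffer_x$ typed by a sequence of output actions $\Actions$), proves assembly and disassembly lemmas, and handles the late-input/early-output bookkeeping inline in the \rulename{r-select}/\rulename{r-fork} cases via the observation that $\Actions.S(\Actions) \subt S$ for output sequences (and dually for inputs) --- but the content is the same; the only nitpick is that after absorbing a late input the type at $x$ is in general a strict subtype of $T'$ (e.g.\ $\Actions.T_k \subt T(\inp\Tag_k)$), so your commutation lemma should conclude with $x : T''$ for some $T'' \subt T'$ rather than exactly $T'$.
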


In general it is not possible to establish an order relationship between $m$
(the measure of the process after the reduction) and $n$ (the measure of the
process before the reduction) because of the rule \ChoiceRule. However, when $P
\red$ it is always possible to reduce $P$ so that its measure strictly
decreases. This is the key reasoning for the forthcoming termination results.

While deadlock freedom (\Cref{def:df}) is usually taken for granted for session
type systems based on linear logic, where it is a straightforward consequence of
cut elimination, its proof for \Calculus is more elaborated because of the deep
cut reductions that model asynchrony. In particular, all of the reduction rules
of \Cref{tab:semantics} except \rulename{r-close} take into account the presence
of message buffers. The commuting conversions dealing with message buffers play
a key role in rearranging the structure of processes so as to guarantee deadlock
freedom in the presence of asynchronous communications.

\begin{restatable}[deadlock freedom]{theorem}{thmdf}
    \label{thm:df}
    If $\wtp[n]{P}\EmptyContext$, then $P$ is deadlock free.
\end{restatable}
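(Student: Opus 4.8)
The plan is to derive deadlock freedom from subject reduction (\Cref{thm:sr}) together with a \emph{progress} property for closed processes. Since $\subt$ is extended to contexts pointwise (over a common domain), $\ContextD \subt \EmptyContext$ forces $\ContextD = \EmptyContext$; hence, iterating \Cref{thm:sr}, every $Q$ with $P \wred Q$ is still well typed in the empty context. It therefore suffices to prove \emph{progress}: if $\wtp[n]{P}\EmptyContext$ then $P \red$ or $P \pcong \Done$. Given such a progress property, the theorem follows at once: for $P \wred Q \nred$ as in \Cref{def:df}, progress yields $Q \red$ or $Q \pcong \Done$, and $Q \nred$ excludes the former.

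Progress is proved by inspecting the last rule of the typing derivation of $P$. Because the context is empty, only \DoneRule, \ChoiceRule, \CallRule and \CutRule can conclude it: every prefix rule has a non-empty context and \LinkRule a two-channel one. If $P = \Done$ we are done; if $P = P_1 \choice P_2$ then $P \red$ by \rulename{r-choice}; if $P = \Call{A}{}$ then $P \pcong P'$ with $\Let{A}{} = P'$ by \rulename{s-call}, and since invocations are guarded $P'$ is not itself an invocation, so it falls under one of the other cases and the conclusion transfers back through \rulename{r-str}. The real work is the case $P = \Cut\x{P_1}{P_2}$ with $\wtp[m_1]{P_1}{x : S_1}$, $\wtp[m_2]{P_2}{x : S_2}$ and $\cc{S_1}{S_2}$.

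For the cut I would prove a \emph{canonical forms} lemma: any well-typed $R$ with a non-empty context either reduces, or is $\pcong$ to $\Buffer_x[\pi]$ for some channel $x$ of its context and some prefix $\pi$ whose next action is on $x$; moreover $\pi$ can be chosen to be an input on $x$ whenever the residual behaviour at $x$ (after all immediate outputs on $x$ have been absorbed into $\Buffer_x$) is input-enabled. The lemma is proved by induction on the structure of $R$ --- guardedness keeps bare invocations off ``cut spines'', so the only possible top-level unfolding has already been handled --- repeatedly applying the commuting conversions of \Cref{tab:semantics}: \rulename{s-pull-0}--\rulename{s-pull-4} float buffers out of, and past, (possibly nested) cuts, \rulename{s-wait}, \rulename{s-case} and \rulename{s-join} extrude blocking input prefixes, and an inner cut not absorbed by these rules yields, via the inductive hypothesis on its components and \rulename{r-cut}, a reduction. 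Applied to $\Cut\x{P_1}{P_2}$: by \rcac{pp} one side, say $S_1$, is positive, and then \Cref{def:cac} forces the other side to accept, possibly late, every message $S_1$ may send; normalising $P_1$ and $P_2$ with the lemma, either one of them reduces (hence so does $P$), or we reach $\Cut\x{\Buffer_x[\pi_1]}{\Buffer'_x[\pi_2]}$ where, after \rulename{s-comm} if needed, one of $\pi_1,\pi_2$ is an immediate output on $x$ and the other, sitting behind its buffer, is the matching input, so \rulename{r-close}, \rulename{r-select} or \rulename{r-fork} fires.

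The main obstacle is this canonical forms lemma. Two points need care. First, structural pre-congruence deliberately omits associativity of cut, so a nested cut cannot simply be reassociated: the argument must relocate, through the pull rules, whatever buffer on the outer channel an inner cut exposes, and appeal to the inductive hypothesis on the inner cut for the rest --- keeping track of side conditions such as $z \not\in \fn{\Buffer_x}$ throughout. Second, although unfolding invocations could in principle enlarge a process without bound, guardedness confines invocations away from cut spines, so the induction remains well founded on finite terms; the measure annotations underlying \Cref{thm:sr}, and the accompanying observation that a reducible process can always be reduced with a strictly smaller measure, provide the complementary control when reasoning about reductions. Finally, the reason the possibly late inputs and early outputs used through \Cref{def:cac} correspond to actions a suitably rearranged well-typed process can genuinely perform is underpinned by \Cref{thm:fas} and the diamond property \Cref{prop:diamond}, which is precisely why those characterisations of the LTS are established before the type system.
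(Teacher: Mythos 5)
Your overall strategy is the paper's: subject reduction keeps the process typed in the empty context, and the real content is a progress/normalization lemma proved by rearranging buffers and input prefixes with the commuting conversions and then using clause \rcac{pp} of \Cref{def:cac} to find a fireable redex inside a cut. The paper packages this as a \emph{productivity} lemma (\Cref{lem:df}) for \emph{arbitrary} contexts: every well-typed process either reduces or is $\pcong$ to a \emph{thread}, where a thread is a stack of buffers on arbitrary channels wrapped around one of the base forms $\Done$, $\Link\x\y$, $\Close\x$ or an input prefix on any channel; the only thread typable in the empty context is $\Done$.

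There are, however, two concrete problems with your version. First, your canonical forms lemma is mis-stated in a way that breaks the cut case: you require the normal form $\Buffer_x[\pi]$ to have $\pi$ acting on the \emph{same} channel $x$ as the buffer, but a well-typed component of a nested cut can be blocked on an input on one free channel while holding a non-empty buffer on another (e.g.\ $\Select\x\TagA.\Wait\y.P$), and no structural rule merges the two channels. Correspondingly, your analysis of $\Cut\x{\Buffer_x[\pi_1]}{\Buffer_x'[\pi_2]}$ assumes the non-outputting side exposes ``the matching input'' on $x$; the missing sub-case is $\pi_2$ being an input on some $y \ne x$ belonging to the outer context, in which case the cut does \emph{not} reduce and must instead be extruded into a thread for the enclosing context via \rulename{s-wait}/\rulename{s-case}/\rulename{s-join}. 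This sub-case cannot arise for the outermost cut of a closed process, but it does arise in the recursive treatment of nested cuts, which is exactly why the paper states and proves the lemma uniformly for all contexts with the more liberal thread predicate. Second, your induction ``on the structure of $R$'' does not go through: unfolding an invocation is not structural descent, and invocations do occur unguarded as cut components (the paper's own $\Cut\y{\Call\Split{x,y}}{\Call\Worker\y}$), so your claim that guardedness keeps them off cut spines is false -- guardedness only bounds the depth of the unguarded cut/invocation spine. The paper accordingly inducts on the lexicographic pair $(n,\depth{P})$, using the measure $n$ to descend through output prefixes and $\depth{\cdot}$ to descend through cuts and unfoldings; you gesture at this but do not set it up.
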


\Cref{thm:df} holds for closed processes (those well typed in the empty
context), but its proof is based on a more general \emph{productivity} result
(\Cref{lem:df} in \Cref{sec:proofs-deadlock-freedom}) that applies to open
processes as well. Intuitively, a productive process is always able to reduce or
it is structurally precongruent to another process that exposes some actions on
a free name.

Next we have two termination results. The first one ensures that well-typed
processes are weakly terminating, namely that they have (at least) one finite
run.

\begin{restatable}[weak termination]{theorem}{thmwt}
    \label{thm:wt}
    If $\wtp[n]{P}\Context$, then $P$ is weakly terminating.
\end{restatable}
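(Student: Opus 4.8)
The plan is to deduce weak termination from a strengthened, ``measured'' form of progress and then peel off one reduction at a time. Concretely, I would first establish the following lemma: \emph{if $\wtp[n]{P}\ContextC$ and $P \red$, then there exist $Q$, a measure $m < n$ and a context $\ContextD \subt \ContextC$ such that $P \red Q$ and $\wtp[m]{Q}\ContextD$.} Granting this, the theorem follows by strong induction on $n$. If $P \nred$, the one-element sequence $(P)$ is already a maximal, hence finite, run, so $P$ is weakly terminating. If instead $P \red$, the lemma yields a reduction $P \red Q$ with $\wtp[m]{Q}\ContextD$ and $m < n$; by the induction hypothesis $Q$ has a finite run $(Q, Q_1, \dots, Q_k)$, and prepending $P$ produces a finite run of $P$.

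The lemma is a refinement of subject reduction (\Cref{thm:sr}) that carries the measure along and, crucially, \emph{selects} the reduction. I would prove it by induction on a derivation witnessing $P \red Q_0$ for some $Q_0$ (one exists since $P \red$), transforming it into a measure-decreasing reduction. For the base rules \rulename{r-link}, \rulename{r-close}, \rulename{r-select} and \rulename{r-fork}, inverting the typing rules of \Cref{tab:typing-rules} shows that firing the redex lowers the global measure by exactly one: the output prefix (or link) being consumed always contributes a strictly positive summand, coming from \CloseRule, \LinkRule, \SelectRule or \ForkRule; the annotation $m_k$ that \SelectRule charges when sending $\Tag_k$ is precisely the annotation that the matching \CaseRule hands to the selected branch --- this is the ``annotations match'' consequence of \Cref{def:cac} observed in \Cref{sec:measure} --- and the measure contributed by a surrounding buffer $\Buffer_x$ is untouched because $\Buffer_x$ reappears verbatim on the right-hand side. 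For \rulename{r-choice} I would reduce $P_1 \choice P_2$ to the very branch $P_k$ whose sub-derivation is used by the \ChoiceRule instance typing it, so the measure drops from $1 + n_k$ to $n_k$. For the congruence rules \rulename{r-cut} and \rulename{r-buffer} I apply the induction hypothesis to the reducing subterm, using \Cref{thm:correct-subt} to observe that refining a context along $\subt$ preserves the correctness side-condition of \CutRule, so the reduced process is still well typed with the expected smaller measure. Finally, for \rulename{r-str} I first prove a \emph{subject-congruence} statement --- if $P \pcong R$ and $\wtp[n]{P}\ContextC$ then $\wtp[n']{R}\ContextC$ for some $n' \le n$, with $n' = n$ for every rule of \Cref{tab:semantics} except the unfolding rule \rulename{s-call} --- and then apply the induction hypothesis to the premise $R \red Q_0$.

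The main obstacle is the measure bookkeeping through the deep cut reductions and commuting conversions that model asynchrony. In a synchronous calculus an input redex sits next to its matching output, but here the two can be separated by an arbitrary buffer $\Buffer_x$ of floating messages, so one must verify that the buffer's measure is genuinely invariant under \rulename{r-select} and \rulename{r-fork}, that the correctness side-condition of a cut is preserved when an action fires deep inside a buffer (which rests on the early-output/late-input clauses \rcac{ois} and \rcac{ios} of \Cref{def:cac}), and that subject congruence survives the non-invertible rearrangement rules \rulename{s-pull-0} through \rulename{s-pull-4}. The non-deterministic choice is the only point where the existential quantifier in the lemma does real work --- the two branches need not share a bound on their measures --- but reducing to the branch already recorded in the typing derivation sidesteps this cleanly.
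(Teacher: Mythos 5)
Your proposal is correct and follows essentially the same route as the paper: the paper likewise extracts from the proof of \Cref{thm:sr} the fact that every reducible well-typed $P$ admits a reduction $P \red Q$ with $\wtp[m]{Q}\ContextD$ for some $m < n$ and $\ContextD \subt \ContextC$ (choosing, for a non-deterministic choice, the branch that determines the measure in \ChoiceRule), and then builds a finite run by induction on $n$. The only cosmetic difference is that you package the measure-decreasing step as a standalone lemma, whereas the paper obtains it by inspecting the cases of the subject reduction proof.
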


Weak termination is a liveness property and, as such, its proof is based on a
well-founded argument. The quantity $n$ is one of the components of a
lexicographically ordered pair that serves as basis for the induction. The other
component of the pair measures the depth of the ``unguarded'' top-level
structure of the process which is guaranteed to be finite because of the
guardedness restriction we have assumed on processes in \Cref{sec:language}.
This component is needed to cope with the typing rules \CallRule and \CutRule
for which the measure of the premise(s) is not necessarily smaller than the
measure in the conclusion of the rules.

\Cref{thm:wt} can be strengthened to a termination result for
\emph{deterministic} processes, those that do not contain non-deterministic
choices.

\begin{restatable}[termination]{theorem}{thmt}
    \label{thm:termination}
    If $P$ is deterministic and $\wtp[n]{P}\Context$, then $P$ is terminating.
\end{restatable}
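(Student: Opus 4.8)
The plan is to show that, for a deterministic process, \emph{every} reduction strictly decreases the measure recorded in its typing judgment; since this measure is a natural number, no infinite reduction sequence can exist, so every run is finite. The key observation is that the only rules that allow the measure to stay put (or grow) are \ChoiceRule and \rulename{r-choice}: a composition distributes its measure additively over its components (\CutRule), a buffer contributes a fixed weight plus the measure of its hole, and each of the remaining base reductions destroys exactly one construct — a link, a close prefix, a select prefix or a fork prefix — that \LinkRule, \CloseRule, \SelectRule and \ForkRule respectively charge with a strictly positive amount. For a deterministic process the operator $\choice$ never appears (and, reading ``deterministic'' as ``neither $P$ nor the body of any definition reachable from $P$ mentions $\choice$'', this stays true along $\pcong$ and $\red$, since no rule in \Cref{tab:semantics} introduces $\choice$), so this last obstacle disappears.

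Concretely, the first step is to prove a measure-tracking refinement of subject reduction: if $P$ is deterministic, $\wtp[n]{P}{\Context}$ and $P \red Q$, then $Q$ is deterministic and $\wtp[m]{Q}{\ContextD}$ for some $\ContextD \subt \Context$ and some $m < n$. I would establish this by induction on the derivation of $P \red Q$, inheriting the context and typeability part from \Cref{thm:sr} and only bookkeeping the measure. In the base cases \rulename{r-link} and \rulename{r-close} the eliminated link (resp. close prefix) contributes a positive summand, so the measure drops. In \rulename{r-select} the unit contributed by the select prefix vanishes, while the annotation charged by \SelectRule on the sender is exactly the annotation discharged by \CaseRule on the selected branch, so the two cancel and the buffer weight is unaffected; \rulename{r-fork} is analogous, with the unit of the fork prefix disappearing. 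The case \rulename{r-choice} cannot arise. For \rulename{r-cut} and \rulename{r-buffer}, additivity over cuts and monotonicity over buffers turn an internal strict decrease into a global one. For \rulename{r-str} I would use the companion ``subject congruence'' fact, established alongside \Cref{thm:sr}, that $P \pcong R$ and $\wtp[n]{P}{\Context}$ imply $\wtp[n']{R}{\Context'}$ for some $\Context' \subt \Context$ and $n' \leq n$ (every structural rule except \rulename{s-call} is a rearrangement that preserves the measure, and \rulename{s-call} can only discard some of it), and then apply the induction hypothesis to $R \red Q$.

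The second step is then immediate: given a deterministic $P$ with $\wtp[n]{P}{\Context}$, any run $P = P_0 \red P_1 \red \cdots$ yields, by iterating the refinement above, typings $\wtp[n_i]{P_i}{\Context_i}$ with $n_0 > n_1 > n_2 > \cdots$; a strictly decreasing sequence of natural numbers is finite, hence the run is finite, and since the run was arbitrary, $P$ is terminating.

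I expect the main obstacle to be the measure bookkeeping in the first step, and in particular verifying that the annotation charged on a select prefix by \SelectRule coincides with what \CaseRule discharges on the corresponding branch, so that \rulename{r-select} and \rulename{r-fork} strictly decrease the measure even when they fire underneath a non-trivial buffer. A secondary point is to isolate cleanly the ``measure is non-increasing along $\pcong$'' statement used in the \rulename{r-str} case, which should fall out of the proof of \Cref{thm:sr} with little extra work.
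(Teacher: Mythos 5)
Your proposal is correct and follows essentially the same route as the paper: the paper proves \Cref{thm:termination} as a special case of \Cref{thm:wt}, whose proof likewise rests on inspecting the cases of \Cref{thm:sr} to see that every base reduction other than \rulename{r-choice} strictly decreases the measure (with the select/fork annotations cancelling between sender and receiver, and \Cref{lem:pcong-preservation} supplying the non-increase along $\pcong$ for \rulename{r-str}), so that in the absence of $\choice$ every run is a strictly decreasing sequence of naturals and hence finite. You merely make explicit the bookkeeping that the paper leaves to "inspection of the proof of \Cref{thm:sr}".
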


The combination of \Cref{thm:wt} with \Cref{thm:df} also guarantees the absence
of \emph{orphan messages}. Indeed, suppose that $P$ is a well-typed closed
process and $P \wred Q$ where $Q$ contains non-empty buffers. By
\Cref{thm:wt,thm:df} we know that $Q \wred\pcong \Done$. Since in a closed
process messages cannot simply disappear and the only way of consuming them is
by means of the reductions \rulename{r-close}, \rulename{r-fork} and
\rulename{r-select}, we know that all the messages in $Q$ have been received.

Finally, using the proof principle stated in \Cref{thm:proof-principle} we
obtain, as a straightforward corollary of \Cref{thm:wt}, that well-typed
processes are \emph{fairly terminating}~(\Cref{def:ft}).

\begin{corollary}
    \label{cor:ft}
    If $\wtp[n]{P}\Context$, then $P$ is fairly terminating.
\end{corollary}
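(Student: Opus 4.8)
The plan is to derive the result directly from the proof principle of \Cref{thm:proof-principle}, combined with subject reduction (\Cref{thm:sr}) and weak termination (\Cref{thm:wt}). By \Cref{thm:proof-principle}, $P$ is fairly terminating if and only if every $Q$ with $P \wred Q$ is weakly terminating. Hence it suffices to show that every process reachable from a well-typed process is itself weakly terminating.

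First I would observe that well-typedness is preserved along $\wred$. This follows by a routine induction on the number of reduction steps in $P \wred Q$: the base case is immediate, and each inductive step is a single application of \Cref{thm:sr}, which guarantees that a single reduction of a well-typed process yields a well-typed process (possibly with a smaller measure and a more precise typing context according to $\subt$, but still well typed in some context). Thus from $\wtp[n]{P}\Context$ and $P \wred Q$ we obtain $\wtp[m]{Q}\ContextD$ for some $m$ and some $\ContextD \subt \Context$.

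Then I would apply \Cref{thm:wt} to $Q$: since $Q$ is well typed, it is weakly terminating. As $Q$ was an arbitrary process reachable from $P$, every process reachable from $P$ is weakly terminating, and \Cref{thm:proof-principle} lets us conclude that $P$ is fairly terminating.

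I do not expect a genuine obstacle here: the statement is an immediate corollary once \Cref{thm:proof-principle,thm:sr,thm:wt} are available. The only step requiring an explicit (but trivial) argument is the composition of subject reduction along a multi-step reduction, and the only subtlety worth flagging is that the measure and the typing context are not invariant under reduction — but this is harmless, because \Cref{thm:wt} only requires the \emph{existence} of some typing derivation for the reached process, not a specific measure or context.
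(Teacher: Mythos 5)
Your proposal is correct and follows exactly the route the paper intends: iterate subject reduction (\Cref{thm:sr}) along $P \wred Q$ to keep $Q$ well typed, apply weak termination (\Cref{thm:wt}) to $Q$, and conclude with the right-to-left direction of \Cref{thm:proof-principle}. Your remark that the changing measure and $\subt$-smaller context are harmless because \Cref{thm:wt} only needs \emph{some} typing derivation is precisely the right observation.
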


\section{Subtyping Usage Patterns}
\label{sec:subtyping-usage-patterns}

% In \Calculus fair asynchronous subtyping is used indirectly (through
% \cref{thm:correct-subt}) at every application of the \CutRule rule and directly
% at every application of the \LinkRule. In this section we see two usage patterns
% that illustrate the usefulness of such generalized \LinkRule.

A typical usage of subtyping is to make sure that processes occurring in
different branches of a non-deterministic choice $P_1 \choice P_2$ or of a tag
input $\Case\x{\Tag_i : P_i}_{i\in I}$ can be typed in the same typing context.
Indeed, the typing rules \ChoiceRule and \CaseRule require each $P_i$ to be
typed in the same typing context (with the exception that in \CaseRule, the type
of $x$ can be different in each branch), but there are cases in which this
requirement is overly restrictive.

\newcommand{\BatchWorker}{\mathit{BatchWorker}}
\newcommand{\SendResults}{\mathit{SendResults}}

As an example, consider a variation of the scenario depicted in
\Cref{sec:introduction} in which the worker behaves either as $\Call\Worker\y$
(as defined as in \Cref{eq:worker}) or as $\Call\BatchWorker\y$ that first
gathers all tasks and then sends back all the results. The $\Call\BatchWorker\y$
version can be modeled by the equations
\begin{align*}
    \Let\BatchWorker\y & = \Case\y{\TagTask : \Call\BatchWorker\y, \TagStop : \Call\SendResults\y} \\
    \Let\SendResults\y & = \Select\y\TagRes.\Call\SendResults\y \choice \Select\y\TagStop.\Close\y
\end{align*}
for which it is easy to obtain a derivation $\wtp[]{\Call\BatchWorker\y}{y :
\dual{S}}$ where $S$ is defined as in \Cref{eq:split-type}.
Ideally, we could express the server using a non-deterministic choice, thus:
\[
    \Case\x{\TagReq : \Cut\y{\Call\Split{x,y}}{(\Call\Worker\y \choice \Call\BatchWorker\y)}}
\]
however, this process turns out to be ill typed because $\Call\Worker\y$ and
$\Call\BatchWorker\y$ use the channel $y$ according to different types.
Moreover, it should be clear that there is no one-size-fits-all session type
that can be associated with $y$.

We can exploit the relation $U \subt \dual{S}$ to make $\Call\Worker\y$ (which
complies with $U$ defined in \cref{eq:worker-type}) look like a process that
complies with $\dual{S}$. In many type systems with subtyping, this is simply
achieved with an application of the \emph{subsumption rule}. The type system we
have presented for \Calculus does not have a subsumption rule, but this rule is
derivable thanks to the formulation of \LinkRule. In our specific example, we
can derive
\[
    \begin{prooftree}
        \[
            \justifies
            \wtp[]{\Link\z\y}{z : \dual{U}, y : \dual{S}}
            \using\LinkRule
        \]
        \[
            \mathstrut\smash\vdots
            \justifies
            \wtp[]{\Call\Worker\z}{z : U}
        \]
        \justifies
        \wtp[]{\Cut\z{\Link\z\y}{\Call\Worker\z}}{y : \dual{S}}
        \using\CutRule
    \end{prooftree}
\]
to obtain an implementation of the streaming worker that complies with
$\dual{S}$. Now we derive
\[
    \wtp[]{\Cut\z{\Link\z\y}{\Call\Worker\z} \choice \Call\BatchWorker\y}{y : \dual{S}}
\]
to obtain a well-typed worker.
Incidentally, this example also illustrates the reason why \Cref{thm:sr} is
formulated so that, when $P \red Q$, the typing context of $Q$ may be smaller
(according to $\subt$) than the typing context of $P$. Indeed we have
$\Cut\z{\Link\z\y}{\Call\Worker\z} \choice \Call\BatchWorker\y \red
\Cut\z{\Link\z\y}{\Call\Worker\z} \red \Call\Worker\y$ and we would not be able
to obtain a typing derivation for $\wtp[]{\Call\Worker\y}{y :
\dual{S}}$.

Another typical usage pattern for links is the implementation of
\emph{delegation}, whereby the endpoint of an existing session is exchanged
through another session. Because of our formulation of \LinkRule, in our type
system this mechanism also enables a form of substitution principle for channels
so that a channel of type $T$ can be safely used where a channel of type $S$ is
expected if $S \subt T$ (the reason why this formulation of the substitution
principle may seem to go in the wrong direction is again due to the fact that we
adopt the viewpoint whereby session types describe processes rather than
channels).

To illustrate this usage pattern, consider the typing derivation
\[
    \begin{prooftree}
        \[
            \[
                \justifies
                \wtp[1]{\Link\y\z}{y : U, z : T}
                \using\LinkRule
            \]
            \[
                \mathstrut\smash\vdots
                \justifies
                \wtp[n]{P}{\Context, x : V}
            \]
            \justifies
            \wtp[n+1]{\Fork\x\y{\Link\y\z}{P}}{\Context, x : \Times{U}{V}, z :
            T} \using\ForkRule \]
        \[
            \[
                \mathstrut\smash\vdots
                \justifies
                \wtp[m]{Q}{\ContextD, x : W, y : S}
            \]
            \justifies
            \wtp[m]{\Join\x\y.Q}{\ContextD, x : \Par{S}{W}}
            \using\JoinRule
        \]
        \justifies
        \wtp[n+m+1]{\Cut\x{\Fork\x\y{\Link\y\z}{P}}{\Join\x\y.Q}}{\ContextC, \ContextD, z : T}
        \using\CutRule
    \end{prooftree}
\]
which concerns the composition of a process $\Fork\x\y{\Link\y\z}P$ that
delegates $z$ to another process $\Join\x\y.Q$. Note that the sender delegates a
channel of type $T$ while the receiver expects to receive a channel of type $S$.
In order for the cut to be well typed we must have
$\cc{\Times{U}{V}}{\Par{S}{W}}$, that is $\correct{U}{S}$ and $\correct{V}{W}$.
The application of \LinkRule requires $\dual{U} \subt T$, that is $\dual{T}
\subt U$ by \Cref{prop:dual-subtype}. From this relation and $\correct{U}{S}$ we
deduce $\correct{\dual{T}}{S}$ by definition of $\subt$. Using
\Cref{thm:correct-subt} we obtain $\dual{T} \subt \dual{S}$, that is $S \subt
T$.
\section{Comparison With Other Subtyping Relations}
\label{sec:subtyping-inclusions}

\newcommand\ett{\text{FFST}\xspace}

In this section we provide a more detailed comparison between $\subt$ and some
connected subtyping relations appeared in the literature.
We show that $\subt$ is coarser than these relations when we focus on the family
of \emph{First-order Fairly-terminating Session Types} (\ett). Fairly
terminating session types are those describing protocols that can always
eventually terminate.
The reason why this family is relevant for us is that \emph{fairly terminating
session types are the only inhabited types in our type system}: since well-typed
processes are fairly terminating (\cref{cor:ft}), they only use sessions that
can always eventually terminate.
For simplicity we also focus on \emph{first-order} session types (without the
multiplicative connectives $\Times{}{}$ and $\Par{}{}$) because the handling of
higher-order communications varies substantially depending on whether the type
system is based on linear logic or not.

To carry out the comparison between $\subt$ and the other subtyping relations we
define two restricted versions of the transition relation $\lred\Action$. We
write $\lmust\Action$ for the relation inductively defined by the
\rulename{must-*} rules and we write $\lind\Action$ for the relation inductively
defined by the \rulename{must-*} and \rulename{may-*} rules. So we have
${\lmust\Action} \subset {\lind\Action} \subset {\lred\Action}$ but the
relation $\lmust\Action$ only allows \emph{immediate transitions} and
$\lind\Action$ differs from $\lred\Action$ because early outputs and late inputs
can only be preceeded by a bounded number of inputs and outputs, respectively.
For example, if $S = \With{\TagA : S, \TagB : \Plus{\TagC : \One}}$ and $T =
\With{\TagA : T, \TagB : \One}$, we have $S \lred{\out\TagC} T$ and $S
\nlind{\out\TagC}$.

We define fairly terminating session types thus:

\begin{definition}
    \label{def:ft-type}
    We say that $S$ is \emph{fairly terminating} if, for every maximal strongly
    fair sequence $S = S_0 \lmust{\Action_1} S_1 \lmust{\Action_2} S_2
    \lmust{\Action_3} \cdots$ starting from $S$, there exists $k$ such that $S_k
    \in \set{\One,\Bot}$.
\end{definition}

Gay and Hole~\cite{GayHole05} introduced the first \emph{synchronous subtyping
relation} for session types. We qualify this relation as ``synchronous'' because
it does not allow any form of output anticipation. However, it supports
unconstrained output covariance and input contravariance as shown in its
characterization below, which uses immediate transitions only.

\begin{definition}
    \label{def:ghsubt}
    We write $\ghsubt$ for the largest relation such that $S \ghsubt T$ implies:
    \begin{enumerate}
    \item\label{gh:pol} either $\positive(S)$ or $\negative(T)$;
    \item\label{gh:inp} if $T \lmust{\inp\MessageType} T'$ then $S
        \lmust{\inp\MessageType} S'$ and $S' \ghsubt T'$;
    \item\label{gh:out} if $S \lmust{\out\MessageType} S'$ then $T
        \lmust{\out\MessageType} T'$ and $S' \ghsubt T'$.
    \end{enumerate}
\end{definition}

Mostrous~\etal~\cite{MostrousYoshidaHonda09,MostrousYoshida15} studied the first
\emph{asynchronous subtypings} for session types.
Chen~\etal~\cite{ChenDezaniScalasYoshida17} subsequently refined the relation of
Mostrous and Yoshida~\cite{MostrousYoshida15} so as to avoid orphan messages.
The resulting relation has been shown to be the largest one included in the
original asynchronous subtyping that is closed under
duality~\cite{BravettiCZ18}.
Using our LTS we can characterize the asynchronous subtyping relations in this
line of research by the following definition.

\begin{definition}
    \label{def:mysubt}
    We write $\mysubt$ for the largest relation such that $S \mysubt T$ implies:
    \begin{enumerate}
    \item\label{emy:pol} either $\positive(S)$ or $\negative(T)$;
    \item\label{emy:inp} if $T \lmust{\inp\MessageType} T'$ then $S
        \lind{\inp\MessageType} S'$ and $S' \mysubt T'$;
    \item\label{emy:out} if $S \lmust{\out\MessageType} S'$ then $T
        \lind{\out\MessageType} T'$ and $S' \mysubt T'$.
    \end{enumerate}
\end{definition}

Like $\ghsubt$, also $\mysubt$ supports unconstrained contravariance of inputs
and covariance of outputs but the clauses \eqref{emy:inp} and \eqref{emy:out}
are more genereous because early/late transitions enable the anticipation of
outputs.
For example, $\Plus{\TagA : \With{\TagB : S, \TagC : \With{\TagD : T}}} \mysubt
\With{\TagB : \Plus{\TagA : S}, \TagC : \With{\TagD : \Plus{\TagA : T}}}$.
However, the use of $\lind\Action$ (instead of $\lred\Action$) in
\cref{def:mysubt} means that anticipated outputs can only be preceeded by a
bounded number of inputs.
If we consider $S = \With{\TagB : S, \TagC : \One}$ and $T = \With{\TagB : T,
\TagC : \Plus{\TagA : \One}}$ we have $\Plus{\TagA : S} \not\mysubt T$ because
$S \lmust{\out\TagA}$ and $T \nlind{\out\TagA}$.
Note that clause \eqref{emy:inp} incorporates the condition of
Chen~\etal~\cite{ChenDezaniScalasYoshida17} that guarantees orphan-message
freedom: if the subtype $S$ anticipates an output and $T$ starts with an input,
\ie $S \lmust{\out\MessageType}$ and $T \lmust{\inp\MessageTypeT}$ for some
$\MessageType$ and $\MessageTypeT$, then clause \eqref{emy:inp} guarantees that
$S$ performs a corresponding late input transition.

The use of $\lind\Action$ is both a strength and a limitation of $\mysubt$. It
is a strength because, just like $\ghsubt$, the \emph{property that $\mysubt$
prevents orphan messages holds regardless of any fairness assumption.} It is a
limitation in the sense that, as we have seen since \cref{sec:introduction},
there are contexts in which it is desirable to work with a coarser asynchronous
subtyping relation.
To overcome this limitation, Bravetti~\etal~\cite{BravettiLangeZavattaro24} have
characterized a \emph{fair asynchronous subtyping relation} which, in our
setting, translates to the use of $\lred\Action$ instead of $\lind\Action$, as
shown below.

\begin{definition}
    \label{def:blzsubt}
    We write $\blzsubt$ for the largest relation such that $S \blzsubt T$ implies:
    \begin{enumerate}
    \item\label{blz:pol} either $\positive(S)$ or $\negative(T)$;
    \item\label{blz:inp} if $T \lmust{\inp\MessageType} T'$ then $S
        \lred{\inp\MessageType} S'$ and $S' \blzsubt T'$;
    \item\label{blz:out} if $S \lmust{\out\MessageType} S'$ then $T
        \lred{\out\MessageType} T'$ and $S' \blzsubt T'$ and $T
        \xlmust{\Actions\out\MessageTypeT}$ implies $S \lmust{\out\MessageTypeT}$
        for every sequence $\Actions$ of input labels.
    \end{enumerate}
\end{definition}

Note that the clauses (\ref{blz:inp}--\ref{blz:out}) of \cref{def:blzsubt} are
not symmetric, implying that $\blzsubt$ is not closed under duality. While
clause \eqref{blz:inp} supports input contravariance, differently from
all the other considered subtypings \emph{clause
\eqref{blz:out} disallows output covariance altogether}.
If we take $S = \Plus{ \TagTask : S, \TagStop : T}$, $T = \With{\TagRes : T,
\TagStop : \Bot}$ and $U = \With{\TagTask : \Plus{\TagRes : U}, \TagStop :
\Plus{\TagStop : \One}}$ from \cref{ex:server-worker-correct}, we have $U
\not\blzsubt \dual{S}$ because $\dual{S}
\xlmust{\inp\TagStop}\xlmust{\out\TagRes}$ but $U
\xlmust{\inp\TagStop}\nlmust{\out\TagRes}$.
This formulation of clause \eqref{blz:out} is due to the fact that $\blzsubt$,
unlike $\ghsubt$, $\mysubt$ and $\subt$ as well, is meant to preserve fair
session termination (at the type level) which requires a controlled form of
output covariance that Bravetti~\etal~\cite{BravettiLangeZavattaro24} have
conservatively approximated in this way.
The next example shows that removing the additional conditions in
clause~\eqref{blz:out} may compromise fair termination.

\begin{example}
    \label{ex:slot-machine}
    Consider the session types $S = \With{\TagPlay : \Plus{\TagWin : S, \TagLose
    : S}, \TagQuit : \One}$ and $U = \Plus{\TagPlay : \With{\TagWin :
    \Plus{\TagQuit : \Bot}, \TagLose : U}}$ where $S$ specifies the behavior of
    a slot machine that allows players to $\TagPlay$ an unbounded number of
    games and $U$ specifies the behavior of a player who plays relentlessly
    until he $\TagWin$s a game. Not only do we have $\correct{S}{U}$, but also
    the composition of $S$ and $U$ is fairly terminating in the sense that, at
    each stage of the interaction between player and slot machine, it is always
    possible to extend the interaction so that the player wins.
    If we now consider an \emph{unfair} implementation of the slot machine such
    that its actual behavior is described by the session type $T =
    \With{\TagPlay : \Plus{\TagLose : T}, \TagQuit : \Bot}$, we have that $T
    \subt S$ and therefore $\correct{T}{U}$ still holds, but the composition of
    $T$ and $U$ is no longer fairly terminating.
    Therefore $T \not\blzsubt S$ and rightly so.
    In our setting, preserving fair session termination at the type level is not
    so important because the type system is able to enforce fair termination at
    the process level anyway thanks to its logical foundation.
    \eoe
\end{example}

% We can now state our inclusion results.

\begin{restatable}{theorem}{thminclusions}
    \label{thm:inclusions}
    For \ett session types we have ${\ghsubt} \subset {\mysubt}$ and
    ${\mysubt} \subset {\subt}$ and ${\blzsubt} \subset {\subt}$.
\end{restatable}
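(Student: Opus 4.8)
The statement bundles three inclusions with their strictness. For the inclusions I would use the ``largest relation'' characterizations together with the chain ${\lmust\Action}\subseteq{\lind\Action}\subseteq{\lred\Action}$ and \Cref{thm:subt}. The inclusion $\ghsubt\subseteq\mysubt$ is immediate by coinduction: for $(S,T)\in\ghsubt$ the clauses of \Cref{def:ghsubt}, combined with ${\lmust\Action}\subseteq{\lind\Action}$, show that the clauses of \Cref{def:mysubt} hold with $\ghsubt$ itself in the role of the relation, so $\ghsubt\subseteq\mysubt$ by maximality of $\mysubt$ (no \ett hypothesis is needed here).

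\textbf{The inclusions into $\subt$.} For $\mysubt\subseteq\subt$ and $\blzsubt\subseteq\subt$, by \Cref{thm:subt} it suffices to show that $\mysubt$ and $\blzsubt$, restricted to \ett types, are coinductive asynchronous subtypings (\Cref{def:cas}). Clause \rcas{pol} is the polarity clause of \Cref{def:mysubt,def:blzsubt}, and \rcas{inps} and \rcas{outs} are vacuous in the first-order fragment. The substance is \rcas{inp} and \rcas{out}: these fire on \emph{arbitrary} transitions, whereas the clauses of \Cref{def:mysubt,def:blzsubt} fire only on \emph{immediate} ones, so I must promote the immediate-trigger clauses to full-transition clauses. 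For an immediate input of $T$ (resp.\ output of $S$) this is trivial since ${\lind\Action}\subseteq{\lred\Action}$. For a late input $T\lred{\inp\MessageType}T'$ I exploit the structural fact that, in the first-order fragment, $T$ must be an internal choice $\Plus{\Tag_i:T_i}_{i\in I}$ with $T'=\Plus{\Tag_i:T_i'}_{i\in I}$ and $T_i\lred{\inp\MessageType}T_i'$ for every $i$; the polarity clause forces $S$ positive, and inspecting the immediate transitions available to $S$ shows that either $S=\Zero$ (trivial, $\Zero$ being least for $\subt$) or $S$ is itself an internal choice whose branches re-tag a sub-collection of those of $T$ with pairwise related continuations. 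I then corecursively mirror the given derivation of $T\lred{\inp\MessageType}T'$ into a derivation of $S\lred{\inp\MessageType}S'$, whose leaves are exactly the points where $T$ does an \emph{immediate} input and the clause of \Cref{def:mysubt} (resp.\ \Cref{def:blzsubt}) applies; \Cref{thm:fas} together with the fair termination of the types involved guarantees that this mirrored derivation really reaches such leaves along every branch and hence is valid for the generalized inference system, the corule \CoPlusT supplying the required finite inductive sub-derivations. One checks that the continuation $(S',T')$ again satisfies the hypotheses, so the argument closes. Clause \rcas{out} is dual, with late inputs of $T$ replaced by early outputs of $S$ (which force $S=\With{\Tag_i:S_i}_{i\in I}$), and for $\blzsubt$ the proof is identical, the extra conjunct in clause \eqref{blz:out} being unused.

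\textbf{Strictness.} For $\ghsubt\subsetneq\mysubt$ take the pair $\Plus{\TagA:\With{\TagB:\One}}$ and $\With{\TagB:\Plus{\TagA:\One}}$: it lies in $\mysubt$ (readily checked: the subtype anticipates the output $\TagA$ ahead of the single input $\TagB$) but not in $\ghsubt$, since the supertype starts with an immediate input while the subtype starts with an immediate output and offers no matching immediate input, violating clause \eqref{gh:inp}. For $\mysubt\subsetneq\subt$ and $\blzsubt\subsetneq\subt$ a single witness suffices: with $S$, $T$, $U$ as in \Cref{ex:server-worker-correct}, \Cref{thm:correct-subt} turns $\correct{S}{U}$ into $S\subt\dual{U}$, hence $U\subt\dual{S}$ by \Cref{prop:dual-subtype}; but $U\not\mysubt\dual{S}$, because matching the $\TagTask$-inputs of $\dual{S}$ forces a pair in which the subtype outputs $\TagRes$ while $\dual{S}$ can output $\TagRes$ only \emph{early}, anticipating unboundedly many inputs, which is not permitted by ${\lind\Action}$; and $U\not\blzsubt\dual{S}$, because $\dual{S}\xlmust{\inp\TagStop}\xlmust{\out\TagRes}$ while $U\xlmust{\inp\TagStop}\nlmust{\out\TagRes}$, so clause \eqref{blz:out} fails. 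All witnesses are \ett.

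\textbf{Main obstacle.} The delicate step is the promotion of immediate-trigger clauses to full-transition clauses, i.e.\ making the corecursive push-through construction rigorous: it must produce derivations valid for the generalized inference system --- a coinductive derivation \emph{together with} a finite inductive sub-derivation at every judgment --- and it must keep the generated continuation pairs both inside the relation and inside \ett. Fair termination is genuinely indispensable here: dropping it breaks the inclusion, as the types $S=\With{\TagA:S,\TagB:\Plus{\TagC:\One}}$ and $T=\With{\TagA:T}$ of \Cref{ex:failed-variance} satisfy $S\mysubt T$ and $S\blzsubt T$ yet $S\not\subt T$.
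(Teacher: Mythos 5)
Your proposal is correct and follows essentially the same route as the paper: both reduce the inclusions into $\subt$ to showing that the weaker relations (restricted to \ett types) satisfy the clauses of \Cref{def:cas} and then invoke \Cref{thm:subt}, with the crux being the same promotion of the immediate-trigger clauses to arbitrary transitions, justified by a fair-termination argument that turns the mirrored coinductive derivation into a valid GIS derivation (the paper packages this as \Cref{lem:coind,lem:aux-inp} for an auxiliary relation $\auxsubt$ subsuming both $\mysubt$ and $\blzsubt$, and handles the output clause by duality rather than by a direct mirror argument). Your strictness witnesses likewise match, in spirit, the examples the paper points to in \Cref{sec:subtyping-inclusions}.
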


\section{Related Work}
\label{sec:related}

\subparagraph*{Asynchronous subtyping.}
The asynchronous subtyping relations for binary session types defined in the
literature~\cite{MostrousYoshida15,ChenDezaniScalasYoshida17,BravettiLangeZavattaro24}
are formulated syntactically using \emph{input contexts} to identify the output
actions in the larger session type that have been anticipated in the smaller
session type. The work of Ghilezan \etal~\cite{GhilezanEtAl23} considers
multiparty sessions: in their setting a smaller session type can anticipate
outputs also w.r.t. outputs sent to a different partner. This is achieved by
considering also appropriate \emph{output contexts}. We follow a different
approach: once we have defined a LTS for session types that captures their
asynchronous semantics (\cref{def:subt,def:cas}), we are able to provide a
characterization of $\subt$ that is essentially the same as the one for
synchronous subtyping (\Cref{def:ghsubt}).

It is worth pointing out that there is no contradiction in the fact that our
subtyping relation turns out to be coarser than others that have been proved
\emph{complete}~\cite{BravettiZ21} or
\emph{precise}~\cite{ChenDezaniScalasYoshida17,GhilezanEtAl23}. The point is
that the completeness of a subtyping relation for session types is always
relative to a notion of correct session composition. Depending on this notion,
the induced subtyping relation may vary. Our subtyping relation is coarser than
$\mysubt$~\cite{MostrousYoshida15,ChenDezaniScalasYoshida17} because it relies
on a fairness assumption that enables a larger degree of output anticipation and
it is coarser than $\blzsubt$~\cite{BravettiLangeZavattaro24} because it is not
meant to preserve fair session termination and therefore it allows (almost)
unconstrained output covariance.

Asynchronous subtyping relations for session types are known to be
undecidable~\cite{LangeY17,BravettiCZ17,BravettiCZ18} and $\subt$ is no
exception (\cref{thm:undecidable,thm:correct-subt}). Despite the proof technique
adopted to prove our undecidability results is similar to those used in other
papers~\cite{BravettiCZ17,BravettiLangeZavattaro24}, we could not directly
derive our results from the undecidability of other relations because $\subt$ is
coarser.

A synchronous version of fair subtyping for session types has been studied by
Padovani~\cite{Padovani16}.
This relation is stricter than $\subt$ because it does not allow early outputs
and it is meant to preserve fair session termination.
Unlike the other fair/asynchronous subtyping relations for session types
(with the exception of the orphan message free subtyping of 
Chen \etal~\cite{ChenDezaniScalasYoshida17}), $\subt$ is
closed under duality (\Cref{prop:dual-subtype}).
% This property is used
% extensively in the soundness proofs of the type system.

\subparagraph*{Fair session termination.}
Type systems ensuring the fair termination of binary sessions have been studied
by Ciccone, Dagnino and
Padovani~\cite{CicconePadovani22A,CicconePadovani22C,Padovani22,CicconeDagninoPadovani24}.
These works are all based on a synchronous communication model.
The technique adopted in this paper for measuring processes is essentially the
same presented by Dagnino and Padovani~\citep{DagninoPadovani24} which in turn
is an elaboration of resource-aware session type
systems~\citep{DasHoffmannPfenning18,DasEtAl21,DasPfenning22}.
Fair termination is a weaker form of
termination~\cite{GrumbergFrancezKatz84,AptFrancezKatz87,Francez86}. Session
type systems enforcing the termination of well-typed processes are typically
based on linear logic~\cite{Wadler14,LindleyMorris16} where termination is a
direct consequence of the cut elimination property of the logic.

\subparagraph*{Logical approaches to asynchrony and subtyping.}
The literature on asynchronous sessions is mostly based on calculi that are not
connected to linear logic and where buffers are modeled explicitly. A notable
exception is the work of DeYoung~\etal~\cite{DeYoungCairesPfenningToninho12}
which presents a type system based on linear logic for an asynchronous calculus
of sessions. In that work, asynchrony is intended as the fact that outputs are
non-blocking, but no anticipation is allowed. The basic idea is the same used in
the encoding of binary session into the linear
$\pi$-calculus~\citep{DardhaGiachinoSangiorgi17}: session communications make
use of explicit continuation channels and outputs can be spawned as soon as they
are produced leaving the sender process free to reduce further.

The deep cut reductions of \Calculus bear some similarities with \emph{deep
inference}~\cite{GuglielmiStrassburger01}, whereby logical rules (including the
cut) may operate on ``deep'' fragments of a proof derivation. At this stage we
are unable to say whether there is a more meaningful connection between our
modeling of asynchrony and deep inference.

The use of links to incorporate subtyping in the type system echoes the approach
studied by Horne and Padovani~\cite{HornePadovani24} who give a \emph{coercive
semantics} to subtyping in a logical setting: a subtyping relation $S \subt T$
corresponds to a forwarder process (\ie a link) that consumes a channel of type
$\dual{S}$ and operates on a channel of type $T$. In \Calculus, the availability
of the native link is fundamental since buffers may grow arbitrarily and the
forwarder process is neither finitely representable nor finitely measurable in
general.
\section{Concluding Remarks}
\label{sec:conclusion}

Sessions are meant to enable the compositional enforcement of safety and
liveness properties of communicating processes. However, most liveness
properties concerning one particular session -- like the fact that a given
message is eventually produced or consumed -- can only be ensured if one makes
the assumption that every other session eventually terminates. For this reason,
the eventual termination of sessions is of primary importance. From the property
that a session eventually terminates, other properties of interest usually
follow ``for free'' as a straightforward consequence of session typing.

In this paper we have introduced a new theory of asynchronous session types that
ensures the eventual termination of every session under a full fairness
assumption. As a consequence, every produced message is guaranteed to be
eventually consumed and every process waiting for a message is guaranteed to
eventually receive one.
This theory improves previous ones in a number of ways: we define a novel fair
asynchronous subtyping relation that supports the anticipation of outputs before
an unbounded number of inputs, output covariance and that is closed under
duality; we extend the soundness properties to multiple, possibly interleaved
and dynamically created sessions; we base our theory on a calculus of
asynchronous processes whose features, types and typing rules are rooted in
linear logic.

Clearly, the undecidability results about the correctness of session composition
and of fair asynchronous subtyping are an obstacle to the applicability of the
presented theory. While decidable approximations of these notions have been
presented in the
literature~\cite{BravettiCLYZ21,BravettiLangeZavattaro24,BocchiKM24}, they cover
relatively small families of session types. In future work we envision the
possibility to define sound (but necessarily incomplete) algorithms for checking
correct composition and subtyping inspired by the coinductive characterizations
of correct session composition and of fair asynchronous subtyping introduced in
this paper.

We also conjecture that our semantic approach to subtyping based on an LTS with
early/late transitions scales smoothly to the multiparty setting where it may
lead to simpler characterizations of (fair) asychronous subtyping relations. It
may be interesting to verify the validity of this conjecture in future work.

\bibliographystyle{plainurl}
\bibliography{main}

\appendix
\section{Supplement to Section \ref{sec:language}}
\label{sec:proofs-language}

\begin{lemma}
    \label{lem:feasibility}
    Every finite reduction sequence of $P$ can be extended to a fair run of $P$.
\end{lemma}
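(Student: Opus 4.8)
The plan is a short case analysis on the last process of the given finite reduction sequence $P = P_0 \red P_1 \red \cdots \red P_n$, using only the definition of run, the definition of fair run (\Cref{def:fair-run}), and one elementary observation about weak termination; in particular this does not even require \Cref{thm:proof-principle}.

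The key observation is that weak termination propagates \emph{backwards} along reductions: if $P' \wred Q$ and $Q$ is weakly terminating, then $P'$ is weakly terminating, since prefixing a finite run $Q \red \cdots \nred$ of $Q$ with a reduction sequence from $P'$ to $Q$ yields a finite run of $P'$. Taking the contrapositive, if $P_n$ is \emph{not} weakly terminating then no process reachable from $P_n$ is weakly terminating either; moreover no such process can be $\nred$ (a deadlocked reachable process would itself constitute a finite run of $P_n$), so from $P_n$ one can always keep reducing.

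Now I distinguish two cases. If $P_n$ is weakly terminating, pick a finite run $P_n \red Q_1 \red \cdots \red Q_m \nred$ of $P_n$ (one exists by definition of weak termination, with $m = 0$ when $P_n \nred$) and form $P_0 \red \cdots \red P_n \red Q_1 \red \cdots \red Q_m \nred$; this is a finite — hence maximal — reduction sequence of $P$ extending the given one, and being finite it contains only finitely many processes, so it is a fair run. If instead $P_n$ is not weakly terminating, then by the observation above $P_n$ admits an infinite run $P_n \red Q_1 \red Q_2 \red \cdots$ (obtained by dependent choice, always extending since every reachable process reduces). Concatenating it with the given prefix gives an infinite, hence maximal, reduction sequence of $P$; among its processes only $P_0, \dots, P_{n-1}$ can possibly be weakly terminating (each $Q_i$ is reachable from $P_n$ and so is not), so it contains at most $n$ weakly terminating processes and is therefore a fair run.

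I do not expect a genuine obstacle here; the only point needing care is the backward-propagation of weak termination together with its corollary that no deadlock is reachable from a non-weakly-terminating process, which is exactly what guarantees that the infinite run required in the second case actually exists.
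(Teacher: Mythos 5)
Your proof is correct and follows essentially the same route as the paper's: a case split on whether $P_n$ is weakly terminating, extending by a finite run in the first case and by an infinite run containing no weakly terminating process in the second. The only difference is that you spell out the backward-propagation of weak termination (and the non-reachability of stuck processes from a non-weakly-terminating one), which the paper leaves implicit when it asserts the existence of the infinite run with finitely many weakly terminating states.
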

\begin{proof}
    Consider a finite reduction sequence $(P_0,\dots,P_n)$ of $P$. We
    distinguish two possibilities.
    If $P$ is weakly terminating, then there exists a run $(P_n, \dots, Q)$ of
    $P_n$ such that $Q \nred$ and it suffices to consider the run
    $(P_0,\dots,P_n,\dots,Q)$ of $P$ which is finite, hence fair.
    If $P_n$ is diverging, then there exists an infinite run
    $(P_n,P_{n+1},\dots)$ in which no process is weakly terminating. Hence, it
    suffices to consider the run $(P_0,\dots,P_n,P_{n+1},\dots)$ of $P$ which
    contains finitely many weakly terminating states.
\end{proof}

\thmproofprinciple*
\begin{proof}
    ($\Rightarrow$)
    Consider the reduction sequence $(P,\dots,Q)$ corresponding to $P \wred Q$.
    From \Cref{lem:feasibility} we can extend this reduction sequence to a fair
    run $(P,\dots,Q,Q_1,\dots)$ of $P$. From the hypothesis that $P$ is fairly
    terminating we deduce that such run is finite. Let $Q_n$ be the last process
    in this run. Now $Q \wred Q_n \nred$, namely $Q$ is weakly terminating.

    ($\Leftarrow$)
    Consider a fair run $(P_0,\dots)$ of $P$ and suppose by contradiction that
    it is infinite. Every process $P_i$ in this run has the property $P \wred
    P_i$. From the hypothesis we deduce that every $P_i$ is weakly terminating.
    But then the run contains infinitely many weakly terminating processes,
    which contradicts the hypothesis that the run was fair.
\end{proof}

\section{Supplement to Section~\ref{sec:session-types}}
\label{sec:proofs-session-types}

\thmfas*
\begin{proof}    
    \newcommand{\rank}[1]{\mathsf{rank}(#1)}
    We prove the result when $\Action$ is an input transition
    $\inp\MessageType$, since the case when $\Action$ is an output transition is
    analogous.

    ($\Rightarrow$)
    By definition of GIS~\citep{AnconaDagninoZucca17,Dagnino19}, $S
    \lred{\inp\MessageType}$ implies $S \ired{\inp\MessageType}$. In the
    derivation of $S \lred{\inp\MessageType}$ we can rank each session type $S'$
    occurring in a judgment $S' \lred{\inp\MessageType}$ by the size of the
    (finite) proof of $S' \ired{\inp\MessageType}$, which we denote by
    $\rank{S'}$.
    Now suppose that there is an infinite, strongly fair sequence $S = S_1
    \lred{\out\MessageType_1} S_2 \lred{\out\MessageType_2} \cdots$ of immediate
    output transitions starting from $S$ and observe that each $S_i$ is a
    sub-tree of $S$.
    Since $S$ is regular it has finitely many distinct sub-trees. Therefore, at
    least one among the $S_i$, say $S_{k_1}$, occurs infinitely often in the
    sequence $S_1S_2\cdots$.
    Observe that $S_{k_1}$ occurs in a judgment $S_{k_1}
    \lred{\inp\MessageType}$ of the aforementioned derivation of $S
    \lred{\inp\MessageType}$.
    The judgment $S_{k_1} \lred{\inp\MessageType}$ cannot have been derived with
    an application of \BotT, \WithT or \ParT, because $S_{k_1}$ performs an
    immediate output transition. Therefore, $S_{k_1} \lred{\inp\MessageType}$
    must have been derived by either \APlusT or \ATimesT. Moreover, at least one
    of the premises in the inductive derivation of $S_{k_1}
    \ired{\inp\MessageType}$ is a judgment $S_{k_2} \ired{\inp\MessageType}$
    where $\rank{S_{k_2}} < \rank{S_{k_1}}$.
    Note that $S_{k_1} \lred{\out\MessageTypeT} S_{k_2}$ for some
    $\MessageTypeT$. Hence, from the hypothesis that the sequence of transition
    is strongly fair, we deduce that $S_{k_2}$ occurs infinitely often as well.
    By repeating this argument we are able to build an infinite sequence
    $S_{k_1}S_{k_2}\cdots$ of session types with strictly decreasing ranks,
    which is absurd. We conclude that every strongly fair sequence of immediate
    output transitions starting from $S$ is finite.
    Let $S = S_1 \lred{\out\MessageType_1} S_2 \lred{\out\MessageType_2} \cdots
    S_n$ be one of such sequences and let $T \eqdef S_n$. Since this sequence is
    maximal, it must be the case that $T$ does not perform output transitions.
    Also, from the hypothesis $S \lred{\inp\MessageType}$ it is easy to prove,
    by induction on $n$, that $T \lred{\inp\MessageType}$.
    Therefore, $T$ is a session type that does not perform immediate output
    transitions and that performs a $\inp\MessageType$ transition. By inspection
    of \Cref{tab:lts} this means that the transition $T \lred{\inp\Message}$ has
    been derived using an axiom.

    ($\Leftarrow$)
    First we show how to build a (possibly infinite) derivation for $S
    \lred{\inp\MessageType}$ using the singly-lined rules in \Cref{tab:lts}
    under the hypothesis that every strongly fair sequence of immediate output
    transitions starting from $S$ is finite.
    We reason by cases on the shape of $S$ to build the derivation one layer at
    a time.
    \begin{itemize}
    \item ($S = \One$)
        This case is impossible because $\One \lred{\out\unit} \One
        \lred{\out\unit} \cdots$ is an infinite strongly fair sequence of
        immediate output transitions starting from $\One$.
    \item ($S = \Bot$)
        Then $\MessageType = \unit$ and we conclude with an application of
        \BotT.
    \item ($S = \Plus{\Tag_i : S_i}_{i\in I}$)
        Then $S \lred{\out\Tag_i} S_i$ for every $i\in I$. Since every strongly
        fair sequence of immediate output transitions starting from $S$ is
        finite, then the same holds for $S_i$ for every $i\in I$.
        We conclude with an application of \APlusT.
    \item ($S = \With{\Tag_i : S_i}_{i\in I}$)
        Then $\MessageType = \inp\Tag_i$ and we conclude with an application of
        \WithT.
    \item ($S = \Times{T'}{S'}$)
        Then $S \lred{\out\dual{T'}} S'$.
        Since every strongly fair sequence of immediate output transitions
        starting from $S$ is finite, then the same holds for $S'$.
        We conclude with an application of \ATimesT.
    \item ($S = \Par{T'}{S'}$)
        Then $\MessageType = \inp T'$ and we conclude with an application of
        \ParT.
    \end{itemize}

    Next we have to show that, for each judgment of the form $S_i
    \lred{\inp\MessageType}$ that occurs in the aforementioned derivation, we
    are able to build a \emph{finite} derivation for $S_i
    \ired{\inp\MessageType}$.
    Recall that, for each $S_i$, every strongly fair sequence of immediate
    output transitions starting from $S_i$ is finite.
    The result follows by induction on the length of one of such strongly fair
    sequences, knowing that at least one does exist.\LP{Qui servirebbe il
    risultato di feasibility}
    The only interesting case is when $S_i = \Plus{\Tag_j : S_j}_{j\in J}$. Then
    the strongly fair sequence of immediate output transitions begins with $S_i
    \lred{\out\Tag_j} S_j$ for some $j\in J$.
    Using the induction hypothesis we build a derivation of $S_j
    \lred{\inp\MessageType}$ and then we conclude with an application of
    \CoPlusT.
\end{proof}

\propdiamond*
\begin{proof}
    Since $S$ is either positive or negative, then one among the transitions $S
    \lred{\inp\MessageTypeS} S'$ and $S \lred{\out\MessageTypeT} S''$ must be
    immediate. We discuss the case in which $S \lred{\out\MessageTypeT} S''$ is
    immediate, the other being symmetric. We proceed by cases on the rule used
    to derive $S \lred{\out\MessageTypeT} S''$.

    \proofcase\OneT
    Then $S = \One$, but this case is impossible because $\One$ does not perform
    input transitions.

    \proofcase\PlusT
    Then $S = \Plus{\Tag_i : S_i}_{i\in I}$ and $\MessageTypeT = \Tag_k$ and
    $S'' = S_k$ for some $k\in I$.
    From the hypothesis $S \lred{\inp\MessageTypeS} S'$ and \APlusT we deduce
    $S_i \lred{\inp\MessageTypeS} T_i$ for every $i\in I$ and $S' = \Plus{\Tag_i
    : T_i}_{i\in I}$.
    We conclude by taking $T \eqdef T_k$ and observing that $S'
    \lred{\out\Tag_k} T_k$.

    \proofcase\TimesT
    Then $S = \Times{T'}{S''}$ and $\MessageTypeT = T'$.
    From the hypothesis $S \lred{\inp\MessageTypeS} S'$ and \ATimesT we deduce
    $S'' \lred{\inp\MessageTypeS} T$ and $S' = \Times{T'}{T}$.
    We conclude by observing that $S' \lred{\out T'} T$.
\end{proof}

\section{Supplement to Section~\ref{sec:subtyping}}
\label{sec:proofs-subtyping}

\subsection{Properties of Correctness and Subtyping}

\propdualcorrect*
\begin{proof}
It suffices to show that $\set{ (S, \dual{S}) }$ is a coinductive asynchronous
subtyping.
\end{proof}

\thmsubt*
\begin{proof}
    First we prove that if $\srel$ is a coinductive asynchronous subtyping such
    that $(S,T) \in \srel$ and $\correct{R}{T}$, then $\correct{R}{S}$. It
    suffices to show that $\rrel \eqdef \set{ (R, S) \mid \exists T. (S,T) \in
    \srel \wedge \correct{R}{T}}$ is a correct asynchronous composition. Let
    $(R,S) \in \rrel$. Then there exists $T$ such that $(S,T) \in \srel$ and
    $\correct{R}{T}$. We prove the conditions of \Cref{def:cac} in order:
    \begin{enumerate}
    \item From $\correct{R}{T}$ and \rcac{pp} we deduce either $\positive(R)$ or
        $\positive(T)$. From $(S,T) \in \srel$ and \rcas{pol} we deduce either
        $\positive(S)$ or $\negative(T)$. Then either $\positive(R)$ or
        $\positive(S)$.
    \item Suppose $S \lred{\out\MessageType} S'$ and
        $\MessageType\in\set\unit\cup\TagSet$.
        From \rcas{out} we deduce $T \lred{\out\MessageType} T'$ and $(S',T')
        \in \srel$.
        From \rcac{io} we deduce $R \lred{\inp\MessageType} R'$ and
        $\correct{R'}{T'}$. Then $(R',S') \in \rrel$ by definition of $\rrel$.
    \item Suppose $R \lred{\out\MessageType} R'$ and $\MessageType \in \set\unit
        \cup \TagSet$.
        From \rcac{oi} we deduce $T \lred{\inp\MessageType} T'$ and
        $\correct{R'}{T'}$.
        From \rcas{inp} we deduce $S \lred{\inp\MessageType} S'$ and $(S',T')
        \in \srel$. Then $(R',S') \in \rrel$ by definition of $\rrel$.
    \item Suppose $S \xlred{\out S_1} S_2$.
        From \rcas{outs} we deduce $T \xlred{\out T_1} T_2$ and $(S_1, T_1) \in
        \srel$ and $(S_2, T_2) \in \srel$.
        From \rcac{ios} we deduce $R \xlred{\inp R_1} R_2$ and
        $\correct{R_1}{T_1}$ and $\correct{R_2}{T_2}$.
        Then $(R_1, S_1) \in \rrel$ and $(R_2, S_2) \in \rrel$ by definition of
        $\rrel$.
    \item Suppose $R \xlred{\out R_1} R_2$.
        From \rcac{ois} we deduce $T \xlred{\inp T_1} T_2$ and
        $\correct{R_1}{T_1}$ and $\correct{R_2}{T_2}$.
        From \rcas{inps} we deduce $S \xlred{\inp S_1} T_2$ and $(S_1, T_1) \in
        \srel$ and $(S_2, T_2) \in \srel$.
        Then $(R_1, S_1) \in \rrel$ and $(R_2, S_2) \in \rrel$ by definition of
        $\rrel$.
    \end{enumerate}

    Next we prove that $\subt$ is a coinductive asynchronous subtyping. Suppose
    $S \subt T$, then:
    \begin{enumerate}
    \item Suppose by contradiction that $\negative(S)$ and $\positive(T)$. Then
        from the hypothesis $S \subt T$ and the fact that
        $\correct{\dual{T}}{T}$ we deduce $\correct{\dual{T}}{S}$. But this is
        absurd because $\negative(\dual{T})$.
    \item Suppose $T \lred{\inp\unit} T$.
        Then $\dual{T} \lred{\out\unit}$. From $\correct{\dual{T}}{T}$ and $S
        \subt T$ we deduce $\correct{\dual{T}}{S}$ hence $S \lred{\inp\unit} S$
        and there is nothing left to prove.
    \item Suppose $T \lred{\inp\Tag} T'$.
        Let $R'$ be any session type such that $\correct{R'}{T'}$ and let $R =
        \Plus{\Tag : R'}$. Then $\correct{R}{T}$.
        From the hypothesis $S \subt T$ we deduce $\correct{R}{S}$, hence $S
        \lred{\inp\Tag} S'$.
        Also we have $\correct{R'}{S'}$ and since $R'$ is arbitrary we conclude
        $S' \subt T'$.
    \item Suppose $T \xlred{\inp T_1} T_2$.
        Let $R_1$ and $R_2$ by any session types such that $\correct{R_1}{T_1}$
        and $\correct{R_2}{T_2}$ and let $R \eqdef \Times{T_1}{T_2}$. Then
        $\correct{R}{T}$.
        From the hypothesis $S \subt T$ we deduce $\correct{R}{S}$, hence $S
        \xlred{\inp S_1} S_2$.
        Also we have $\correct{R_1}{S_1}$ and $\correct{R_2}{S_2}$ and since
        $R_1$ and $R_2$ are arbitrary we conclude $S_1 \subt T_1$ and $S_2 \subt
        T_2$.
    \item Suppose $S \lred{\out\unit} S$.\LP{Questa proprietà delle transizioni
      con $\unit$ andrebbe dimostrata?}
        By contradiction, suppose $T \nlred{\out\unit}$.
        Now $\correct{\dual{T}}{T}$ and $\dual{T} \nlred{\inp\unit}$, but we
        also have $\correct{\dual{T}}{S}$ from the hypothesis $S \subt T$, which
        is absurd. Hence $T \lred{\out\unit} T$ and there is nothing left to
        prove.
    \item Suppose $S \lred{\out\Tag} S'$.
        By contradiction, suppose $T \nlred{\out\Tag}$.
        Now $\correct{\dual{T}}{T}$ and $\dual{T} \nlred{\inp\Tag}$, but we also
        have $\correct{\dual{T}}{S}$ from the hypothesis $S \subt T$, which is
        absurd. Hence $T \lred{\out\Tag} T'$ for some $T'$.
        Let $R'$ be any session type such that $\correct{R'}{T'}$. Now consider
        $R = \With{\Tag : R'} \cup \set{\TagB : \dual{T'}}_{T \lred{\out\TagB}
        T', \TagA \ne \TagB}$. By construction we have $\correct{R}{T}$, hence
        from the hypothesis $S \subt T$ we deduce $\correct{R}{S}$.
        From \rcac{io} we deduce $\correct{R'}{S'}$, hence $S' \subt T'$ because
        $R'$ is arbitrary.
    \item Suppose $S \lred{\out S_1} S_2$.
        By contradiction, suppose $T \nlred{\out T'}$ for every $T'$. Now
        $\correct{\dual{T}}{T}$ and $\dual{T} \nlred{\inp T'}$ for every $T'$,
        but we also have $\correct{\dual{T}}{S}$ from the hypothesis $S \subt
        T$, which is absurd.
        Hence $T \xlred{\out T_1} T_2$ for some $T_1$ and $T_2$.
        Let $R_1$ and $R_2$ be any session types such that $\correct{R_1}{T_1}$
        and $\correct{R_2}{T_2}$ and consider $R \eqdef \Par{R_1}{R_2}$.
        By construction we have $\correct{R}{T}$, hence from the hypothesis $S
        \subt T$ we deduce $\correct{R}{S}$.
        From \rcac{ios} we deduce $\correct{R_1}{S_1}$ and $\correct{R_2}{S_2}$,
        hence $S_1 \subt T_1$ and $S_2 \subt T_2$ because $R_1$ and $R_2$ are
        arbitrary.
        \qedhere
    \end{enumerate}
\end{proof}

\thmcorrectsubt*
\begin{proof}
  ($\Leftarrow$)
  Suppose $S \subt \dual{T}$. From \Cref{prop:dual-correct} we deduce
  $\correct{T}{\dual{T}}$. From \Cref{def:subt} we conclude
  $\correct{S}{\dual{T}}$.

  ($\Rightarrow$)
  It is enough to prove that $\srel \eqdef \set{(S, \dual{T}) \mid
  \correct{S}{T}}$ is a coinductive asynchronous subtyping.
  Suppose $(S, \dual{T}) \in \srel$. Then $\correct{S}{T}$. We prove that $S$
  and $\dual{T}$ satisfy the conditions of \Cref{def:cas} in order:
  \begin{enumerate}
  \item Suppose by contradiction that $\negative(S)$ and
      $\positive(\dual{T})$. Then $\negative(T)$, which contradicts \rcac{pp}.
      Hence either $\positive(S)$ or $\negative(\dual{T})$, as required by
      \rcas{pol}.
  \item Suppose $\dual{T} \lred{\inp\MessageType} \dual{T'}$ and $\MessageType
      \in \set\unit \cup \TagSet$.
      Then $T \lred{\out\MessageType} T'$.
      From \rcac{io} we deduce $S \lred{\inp\MessageType} S'$ and
      $\correct{S'}{T'}$.
      We conclude $(S',\dual{T'}) \in \srel$ by definition of $\srel$, as
      required by \rcas{inp}.
  \item Suppose $S \lred{\out\MessageType} S'$ and $\MessageType \in \set\unit
      \cup \TagSet$.
      From \rcac{oi} we deduce $T \lred{\inp\MessageType} T'$ and
      $\correct{S'}{T'}$. Then $\dual{T} \lred{\out\MessageType} \dual{T'}$. 
      We conclude $(S',\dual{T'}) \in \srel$ by definition of $\srel$, as
      required by \rcas{out}.
  \item Suppose $\dual{T} \xlred{\inp \dual{T_1}} \dual{T_2}$.
      Then $T \lred{\out T_1} T_2$.
      From \rcac{ios} we deduce $S \lred{\inp S_1} S_2$ and $\correct{S_1}{T_1}$
      and $\correct{S_2}{T_2}$.
      We conclude $(S_1,\dual{T_1}) \in \srel$ and $(S_2, \dual{T_2}) \in \srel$
      by definition of $\srel$, as required by \rcas{inps}.
  \item Suppose $S \xlred{\out S_1} S_2$.
      From \rcac{ois} we deduce $T \xlred{\inp T_1} T_2$ and
      $\correct{S_1}{T_1}$ and $\correct{S_2}{T_2}$. Then $\dual{T}
      \xlred{\out\dual{T_1}} \dual{T_2}$.
      We conclude $(S_1,\dual{T_1}) \in \srel$ and $(S_2,\dual{T_2}) \in \srel$
      by definition of $\srel$, as required by \rcas{outs}.
      \qedhere
  \end{enumerate}
\end{proof}
\subsection{Undecidability Results}
\label{sec:proofs-undecidability}

\begin{definition}[Queue Machine]\label{def:queuemachines}
  A queue machine $M$ is defined by a six-tuple
  $(Q , \Sigma , \Gamma , \$ , s , \delta )$ where:
  \begin{itemize}
  \item $Q$ is a finite set of states;
  \item $\Sigma \subset \Gamma$ is a finite set denoting the input
    alphabet;
  \item $\Gamma$ is a finite set denoting the queue alphabet (ranged
    over by $A,B,C$);
  \item $\$ \in \Gamma -\Sigma$ is the initial queue symbol;
  \item $s \in Q$ is the start state;
  \item $\delta : Q \times \Gamma \rightarrow Q\times \Gamma ^{*}$ is
    the transition function ($\Gamma ^{*}$ is the set of sequences of
    symbols in $\Gamma$).
  \end{itemize}
Considering a queue machine
$M=(Q , \Sigma , \Gamma , \$ , s , \delta )$,
a {\em configuration} of $M$ is an ordered pair
$(q,\gamma)$ where $q\in Q$ is its {\em current state} and
$\gamma\in\Gamma ^{*}$ is the % content of the
{\em queue}.  The
starting configuration on an input string $x \in \Sigma^*$ is 
$(s , x\$)$, composed of
the start state $s$ and the input $x$ followed by the initial queue symbol $\$$.
The transition relation ($\rightarrow_{M}$) over configurations 
$Q \times \Gamma ^{*}$, leading from a configuration to the next one,
is defined as follows.
For $p,q \in Q$, $A \in \Gamma$, and $\alpha,\gamma \in \Gamma ^{*}$, we have
$(p,A\alpha )\rightarrow _{M}(q,\alpha \gamma)$ whenever
$\delta (p,A)=(q,\gamma)$. Let $\rightarrow _{M}^{*}$ be the reflexive and
transitive closure of $\rightarrow _{M}$.
A machine $M$ accepts an input $x$ if it terminates on input $x$, i.e. it reaches
a blocking configuration with the empty queue (notice that, as the transition 
relation is total, the unique way to terminate is by emptying the queue). 
Formally $x$ is accepted by $M$ if and only if there exists $q \in Q$ such that 
$(s , x\$) \rightarrow _{M}^{*} (q,\varepsilon)$, where $\varepsilon$ is the empty string.
\end{definition}

Since queue machines can deterministically encode Turing machines
(see, e.g.,~\cite{KozenBook}, page~354, solution to exercise~99),
checking the acceptance of $x$ by a queue machine $M$ is an
undecidable problem.

\begin{definition}\label{encoding_queue_machines}

%    S, T ::= \One \mid \Bot \mid \Plus{\Tag_i : S_i}_{i\in I} \mid \With{\Tag_i : S_i}_{i\in I} \mid \Times{S}{T} \mid \Par{S}{T}

Consider a queue machine $M=(Q , \Sigma , \Gamma , \$ , s , \delta )$. 
%and an input $x \in \Sigma^*$. 
%Let $E,E'\not\in\Gamma$ be special symbols 
%outside the queue alphabet.
Let $x \in \Gamma^*$ be a sequence of symbols $X_1 X_2 \cdots X_n$; 
we represent the queue containing such sequence $x$ with the session type 
$Q^M_x$ defined as follows:
$$
\begin{array}{llll}
Q^M_x & = & \Plus{X_1 : \Plus{X_2 : \cdots \Plus{X_n : Q^M } } }
	& %\mbox{with $x = X_1 X_2 \cdots X_n$} 
	\\
Q^M & = & \With{A_i :\Plus{A_i : Q^M}}_{i\in I}
	& \mbox{with $I$ s.t. $\{A_i \mid i \in I\} = \Gamma$}
%\sum_{A \in \Gamma} \inp\Tag[A] . \out\Tag[A] . T^M +  \inp\Tag[E] . \out\Tag[E] . T^M_E \\
%T^M_E & = & \sum_{A \in \Gamma} \inp\Tag[A] . \out\Tag[A] . T^M +  \inp\Tag[E] . \out\Tag[E'] . \inp\End \\
\end{array}
$$
%with $I$ s.t. $\{A_i \mid i \in I\} = \Gamma$.
The type $Q^M_x$ represents the content of the queue as a corresponding
sequence of output actions. After such sequence the type becomes the 
recursively defined type $Q^M$: this type starts with an input branching 
with one tag for each symbol $A_i$ in the queue alphabet, after each of
these inputs there is the output of the same tag, and then the type 
becomes $Q_M$ again.

We represent the transition function with the session type $S^M_s$.
In general, for each $q \in Q$, we consider the type $S^M_q$ defined as follows:
%$$
%\begin{array}{llll}
%T^M_x & = & \out\Tag[X_1]. \cdots . \out\Tag[X_n] . \out\Tag[\$] . \out\Tag[E] . T^M 
%	& \mbox{with $x = X_1 \cdots X_n$} \\
%T^M & = & \sum_{A \in \Gamma} \inp\Tag[A] . \out\Tag[A] . T^M +  \inp\Tag[E] . \out\Tag[E] . T^M_E \\
%T^M_E & = & \sum_{A \in \Gamma} \inp\Tag[A] . \out\Tag[A] . T^M +  \inp\Tag[E] . \out\Tag[E'] . \inp\End \\
%\end{array}
%$$
%and the session types $S^M_p$, for every $p \in Q$, as follows:
$$
\begin{array}{llll}
S^M_q & = & \With{A_i :\Plus{B_1 : \Plus{B_2 : \cdots \Plus{B_{n_i} : S^M_{q'} } } }}_{i\in I}
%\sum_{A \in \Gamma} \inp\Tag[A] . \out\Tag[B_1]. \cdots . \out\Tag[B_n] . S^M_{q} 
%			+ \inp\Tag[E] . \out\Tag[E] . S^M_p +  \inp\Tag[E'] . \out\End
	&
	\begin{array}{l}
	\mbox{with $I$ s.t. $\{A_i \mid i \in I\} = \Gamma$}\\ 
	\mbox{and $\delta(q,A_i) = (q',B_1 B_2 \cdots B_{n_i})$} 
	\end{array}
\end{array}
$$
The type $S^M_q$, representing the transition function for the state $q$,
starts with an input action with an input branching 
with one tag for each symbol $A_i$ in the queue alphabet, after each of
these tags there is a sequence of outputs corresponding to the sequence
of symbols enqueued by the transition function after the consumption of 
the corresponding symbol $A_i$, and then the type becomes $S^M_{q'}$ where 
$q'$ is the new state of the queue machine.
\end{definition}

\begin{proposition}\label{from_machines_to_types_prop}
Consider a queue machine $M=(Q , \Sigma , \Gamma , \$ , s , \delta )$
and the transition $(p,\alpha )\rightarrow _{M}(q,\gamma)$,
with $\alpha = A_1 \cdots A_n$ and $\gamma = A_2 \cdots A_n B_1 \cdots B_m$,
because $\delta(p,A_1) = B_1 \cdots B_m$.
We have that $Q^M_\alpha \xlred{\out{A_1} \inp{B_1} \cdots \inp{B_m}} Q^M_\gamma$
and $S^M_p \xlred{\inp{A_1} \out{B_1} \cdots \out{B_m}} S^M_q$.
\end{proposition}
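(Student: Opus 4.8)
The plan is to unfold the notation $\xlred{}$ into a composition of single-step transitions and to check each step directly against the rules in \Cref{tab:lts}. A useful preliminary remark is that every derivation used below is \emph{finite}, so the generalized-inference-system subtlety (the corule requirement) never bites: a finite derivation built with the singly-lined rules is automatically a witness for the inductive part of the GIS as well, since the singly-lined rules are among those defining it.

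I would first dispatch the second claim, which involves only immediate transitions. By the encoding of \Cref{encoding_queue_machines} and the hypothesis $\delta(p,A_1) = (q, B_1\cdots B_m)$, the branch of $S^M_p = \With{A_i : \cdots}_{i\in I}$ labelled $A_1$ is the type $\Plus{B_1 : \Plus{B_2 : \cdots \Plus{B_m : S^M_q}}}$. Hence \WithT gives $S^M_p \lred{\inp{A_1}} \Plus{B_1 : \cdots \Plus{B_m : S^M_q}}$, and each subsequent $\out{B_j}$ peels off one $\oplus$ layer by \PlusT, so that after the sequence $\out{B_1}\cdots\out{B_m}$ we have reached exactly $S^M_q$ (when $m=0$ the branch is already $S^M_q$ and the single transition $S^M_p \lred{\inp{A_1}} S^M_q$ suffices). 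Composing these transitions yields $S^M_p \xlred{\inp{A_1}\out{B_1}\cdots\out{B_m}} S^M_q$.

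For the first claim, note that $Q^M_\alpha = \Plus{A_1 : Q^M_{A_2\cdots A_n}}$, so \PlusT gives $Q^M_\alpha \lred{\out{A_1}} Q^M_{A_2\cdots A_n}$; the remaining $m$ input steps are \emph{late} inputs. The key ingredient is the following lemma, proved by induction on $|y|$: for every $y\in\Gamma^*$ and every $B\in\Gamma$ we have $Q^M_y \lred{\inp B} Q^M_{yB}$. If $y=\varepsilon$ then $Q^M_y = Q^M = \With{A_i : \Plus{A_i : Q^M}}_{i\in I}$ with $\{A_i \mid i\in I\} = \Gamma$, so $B = A_k$ for some $k$ and \WithT gives $Q^M \lred{\inp B} \Plus{A_k : Q^M} = Q^M_B$. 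If $y = X y'$ then $Q^M_y = \Plus{X : Q^M_{y'}}$; the induction hypothesis gives $Q^M_{y'} \lred{\inp B} Q^M_{y'B}$, and since this $\oplus$ has a single branch, \APlusT lifts it to $Q^M_y \lred{\inp B} \Plus{X : Q^M_{y'B}} = Q^M_{yB}$. Applying the lemma successively with $y$ taken to be $A_2\cdots A_n$, then $A_2\cdots A_n B_1$, \dots, then $A_2\cdots A_n B_1\cdots B_{m-1}$, and composing with the initial output, yields $Q^M_\alpha \xlred{\out{A_1}\inp{B_1}\cdots\inp{B_m}} Q^M_{A_2\cdots A_n B_1\cdots B_m} = Q^M_\gamma$, using $\gamma = A_2\cdots A_n B_1\cdots B_m$.

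There is no real obstacle in this proof. The only things requiring a bit of care are the index bookkeeping, the corner cases $m=0$ and $n=1$ where some of the intermediate strings of types degenerate to the empty string, and the observation recalled above that the late-input lemma always produces a finite derivation so that those transitions are genuinely derivable in the GIS-defined LTS.
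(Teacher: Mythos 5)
Your proof is correct and is essentially the approach the paper intends: the paper dismisses this proposition as ``Trivial, by definition of $Q^M_\alpha$ and $S^M_p$,'' and your argument is simply the careful unfolding of those definitions (immediate \WithT/\PlusT steps for $S^M_p$, and the finite-derivation late-input lemma $Q^M_y \lred{\inp B} Q^M_{yB}$ for the queue side). The explicit remark that all derivations are finite, so the GIS corule condition is automatically satisfied, is a welcome extra precision rather than a deviation.
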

\begin{proof}
Trivial, by definition of $Q^M_\alpha$ and $S^M_p$.
\end{proof}

\begin{lemma}\label{from_machines_to_types}
Consider a queue machine $M=(Q , \Sigma , \Gamma , \$ , s , \delta )$
and an input $x$ accepted by $M$. We have that 
$\correct{Q^M_{x\$}}{S^M_s}$ does not hold.
\end{lemma}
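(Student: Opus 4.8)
The plan is to argue by contradiction, exploiting the fact that a correct asynchronous composition of $Q^M_{x\$}$ and $S^M_s$ would faithfully simulate the run of $M$ on $x$, so that reaching the empty queue (which happens precisely because $x$ is accepted) would force a pair of syntactically negative session types to belong to $\correct{}{}$, contradicting clause \rcac{pp}.

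First I would fix an accepting computation $(s,x\$) = (p_0,\gamma_0) \rightarrow_M (p_1,\gamma_1) \rightarrow_M \cdots \rightarrow_M (p_K,\gamma_K) = (q,\varepsilon)$, which exists by hypothesis and satisfies $K \geq 1$ since $\gamma_0 = x\$ \neq \varepsilon$. Assuming $\correct{Q^M_{x\$}}{S^M_s}$, I would prove by induction on $k$ that $\correct{Q^M_{\gamma_k}}{S^M_{p_k}}$ holds for every $0 \leq k \leq K$. The base case $k = 0$ is the assumption. For the inductive step, the step $(p_k,\gamma_k) \rightarrow_M (p_{k+1},\gamma_{k+1})$ has $\gamma_k \neq \varepsilon$ (the transition relation of a queue machine can only act on a non-empty queue), say $\gamma_k = A\alpha$ with $\delta(p_k,A)$ enqueuing $B_1\cdots B_m$ and $\gamma_{k+1} = \alpha B_1\cdots B_m$. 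By \Cref{from_machines_to_types_prop} this step corresponds to the transition sequences $Q^M_{\gamma_k} \xlred{\out A\,\inp B_1\cdots\inp B_m} Q^M_{\gamma_{k+1}}$ and $S^M_{p_k} \xlred{\inp A\,\out B_1\cdots\out B_m} S^M_{p_{k+1}}$. I would then propagate membership in $\correct{}{}$ along this interleaving: the immediate output $\out A$ of the positive type $Q^M_{\gamma_k}$ forces, via \rcac{oi} (note $A \in \TagSet$), a matching input $\inp A$ on $S^M_{p_k}$, whose unique $\inp A$-successor is the intermediate type $\Plus{B_1 : \cdots \Plus{B_m : S^M_{p_{k+1}}}}$, so the continuations stay in $\correct{}{}$; then each subsequent output $\out B_j$ performed by the intermediate transition-function types forces, via \rcac{io}, the corresponding (possibly late) input $\inp B_j$ on the evolving queue-side type, which is again the unique such successor because all the types involved are deterministic. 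After $m+1$ such steps one obtains $\correct{Q^M_{\gamma_{k+1}}}{S^M_{p_{k+1}}}$, completing the induction.

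Instantiating the claim at $k = K$ yields $\correct{Q^M_\varepsilon}{S^M_q}$, that is $\correct{Q^M}{S^M_q}$. But by \Cref{encoding_queue_machines} both $Q^M = \With{A_i : \Plus{A_i : Q^M}}_{i\in I}$ and $S^M_q$ are syntactically of the form $\With{\cdots}$, hence $\negative(Q^M)$ and $\negative(S^M_q)$ both hold, which contradicts \rcac{pp}. Therefore $\correct{Q^M_{x\$}}{S^M_s}$ cannot hold.

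I expect the main obstacle to be the bookkeeping in the inductive step: correctly threading the alternation between a single output transition of the queue-side type (matched by \rcac{oi}) and the block of output transitions performed by the transition-function side (each matched by \rcac{io} with a corresponding late input on the queue side), and verifying that at every intermediate point the continuations are actually pinned down — i.e., that the types $Q^M_{(\cdot)}$ and $S^M_{(\cdot)}$ are deterministic enough that clauses \rcac{oi} and \rcac{io}, which only assert the \emph{existence} of a matching successor, force the successor on the intended simulation path. Everything else, in particular the concluding appeal to \rcac{pp}, is immediate once this simulation invariant is in place.
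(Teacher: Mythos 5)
Your proof is correct and follows essentially the same route as the paper: assume $\correct{Q^M_{x\$}}{S^M_s}$, propagate the pair along the accepting run via \Cref{from_machines_to_types_prop} and clauses \rcac{oi}/\rcac{io}, and derive a contradiction with \rcac{pp} at the empty-queue configuration where both $Q^M_\varepsilon$ and $S^M_q$ are negative. The determinism worry you flag is real but harmless here, since every type on the simulation path has a unique successor for each relevant label, so the existentials in \rcac{oi} and \rcac{io} pin down exactly the continuations given by \Cref{from_machines_to_types_prop}.
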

\begin{proof}
By definition, if $x$ is accepted we have that 
$(s , x\$) \rightarrow _{M}^{*} (q,\varepsilon)$.
Suppose now, by contradiction, that $\correct{Q^M_{x\$}}{S^M_s}$.
This means that there exists a relation $\rrel$ which contains
$(Q^M_{x\$},S^M_s)$ and satisfies the properties in \cref{def:cac}.
By $(s , x\$) \rightarrow _{M}^{*} (q,\varepsilon)$ and repeated
application of \cref{from_machines_to_types_prop} (one application
for each transition in the sequence $(s , x\$) \rightarrow _{M}^{*} (q,\varepsilon)$) we have that
also $(Q^M_\varepsilon,S^M_q) \in \rrel$.
We have that $\negative(Q^M_\varepsilon)$ and $\negative(S^M_q)$.
But this contradicts the initial assumption about $\rrel$;
in fact $(Q^M_\varepsilon,S^M_q)$ does not satisfy the condition 
\ref{cac:pp} of \cref{def:cac} because neither $\positive(Q^M_\varepsilon)$ 
nor $\positive(S^M_q)$.
\end{proof}

\begin{lemma}\label{from_types_to_machines}
Consider a queue machine $M=(Q , \Sigma , \Gamma , \$ , s , \delta )$
and an input $x$ not accepted by $M$. We have that 
$\correct{Q^M_{x\$}}{S^M_s}$ holds.
%Consider a queue machine $M=(Q , \Sigma , \Gamma , \$ , s , \delta )$.
%Let $x \in \Gamma^*$ be a sequence of queue symbols such that 
%$\correct{Q^M_{x\$}}{S^M_s}$ does not hold. We have that 
%$x$ is accepted by $M$. 
\end{lemma}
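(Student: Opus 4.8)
The plan is to establish $\correct{Q^M_{x\$}}{S^M_s}$ by exhibiting an explicit correct asynchronous composition $\rrel$ in the sense of \Cref{def:cac} that contains the pair $(Q^M_{x\$},S^M_s)$; since $\correct{}{}$ is the \emph{largest} correct asynchronous composition, this is enough. I would take $\rrel$ to be the least relation that contains $(Q^M_{x\$},S^M_s)$ and is closed under the following one-step simulation: whenever $(S,T)\in\rrel$ and $S\lred{\out\MessageType}S'$ with $\MessageType\in\set\unit\cup\TagSet$, then some $T'$ with $T\lred{\inp\MessageType}T'$ exists and $(S',T')\in\rrel$, and symmetrically when $T$ outputs. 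Because all types involved are first-order, \rcac{ois} and \rcac{ios} hold vacuously, so there are only two things to verify: that the matching input transitions demanded by the closure always exist (so that $\rrel$ is well defined and automatically satisfies \rcac{oi} and \rcac{io}), and that every pair in $\rrel$ satisfies \rcac{pp}.

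The first point I would dispatch by computing the reachable derivatives. Every type reachable from $Q^M_{x\$}$ has the shape $Q^M_\beta$ for a word $\beta$ over the queue alphabet (with $Q^M_\varepsilon=Q^M$): an immediate output rewrites $Q^M_{A\beta}$ to $Q^M_\beta$, while a \emph{late} input rewrites $Q^M_\beta$ to $Q^M_{\beta A}$, and this late input is always derivable — the crucial step being an application of \CoPlusT that descends to the occurrence of $Q^M$ and fires \WithT there. Dually, every type reachable from $S^M_s$ is a (possibly empty) block of output-tag prefixes followed by some $S^M_q$, whose branching $\With$ ranges over the whole queue alphabet; hence it can always perform a (possibly late) input of any such tag. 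Consequently the matching inputs always exist, the construction of $\rrel$ goes through, and $\rrel$ satisfies \rcac{oi}, \rcac{io}, \rcac{ois} and \rcac{ios}.

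The core of the argument — and the step I expect to be the main obstacle — is \rcac{pp}, and this is precisely where the hypothesis that $x$ is \emph{not} accepted by $M$ is used. I plan to maintain the invariant that each $(S,T)\in\rrel$ encodes a configuration of $M$ reachable from $(s,x\$)$: there are a state $q$, a word $\beta$ with $S=Q^M_\beta$, and a word $\gamma'$ still to be emitted by $T$, such that $(s,x\$)\rightarrow_M^*(q,\beta\gamma')$. A pair of matching immediate transitions $Q^M_{A\beta}\lred{\out A}Q^M_\beta$ and $S^M_q\lred{\inp A}\cdots$ realises exactly one step of $M$ (using that the $A$-branch of $S^M_q$ emits precisely $\delta(q,A)$ before reaching the next state), whereas flushing the pending outputs of $T$ into $Q^M_\beta$ leaves the encoded configuration unchanged. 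The delicate part is that $\rrel$ must also absorb ``out-of-order'' transitions — early outputs of $S^M_q$, which are possible when all branches of $\delta(q,\cdot)$ enqueue words sharing a common prefix, and outputs of $Q^M_{A\beta}$ performed before the previous dequeue has been flushed. These merely reorder and coalesce the steps described above, and I would use the diamond property of \Cref{prop:diamond} to argue that such reorderings preserve the invariant. Granting the invariant, a pair $(S,T)\in\rrel$ with $\negative(S)$ and $\negative(T)$ forces $\beta=\varepsilon$ and $\gamma'=\varepsilon$; the encoded configuration then has empty queue, which by totality of $\delta$ is the unique blocking configuration, so it would witness $(s,x\$)\rightarrow_M^*(q,\varepsilon)$, \ie acceptance of $x$ — contradicting the hypothesis. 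Hence \rcac{pp} holds for every pair in $\rrel$, $\rrel$ is a correct asynchronous composition containing $(Q^M_{x\$},S^M_s)$, and $\correct{Q^M_{x\$}}{S^M_s}$ follows. Combined with \Cref{from_machines_to_types}, this yields that $\correct{Q^M_{x\$}}{S^M_s}$ holds iff $x$ is not accepted by $M$, so correctness of asynchronous composition is undecidable (\Cref{thm:undecidable}).
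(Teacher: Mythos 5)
Your overall strategy coincides with the paper's: both proofs exhibit the minimal relation $\rrel$ containing $(Q^M_{x\$},S^M_s)$ closed under matched output/input pairs, dispatch \rcac{oi} and \rcac{io} by observing that every reachable derivative on either side can perform a (possibly late) input of every symbol in $\Gamma$, note that \rcac{ois} and \rcac{ios} are vacuous for first-order types, and reduce \rcac{pp} to an invariant tying each pair of $\rrel$ to a configuration reachable from $(s,x\$)$, deriving a contradiction with non-acceptance when both components are negative.

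The gap lies in the invariant itself, at exactly the point you flag as the main obstacle. As stated --- $S=Q^M_\beta$, a word $\gamma'$ still to be emitted by $T$, and $(s,x\$)\rightarrow_M^*(q,\beta\gamma')$ --- the invariant is not preserved by an early output of $S^M_q$. Suppose the encoded configuration is $(q,AW)$, the current pair is $(Q^M_{AW},S^M_q)$, and every branch $\delta(q,\cdot)$ enqueues a word beginning with the same symbol $B$, so that $S^M_q\lred{\out B}$ by \AWithT. The queue side absorbs $B$ by a late input, producing a pair whose first component is $Q^M_{AWB}$; but the machine is still in state $q$ with queue $AW$ (the dequeue that would justify enqueuing $B$ has not yet occurred), so no reachable configuration has the form $(q,AWB\,\gamma'')$ and the invariant fails for this pair. \Cref{prop:diamond} cannot repair this: it commutes the input and output transitions of a single session type, whereas what is needed is a stronger induction hypothesis for precisely these out-of-order pairs. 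The paper strengthens the invariant by decorating the branching types as $\DecWith{q,\gamma}{\dots}$, recording the state $q$ and the sequence $\gamma$ of anticipated outputs, and maintaining that the queue-side word decomposes as $X_1\cdots X_j\,\gamma$ where only the prefix $X_1\cdots X_j$ is the machine's actual queue at $q$; re-establishing the invariant when the dequeue finally fires then requires a case split on whether $\gamma$ is a prefix of the word enqueued by $\delta(q,A)$ or vice versa. With that strengthening your concluding step is sound (a both-negative pair forces the queue word and the decoration to be empty, hence $(s,x\$)\rightarrow_M^*(q,\varepsilon)$, contradicting non-acceptance); without it, the induction establishing \rcac{pp} does not go through.
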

\begin{proof}
In the proof we use decorations to keep track of outputs that are anticipated 
by the types $S^M_q$ representing the transition function.
Namely, we consider the definitions $S^M_q$ extended as follows:
$$
\begin{array}{ll}
S^M_q  =  \DecWith{q,\varepsilon}{A_i :\Plus{B_1 : \Plus{B_2 : \cdots \Plus{B_{n_i} : S^M_{q'} } } }}_{i\in I}
%\sum_{A \in \Gamma} \inp\Tag[A] . \out\Tag[B_1]. \cdots . \out\Tag[B_n] . S^M_{q} 
%			+ \inp\Tag[E] . \out\Tag[E] . S^M_p +  \inp\Tag[E'] . \out\End
	&
	\begin{array}{l}
	\mbox{with $I$ s.t. $\{A_i \mid i \in I\} = \Gamma$}\\ 
	\mbox{and $\delta(q,A_i) = (q',B_1 B_2 \cdots B_{n_i})$} 
	\end{array}
\end{array}
$$
where $\color{red}q$ is the state represented by the type, and
$\color{red}\varepsilon$ is a the empty sequence. This sequence is used to keep track 
of the output anticipated by decorated terms. Formally, we consider an extended 
version of the semantics for types defined in \cref{tab:lts} which updates the decoration:
%we modify the rules $\WithT$, $\AWithT$ and
%$\CoWithT$.
%In the new version of $\WithT$ decorations are simply ignored,
%while in 
the rules $\AWithT$ and $\CoWithT$ updates the decorations by considering the 
following consequence:
$
\DecWith{q,\gamma}{\Tag_i : S_i}_{i\in I} \lred{\out\MessageType} \DecWith{q,\gamma\MessageType}{\Tag_i : T_i}_{i\in I}.
$

Consider a queue machine $M=(Q , \Sigma , \Gamma , \$ , s , \delta )$
and an input $x$ not accepted by $M$.
We now define the relation $\rrel$ as the minimal relation
satisfying the following conditions:
\begin{itemize}
\item $(Q^M_{x\$},S^M_s) \in \rrel$;
\item if $(Q,S) \in \rrel$,
      $Q \lred{\out A} Q'$, and $S \lred{\inp A} S'$,
      then $(S',T') \in \rrel$;
\item if $(Q,S) \in \rrel$,
      $S \lred{\out A} S'$, and $Q \lred{\inp A} Q'$,
      then $(S',T') \in \rrel$. 
\end{itemize}     
The remainder is dedicated to the proof that $\rrel$ is a
correct asynchronous composition relation, hence $\correct{Q^M_{x\$}}{S^M_s}$ holds
because $(Q^M_{x\$},S^M_s) \in \rrel$ (see \cref{def:cac}).

We first observe that $\rrel$ satisfies the conditions \ref{cac:oi} 
and \ref{cac:io} of the \cref{def:cac} of correct asynchronous composition relation. 
Let $(Q,S) \in \rrel$.
We have that, for each $A \in \Gamma$,
there exists $Q'$ and $S'$ such that $Q \lred{\inp A} Q'$
and $S \lred{\inp A} S'$.
This holds because both $Q$ and $S$ are types which begins with a 
(possibly empty) sequence of output actions followed by an
input branching with one label for each symbol $A$ in the 
queue alphabet (see \cref{encoding_queue_machines}).
Hence, for each $(Q,S) \in \rrel$ we have that if $Q \lred{\out A} Q'$
(resp. $S \lred{\out A} S'$) then there exists $S'$ (resp. $Q'$)
such that $S \lred{\inp A} S'$ (resp. $Q \lred{\inp A} Q'$).
Moreover, by definition of $\rrel$, we also have $(Q',S') \in \rrel$.

It remains to prove that also the condition \ref{cac:pp} of \cref{def:cac}
is satisfied by $\rrel$ (the remaining conditions \ref{cac:ois} and \ref{cac:ios} 
simply do not apply to $\rrel$ because the types $Q^M_{x\$}$ and $S^M_s$ do not 
include channel exchanges).
This is proved by showing a property which holds for each pair $(Q,S) \in \rrel$.
Let $(Q,S) \in \rrel$, we have that:
\begin{itemize} 
\item
$Q = \Plus{X_1 : \cdots \Plus{X_n : Q^M }}$,
for some $n \geq 0$,
\item
\begin{itemize}
\item
either $S = \Plus{X_{n+1} : \cdots \Plus{X_{m} : S^M_q }}$,
for some $m > n$ and some queue machine state $q$, with
$(s , x\$) \rightarrow _{M}^{*} (q,X_1 \cdots X_m)$,
\item
or
$S = \DecWith{q,\gamma}{\Tag_i : S_i}_{i\in I}$,
for some queue machine state $q$ and tag sequence $\gamma$ such that
there exists $1 \leq j \leq n$ for which $\gamma = X_{j+1} \cdots X_n$
and $(s , x\$) \rightarrow _{M}^{*} (q,X_1 \cdots X_j)$.
\end{itemize}
\end{itemize}
The initial pair  $(Q^M_{x\$},S^M_s) \in \rrel$
trivially satisfies the above conditions by definition of $Q^M_{x\$}$ and $S^M_s$
(see \cref{encoding_queue_machines}).\\
Consider now a generic pair $(Q,S) \in \rrel$ which satisfies
the above conditions.
The pair $(Q,S)$ implies the presence in $\rrel$ of at most two pairs:
$(Q',S')$, with $Q \lred{\out A} Q'$ and $S \lred{\inp A} S'$,
and $(Q'',S'')$, with $Q \lred{\inp A} Q''$ and $S \lred{\out A} S''$.
We proceed by case analysis to show that both pairs satisfy the above conditions. 

We have that $Q = \Plus{X_1 : \cdots \Plus{X_n : Q^M }}$,
for some $n \geq 0$. There are two possibile cases about the second type $S$.
\begin{itemize}
\item
$S = \Plus{X_{n+1} : \cdots \Plus{X_{m} : S^M_q }}$,
for some $m > n$ and some queue machine state $q$, with
$(s , x\$) \rightarrow _{M}^{*} (q,X_1 \cdots X_m)$.\\
We now consider the two possible pairs generated by $(Q,S)$:
\begin{itemize}
\item
$(Q',S')$, with $Q \lred{\out A} Q'$ and $S \lred{\inp A} S'$.\\
We have that $A=X_1$, 
$Q' = \Plus{X_2 : \cdots \Plus{X_n : Q^M } }$ and 
$S' = \Plus{X_{n+1} : \cdots \Plus{X_{m} : 
\Plus{B_{1} : \cdots \Plus{B_{k} : S^M_{q'}}}}}$,
with $\delta(q,X_1) = (q',B_1 \cdots B_{k})$.
The above properties holds for $(Q',S')$ because
$(q, X_{1} \cdots X_m)$ $\rightarrow _{M}$ 
$(q', X_{2} \cdots X_m B_1 \cdots B_{k})$.
\item
$(Q'',S'')$, with $Q \lred{\inp A} Q''$ and $S \lred{\out A} S''$.\\
In this case $Q'' = \Plus{X_1 : \cdots \Plus{X_n : \Plus{X_{n+1} : Q^M } }}$
and $S'' = \Plus{X_{n+2} : \cdots \Plus{X_{m} : S^M_q }}$.
The above properties holds for $(Q'',S'')$ because we already know that 
$(s , x\$) \rightarrow _{M}^{*} (q,X_1 \cdots X_m)$ and if $m = n+1$
we have that $S = \DecWith{q,\varepsilon}{\Tag_i : S_i}_{i\in I}$.
\end{itemize}

\item
$S = \DecWith{q,\gamma}{\Tag_i : S_i}_{i\in I}$,
for some queue machine state $q$ and tag sequence $\gamma$ such that
there exists $1 \leq j \leq n$ for which $\gamma = X_{j+1} \cdots X_n$
and $(s , x\$) \rightarrow _{M}^{*} (q,X_1 \cdots X_j)$.\\
We now consider the two possible pairs generated by $(Q,S)$:
\begin{itemize}
\item
$(Q',S')$, with $Q \lred{\out A} Q'$ and $S \lred{\inp A} S'$.\\
We have that $A=X_1$ and 
$Q' = \Plus{X_2 : \cdots \Plus{X_n : Q^M } }$.
Let $\delta(q,X_1) = (q',B_1 \cdots B_{k})$.
We consider two subcases:
\begin{itemize}
\item
$\gamma$ is a prefix of $B_1 \cdots B_{k}$.\\
Let $k' \leq k$ such that $\gamma = B_1 \cdots B_{k'}$.
In this case we have that $S' = \Plus{B_{k'+1} : \cdots \Plus{B_k : 
S^M_{q'} } }$. The above properties holds for $(Q',S')$ because
$(q, X_{1} \cdots X_j)$ $\rightarrow _{M}$ 
$(q', X_{2} \cdots X_j B_1 \cdots B_{k})$.
\item
$\gamma$ is a not prefix of $B_1 \cdots B_{k}$.\\
In this case, we have that $B_1 \cdots B_{k}$ is a prefix of $\gamma$
and $S' = \DecWith{q',\gamma'}{\Tag_i : S'_i}_{i\in I}$
with $\gamma'$ such that $\gamma = B_1 \cdots B_{k} \gamma'$.
We have that there exists $j \leq j' \leq n$ for which $\gamma' = X_{j'+1} \cdots X_n$.
The above properties holds for $(Q',S')$ because
$(q, X_{1} \cdots X_j)$ $\rightarrow _{M}$ 
$(q', X_{2} \cdots X_j B_1 \cdots B_{k})$.
\end{itemize}
\item
$(Q'',S'')$, with $Q \lred{\inp A} Q''$ and $S \lred{\out A} S''$.\\
In this case $Q'' = \Plus{X_1 : \cdots \Plus{X_n : \Plus{X_{n+1} : Q^M } }}$
and $S'' = \DecWith{q,\gamma X_{n+1}}{\Tag_i : S''_i}_{i\in I}$.
The above properties holds for $(Q'',S'')$ because we already know that 
there exists $1 \leq j \leq n$ for which $\gamma = X_{j+1} \cdots X_n$
(hence also $\gamma X_{n+1}= X_{j+1} \cdots X_{n+1}$)
and $(s , x\$) \rightarrow _{M}^{*} (q,X_1 \cdots X_j)$.
\end{itemize}
\end{itemize}
We can finally conclude by showing that the condition \ref{cac:pp} of \cref{def:cac}
is satisfied by $\rrel$. Suppose, by contradiction, that there exists a pair
$(Q,S) \in \rrel$ such that it is not true that either $\positive(Q)$ or $\positive(S)$.
This means that both $Q$ and $S$ begins with an input action, hence
$Q = Q^M$ and $S = \DecWith{q,\gamma}{\Tag_i : S_i}_{i\in I}$.
By the above properties which are satisfied by all pairs in $\rrel$ we have
that $(s , x\$) \rightarrow _{M}^{*} (q,\varepsilon)$, but this contradicts
the hypotesis about $x$ not accepted by $M$.
\end{proof}

\undecidability*
\begin{proof}
By \cref{from_machines_to_types} and \cref{from_types_to_machines}, given a
queue machine $M=(Q , \Sigma , \Gamma , \$ , s , \delta )$, we have that an
input $x$ is accepted by $M$ if and only if $\correct{Q^M_{x\$}}{S^M_s}$ does
not hold. Hence the undecidability of the correct asynchronous composition
relation $\correct{}{}$ is a direct consequence of the undecidability of
acceptance in queue machines.
\end{proof}

\section{Supplement to Section~\ref{sec:type-system}}
\label{sec:proofs-type-system}

\subsection{Proof of Theorem~\ref{thm:sr}}
\label{sec:proof-subject-reduction}

We introduce some additional notation for consuming and creating session types:
\begin{itemize}
    \item If $S \lred\Action$ we write $S(\Action)$ for the (unique) $T$ such
    that $S \lred\Action T$ when $S \lred\Action$. We extend this notation to
    sequences of actions in the obvious way, that is $S(\varepsilon) \eqdef S$
    and $S(\Action\Actions) \eqdef S(\Action)(\Actions)$.
    \item We write $\Action.S$ for the (unique) session type $T$ such that $T
    \lred\ActionB$ implies $\ActionB = \Action$ and $T \lred\Action S$. For
    example, if $\Action = \out\Tag$, then $\Action.S = \Plus{\Tag : S}$. We
    extend this notation to sequences of actions in the obvious way, that is
    $\varepsilon.S \eqdef S$ and $\Action\Actions.S \eqdef \Action.\Actions.S$.
\end{itemize}

\begin{proposition}
    \label{prop:refine-subt}
    If\/ $S \lred\Actions$, then the following properties hold:
    \begin{enumerate}
    \item if $\Actions$ is made of output actions only, then
        $\Actions.S(\Actions) \subt S$;
    \item if $\Actions$ is made of input actions only, then $S \subt
    \Actions.S(\Actions)$.
    \end{enumerate}
\end{proposition}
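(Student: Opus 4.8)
The plan is to rely on the characterization of $\subt$ as the largest coinductive asynchronous subtyping (\Cref{thm:subt}): to prove each of the two inclusions it suffices to exhibit one coinductive asynchronous subtyping containing the relevant pair. For item~1 I would work with
\[
  \srel_1 \eqdef \set{ (\Actions.S(\Actions), S) \mid S \lred\Actions \text{ and } \Actions \text{ consists of output actions only} },
\]
which contains every identity pair $(S,S)$ via the empty sequence, and then verify the clauses of \Cref{def:cas} for an arbitrary pair $(\Actions.T, S) \in \srel_1$, writing $T = S(\Actions)$ and $\Actions = \out\MessageTypeS\cdot\Actions'$ when $\Actions \ne \varepsilon$, so that $\Actions.T = \out\MessageTypeS.(\Actions'.T)$ is a single $\oplus$- or $\otimes$-prefixed type (the degenerate case where this prefix collapses to $\One$ reduces to an identity pair and is dealt with immediately).

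Clause \rcas{pol} is immediate, since $\Actions.T$ is positive whenever $\Actions \ne \varepsilon$ and the case $\Actions = \varepsilon$ is trivial. For clauses \rcas{out} and \rcas{outs} I would observe that the only output transition of $\Actions.T$ consumes its leading prefix, $\Actions.T \xlred{\out\MessageTypeS} \Actions'.T$; since $S \xlred{\out\MessageTypeS} S(\out\MessageTypeS) \xlred{\Actions'} T$, the type $S$ matches this output with the same carried message, and the continuation pair $(\Actions'.T, S(\out\MessageTypeS))$ is again in $\srel_1$ (in the channel case the transmitted types coincide on the two sides, so the first component pair is an identity pair). The substance of the proof is in clauses \rcas{inp} and \rcas{inps}: given an input transition $S \xlred{\inp\MessageTypeS} S'$ of the larger type I must produce a matching input of the smaller type $\Actions.T$. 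For this I would commute $\inp\MessageTypeS$ past the whole output sequence $\Actions$ by iterating the diamond property (\Cref{prop:diamond}) $\sizeof\Actions$ times, which yields $T \xlred{\inp\MessageTypeS} T'$ together with $S' \xlred{\Actions} T'$; pushing $\inp\MessageTypeS$ through the buffer prefix of $\Actions.T$ using \APlusT (respectively \ATimesT for a channel input) then gives $\Actions.T \xlred{\inp\MessageTypeS} \Actions.T'$, and the resulting pair $(\Actions.T', S')$ lies in $\srel_1$ because $S'(\Actions) = T'$. Having checked all clauses, \Cref{thm:subt} yields $\Actions.S(\Actions) \subt S$.

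For item~2 I would argue by duality. Since $\subt$ is closed under duality (\Cref{prop:dual-subtype}), $S \subt \Actions.S(\Actions)$ is equivalent to $\dual{(\Actions.S(\Actions))} \subt \dual{S}$. Dualization commutes both with prefixing, $\dual{(\Action.X)} = \co{\Action}.\dual{X}$ where $\co{\Action}$ swaps input for output and dualizes the carried message, and with taking derivatives, $S \lred\Actions T$ iff $\dual{S} \xlred{\co{\Actions}} \dual{T}$ (a routine fact following from the symmetry of the rules in \Cref{tab:lts}). Hence $\dual{(\Actions.S(\Actions))} = \co{\Actions}.\dual{S}(\co{\Actions})$ with $\co{\Actions}$ an output-only sequence enabled at $\dual{S}$, so item~2 becomes exactly item~1 applied to $\dual{S}$ and $\co{\Actions}$. (Alternatively one can run the symmetric argument directly with the dual relation $\srel_2 \eqdef \set{(S, \Actions.S(\Actions)) \mid S \lred\Actions,\ \Actions \text{ input-only}}$.)

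I expect the verification of clauses \rcas{inp} and \rcas{inps} to be the main obstacle, since it amounts to pulling a single input of $S$ in front of an arbitrarily long sequence of outputs; this is where the diamond property must be iterated, and where one has to keep careful track of the bookkeeping showing that every pair produced while checking a clause is again of the form $(\Actions'.S'(\Actions'), S')$, so that $\srel_1$ is genuinely closed. The remaining clauses, the degenerate $\One$-prefix case, and the determinism of transitions (which makes $S(\Actions)$ unambiguous) are routine by comparison.
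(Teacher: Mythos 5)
Your proof is correct, but it is organized quite differently from the paper's, and in a way that is arguably more self-contained. The paper argues by induction on the length of $\Actions$: it peels off one action, invokes the induction hypothesis on the tail, and appeals to \Cref{def:cas} for the single-prefix case, leaving implicit both the transitivity/congruence step needed to recombine $\Action.S(\Action) \subt S$ with the inductive hypothesis and the commutation of one input past one output that is needed to verify the input clauses even for a single output prefix. You instead exhibit one coinductive asynchronous subtyping $\srel_1$ containing all pairs $(\Actions.S(\Actions), S)$ simultaneously and discharge the input clauses by iterating the diamond property (\Cref{prop:diamond}) along the entire output sequence; this makes explicit exactly the ingredient the paper's induction glosses over, and your bookkeeping (continuations of the form $(\Actions.T', S')$ with $S'(\Actions) = T'$, identity pairs for the transmitted types in the multiplicative clauses, which are in $\srel_1$ via the empty sequence) does close the relation. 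Your treatment of item~2 by dualizing item~1 through \Cref{prop:dual-subtype} and the symmetry of the LTS is also a departure from the paper, which handles both items uniformly inside the same induction; it is sound, at the price of the routine facts that dualization commutes with prefixing and with taking derivatives, which you correctly identify as following from the rule-by-rule symmetry of \Cref{tab:lts}. The one point worth flagging is the degenerate prefix carrying $\unit$, for which the notation $\Action.S$ is not always well defined; you acknowledge it, and it is a defect of the notation that affects the paper's own proof equally (in every use of the proposition $\Actions$ comes from a buffer and so never carries $\unit$).
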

\begin{proof}
    By induction on $\Actions$.
    If $\Actions = \varepsilon$, then we conclude immediately by reflexivity of
    $\subt$.
    If $\Actions = \Action\ActionsB$, then $S(\Action) \lred\ActionsB$.
    By induction hypothesis we deduce that if $\ActionsB$ is made of output
    actions only, then $\ActionsB.S(\Action\ActionsB) \subt S(\Action)$ and that
    if $\ActionsB$ is made of input actions only, then $S(\Action) \subt
    \ActionsB.S(\Action\ActionsB)$.
    We conclude by recalling \Cref{def:cas} and observing that if $\Action$ is
    an output action, then $\Action.S(\Action) \subt S$ and that if $\Action$ is
    an input action, then $S \subt \Action.S(\Action)$.
\end{proof}

\newcommand{\TContext}{\TContextS}
\newcommand{\TContextS}{\mathcal{S}}
\newcommand{\TContextT}{\mathcal{T}}

\begin{table}
    \caption{\label{tab:buffer-typing-rules}Typing rules for buffers.}
    \begin{mathpar}
        \inferrule[\EmptyBufferRule]{~}{
            \wtb[0]\Hole\EmptyContext\x\varepsilon
        }
        \and
        \inferrule[\SelectBufferRule]{
            \wtb[n]{\Buffer_x}\Context\x\Actions
        }{
            \wtb[1+m+n]{\Select\x\Tag.\Buffer_x}\Context\x{\out\ann[m]\Tag\Actions}
        }
        \and
        \inferrule[\ForkBufferRule]{
            \wtp[m]{P}{\ContextC, y : T}
            \\
            \wtb[n]{\Buffer_x}\ContextD\x\Actions
        }{
            \wtb[1+m+n]{\Fork\x\y{P}{\Buffer_x}}{\ContextC, \ContextD}\x{\out{T}\Actions}
        }
    \end{mathpar}
\end{table}

In order to prove \Cref{thm:sr} we need to introduce typing rules for buffers,
shown in \Cref{tab:buffer-typing-rules}. Note that, in the typing judgments
$\wtb[n]{\Buffer_x}\Context\x\Actions$, we single out the variable $x$ bound by
the buffer because its ``type'' is actually a sequence of output actions.

We now prove two auxiliary results expressing the effect of assembling and
disassembling buffers and processes.

\begin{lemma}
    \label{lem:buffer-type-inv}
    If $\wtb[m]{\Buffer_x}\ContextC\x\Actions$ and $\wtp[n]{P}{\ContextD, x :
    S}$ and $\dom\ContextC \cap \dom\ContextD = \emptyset$, then
    $\wtp[m+n]{\Buffer_x[P]}{\ContextC, \ContextD, x : \Actions.S}$.
\end{lemma}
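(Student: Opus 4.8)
The plan is to proceed by induction on the structure of the buffer $\Buffer_x$ — equivalently, on the derivation of $\wtb[m]{\Buffer_x}\ContextC\x\Actions$ — and to examine the last rule applied, which is one of \EmptyBufferRule, \SelectBufferRule, \ForkBufferRule. In each step the idea is simply to apply the induction hypothesis to the immediate sub-buffer and then reconstruct the leading prefix around the resulting process, matching the buffer typing rule against the corresponding process typing rule (\SelectRule or \ForkRule) and using the identities $\varepsilon.S = S$, $\out\ann[k]\Tag.S = \Plus{\ann[k]\Tag : S}$ and $\out{T}.S = \Times{T}{S}$ coming from the LTS of \Cref{tab:lts}.

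In the base case \EmptyBufferRule we have $\Buffer_x = \Hole$, $\ContextC = \EmptyContext$, $\Actions = \varepsilon$ and $m = 0$; then $\Buffer_x[P] = P$, $\EmptyContext, \ContextD, x : \varepsilon.S = \ContextD, x : S$, and the goal $\wtp[m+n]{\Buffer_x[P]}{\ContextC, \ContextD, x : \Actions.S}$ collapses to the hypothesis on $P$. For the step with \SelectBufferRule we have $\Buffer_x = \Select\x\Tag.\Buffer_x'$ with premise $\wtb[m']{\Buffer_x'}\ContextC\x\Actions'$, and $m = 1 + k + m'$, $\Actions = \out\ann[k]\Tag\Actions'$, where $k$ is the measure annotation carried by the buffered output. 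I would apply the induction hypothesis to $\Buffer_x'$ (the hypothesis $\dom\ContextC \cap \dom\ContextD = \emptyset$ is unchanged), obtaining $\wtp[m'+n]{\Buffer_x'[P]}{\ContextC, \ContextD, x : \Actions'.S}$, and then conclude with one application of \SelectRule using as the type of $x$ the singleton selection $\Plus{\ann[k]\Tag : \Actions'.S}$, which is precisely $\out\ann[k]\Tag.(\Actions'.S) = \Actions.S$; the resulting measure $1 + (m'+n) + k$ equals $m + n$, and $\Select\x\Tag.\Buffer_x'[P] = \Buffer_x[P]$. The case \ForkBufferRule is entirely analogous: here $\Buffer_x = \Fork\x\y{Q}{\Buffer_x'}$ with premises $\wtp[k]{Q}{\ContextC_1, y : T}$ and $\wtb[m']{\Buffer_x'}{\ContextC_2}\x\Actions'$, with $\ContextC = \ContextC_1, \ContextC_2$, $m = 1 + k + m'$ and $\Actions = \out{T}\Actions'$; I would apply the induction hypothesis to $\Buffer_x'$ (using $\dom{\ContextC_2} \cap \dom\ContextD = \emptyset$, which follows from $\dom{\ContextC_2} \subseteq \dom\ContextC$), obtaining $\wtp[m'+n]{\Buffer_x'[P]}{\ContextC_2, \ContextD, x : \Actions'.S}$, and then close with \ForkRule applied to the derivation for $Q$ and the one just obtained for $\Buffer_x'[P]$, which yields the type $\Times{T}{(\Actions'.S)} = \out{T}.(\Actions'.S) = \Actions.S$ on $x$, the measure $1 + k + (m'+n) = m + n$, and $\Fork\x\y{Q}{\Buffer_x'[P]} = \Buffer_x[P]$.

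I do not expect a serious obstacle here; the two points that need care are (i) the arithmetic on measures, which goes through precisely because the annotation carried by the buffered output in \SelectBufferRule (resp.\ the measure of the payload process in \ForkBufferRule) matches exactly what \SelectRule (resp.\ \ForkRule) charges when the prefix is put back around $\Buffer_x'[P]$, together with the shared extra $1$ for the output itself; and (ii) the implicit freshness side conditions of \SelectRule and \ForkRule — namely $x \notin \dom\ContextC$ (the buffer binds $x$), $x \notin \dom\ContextD$ ($P$ is typed with $x$ in its context), and in the fork case the bound channel $y$ not clashing with $\ContextD$ — all of which hold because processes and buffers are identified up to renaming of bound channels.
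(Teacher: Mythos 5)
Your proof is correct and follows essentially the same route as the paper's: induction on the structure of $\Buffer_x$, applying the induction hypothesis to the sub-buffer and then reassembling the prefix with one application of \SelectRule or \ForkRule, with the same context-splitting and measure bookkeeping. The observation that $\Actions.S$ unfolds to the singleton selection $\Plus{\ann[k]\Tag : \Actions'.S}$ (resp.\ $\Times{T}{\Actions'.S}$) is exactly what the paper leaves implicit.
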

\begin{proof}
    By induction on $\Buffer_x$ and by cases on its shape.

    \proofcase{$\Buffer_x = \Hole$}
    From \EmptyBufferRule we deduce
    \begin{itemize}
        \item $m = 0$
        \item $\ContextC = \EmptyContext$
        \item $\Actions = \varepsilon$
    \end{itemize}
    
    We conclude immediately since $\Buffer_x[P] = P$ and $\Actions.S = S$.

    \proofcase{$\Buffer_x = \Select\x\Tag.\Buffer_x'$}
    From \SelectBufferRule we deduce
    \begin{itemize}
        \item $m = 1 + m'' + m'$
        \item $\Actions = \out\ann[m'']\Tag\Actions'$
        \item $\wtb[m']{\Buffer_x'}\ContextC\x{\Actions'}$
    \end{itemize}

    Using the induction hypothesis we deduce
    $\wtp[m'+n]{\Buffer_x'[P]}{\ContextC, \ContextD, x : \Actions'.S}$.
    We conclude $\wtp[m + n]{\Buffer_x[P]}{\ContextC, \ContextD, x :
    \Actions.S}$ with one application of \SelectRule.

    \proofcase{$\Buffer_x = \Fork\x\y{Q}{\Buffer_x'}$}
    From \ForkBufferRule we deduce
    \begin{itemize}
        \item $m = 1 + m'' + m'$
        \item $\Actions = \out{T}\Actions'$
        \item $\Context = \ContextE, \Context'$
        \item $\wtp[m'']{Q}{\ContextE, y : T}$
        \item $\wtb[m']{\Buffer_x'}{\Context'}\x{\Actions'}$
    \end{itemize}

    From $\dom\ContextC \cap \dom\ContextD = \emptyset$ we deduce
    $\dom{\ContextC'} \cap \dom\ContextD = \emptyset$. Using the induction
    hypothesis we deduce $\wtp[m'+n]{\Buffer_x'[P]}{\Context', \ContextD, x :
    \Actions'.S}$.
    We conclude $\wtp[m+n]{\Buffer_x[P]}{\Context, \ContextD, x : \Actions.S}$
    with one application of \ForkRule.
\end{proof}

\begin{lemma}
    \label{lem:buffer-type}
    If $\wtp[n]{\Buffer_x[P]}{\ContextC, x : S}$, then there exist $m\leq n$,
    $\ContextE$, $\ContextD$ and $\Actions$ such that $\Context = \ContextE,
    \ContextD$ and $\wtb[m]{\Buffer_x}\ContextE\x\Actions$ and
    $\wtp[n-m]{P}{\ContextD, x : S(\Actions)}$.
\end{lemma}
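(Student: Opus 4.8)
The plan is a structural induction on the buffer $\Buffer_x$, running the proof of \Cref{lem:buffer-type-inv} ``backwards'': at each step I invert the typing derivation of $\Buffer_x[P]$ according to the shape of $\Buffer_x$, strip off the outermost layer of the buffer, apply the induction hypothesis to the smaller configuration, and reassemble a buffer typing derivation using the rules of \Cref{tab:buffer-typing-rules}.

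If $\Buffer_x = \Hole$ then $\Buffer_x[P] = P$, so I take $m \eqdef 0$, $\ContextE \eqdef \EmptyContext$, $\ContextD \eqdef \ContextC$, $\Actions \eqdef \varepsilon$: the judgment $\wtb[0]\Hole\EmptyContext\x\varepsilon$ holds by \EmptyBufferRule and $S(\varepsilon) = S$, so the hypothesis already gives $\wtp[n]{P}{\ContextD, x : S}$. If $\Buffer_x = \Select\x\Tag.\Buffer_x'$, the process $\Select\x\Tag.\Buffer_x'[P]$ can only be typed by \SelectRule, so inversion forces $S = \Plus{\ann[m_i]{\Tag_i} : S_i}_{i\in I}$ with $\Tag = \Tag_k$ for some $k \in I$, $\wtp[n']{\Buffer_x'[P]}{\ContextC, x : S_k}$ and $n = 1 + n' + m_k$; applying the induction hypothesis to $\Buffer_x'[P]$ and then prefixing the resulting buffer judgment with \SelectBufferRule yields the claim with $\Actions \eqdef \out\ann[m_k]{\Tag_k}\Actions'$ and $m \eqdef 1 + m_k + m'$, where $S(\Actions) = S_k(\Actions')$ because $S \lred{\out\Tag_k} S_k$, and $n - m = n' - m'$ with $m \leq n$. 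If $\Buffer_x = \Fork\x\y{Q}{\Buffer_x'}$, the situation is symmetric: inversion through \ForkRule gives $S = \Times{T}{T'}$, a split $\ContextC = \ContextC_1, \ContextC_2$, judgments $\wtp[m_1]{Q}{\ContextC_1, y : T}$ and $\wtp[m_2]{\Buffer_x'[P]}{\ContextC_2, x : T'}$, and $n = 1 + m_1 + m_2$; the induction hypothesis decomposes the judgment for $\Buffer_x'[P]$, and \ForkBufferRule reassembles the buffer with $Q$ absorbed into $\ContextE \eqdef \ContextC_1, \ContextE'$, using $\Times{T}{T'} \lred{\out T} T'$ to identify $S(\Actions)$ with $T'(\Actions')$.

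I do not expect a serious obstacle here. The only point that deserves a word of care is that inversion on the typing derivation of $\Buffer_x[P]$ is legitimate even though the typing rules are interpreted coinductively: the root rule instance of any (possibly infinite) typing derivation is still uniquely determined by the syntactic form of the process, which is all that the case analysis relies on. The remainder is routine bookkeeping — checking that every output prefix removed from the buffer corresponds to exactly one immediate output transition of $S$, so that the residual type $S(\Actions)$ matches, and that the measures satisfy $m \leq n$ with $n - m$ equal to the measure left for $P$.
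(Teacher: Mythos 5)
Your proof is correct and follows essentially the same route as the paper's: structural induction on $\Buffer_x$, inversion of \SelectRule{} and \ForkRule{} at the root, application of the induction hypothesis to the smaller buffer, and reassembly via \EmptyBufferRule, \SelectBufferRule{} and \ForkBufferRule, with the same bookkeeping for measures and for $S(\Actions)$.
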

\begin{proof}
    By induction on $\Buffer_x$ and by cases on its shape.

    \proofcase{$\Buffer_x = \Hole$}
    We conclude by taking $m \eqdef 0$, $\ContextE = \EmptyContext$, $\ContextD
    = \ContextC$ and $\Actions \eqdef \varepsilon$ with one application of
    \EmptyBufferRule.
        
    \proofcase{$\Buffer_x = \Select\x\Tag.\Buffer_x'$}
    From \SelectRule we deduce
    \begin{itemize}
        \item $S = \Plus{\ann[m_i]{\Tag_i} : S_i}_{i\in I}$
        \item $\Tag = \Tag_k$ for some $k \in I$
        \item $n = 1 + m_k + n'$
        \item $\wtp[n']{\Buffer_x'[P]}{\Context, x : S_k}$
    \end{itemize}

    Using the induction hypothesis we deduce that there exist $m'\leq n'$,
    $\ContextE$, $\ContextD$ and $\Actions'$ such that $\Context = \ContextE,
    \ContextD$ and $\wtb[m']{\Buffer_x}\ContextE\x{\Actions'}$ and
    $\wtp[n'-m']{P}{\ContextD, x : S_k(\Actions')}$.
    We conclude by taking $m \eqdef 1 + m_k + m'$ and $\Actions \eqdef
    \out\ann[m_k]{\Tag}\Actions'$ with one application of \SelectBufferRule
    and observing that $n - m = n - 1 - m_k - m' = n' - m'$ and $S(\Actions) =
    S(\out\ann[m_k]{\Tag}\Actions') = S_k(\Actions')$.

    \proofcase{$\Buffer_x = \Fork\x\y{Q}{\Buffer_x'}$}
    From \ForkRule we deduce
    \begin{itemize}
        \item $\Context = \ContextE'', \Context'$
        \item $S = \Times{T}{S'}$
        \item $n = 1 + m'' + n'$
        \item $\wtp[m'']{Q}{\ContextE'', y : T}$
        \item $\wtp[n']{\Buffer_x'[P]}{\Context', x : S'}$
    \end{itemize}

    Using the induction hypothesis we deduce that there exist $m'\leq n'$,
    $\ContextE'$, $\ContextD$ and $\Actions'$ such that $\Context' = \ContextE',
    \ContextD$ and $\wtb[m']{\Buffer_x}{\ContextE'}\x{\Actions'}$ and $\wtp[n' -
    m']{P}{\ContextD, x : S'(\Actions')}$.
    We conclude by taking $m \eqdef 1 + m'' + m'$ and $\ContextE = \ContextE',
    \ContextE''$ and $\Actions \eqdef \out{T}\Actions'$ with one application of
    \ForkBufferRule and observing that $n - m = n' - m'$ and $S(\Actions) =
    S(\out{T}\Actions') = S'(\Actions')$.
\end{proof}

The next lemma proves that structural precongruence preserves typing, in the
sense that if $P \pcong Q$ then $Q$ is well typed in a the typing context
$\ContextD$ that is $\subt$-smaller than the one used for typing $P$. Also, the
measure of $Q$ is not greater than the measure of $P$.

\begin{lemma}
    \label{lem:pcong-preservation}
    If $P \pcong Q$, then $\wtp\Context{P}$ implies $\wtp[m]\ContextD{Q}$ for
    some $m \leq n$ and $\ContextD \subt \ContextC$.
\end{lemma}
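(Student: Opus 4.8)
The plan is to argue by induction on the derivation of $P \pcong Q$, distinguishing the base rules of \Cref{tab:semantics}, the reflexive and transitive steps, and the closure under contexts. Throughout I will rely on: $\subt$ being a preorder (a consequence of \Cref{thm:subt}); $\subt$ being closed under duality (\Cref{prop:dual-subtype}); correct composition being symmetric and satisfying $\cc{S}{T}$ iff $S \subt \dual{T}$ (\Cref{thm:correct-subt}); and the buffer typing rules of \Cref{tab:buffer-typing-rules} together with \Cref{lem:buffer-type,lem:buffer-type-inv}, which split and reassemble a buffer typing \emph{without changing the measure}. The last ingredient is \Cref{prop:refine-subt}: since every buffer $\Buffer_x$ exposes a sequence $\Actions$ made of output labels only, detaching $\Buffer_x$ from a channel of type $S$ turns that type into $\Actions.S(\Actions)$, and $\Actions.S(\Actions) \subt S$.

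For reflexivity I take $\ContextD \eqdef \ContextC$ and $m \eqdef n$; for transitivity ($P \pcong R \pcong Q$) I apply the induction hypothesis twice, composing the contexts by transitivity of $\subt$ and the measures by transitivity of $\leq$. For the closure under a context I invert the last typing rule according to the shape of the enclosing constructor, apply the induction hypothesis to the premise typing the rewritten sub-process, and re-apply the rule; this goes through because every typing rule is monotone under replacing a premise's context by a $\subt$-smaller one of the same domain (and its measure by a smaller one) — the only interesting instance is \CutRule, where monotonicity rests on \Cref{thm:correct-subt} and transitivity of $\subt$, \LinkRule being irrelevant since it has no premise typing a sub-process, and the branching rules \ChoiceRule and \CaseRule requiring only minor care to keep all branches over a common context. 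Among the base rules, \rulename{s-call} follows by inverting \CallRule (the body keeps the same context and has measure $n \leq m+n$), \rulename{s-link} by re-deriving $\Link\y\x$ after rewriting the side condition $\dual{S} \subt T$ as $\dual{T} \subt S$ using \Cref{prop:dual-subtype}, and \rulename{s-comm} by symmetry of correct composition; these preserve the context and do not increase the measure.

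The commuting conversions \rulename{s-wait}, \rulename{s-case}, \rulename{s-join} and the buffer permutations \rulename{s-pull-1}--\rulename{s-pull-4} share one recipe: detach with \Cref{lem:buffer-type} the buffer(s) that need to move, invert the logical rule that is thereby exposed (\WaitRule, \CaseRule, \JoinRule, \ForkRule or \CutRule), commute the cut with the prefix or merge/permute the buffers, and reassemble with \Cref{lem:buffer-type-inv}. The one non-mechanical step is that the session type assigned to the \emph{bound} channel of the rebuilt cut may now read $\Actions.S(\Actions)$; by \Cref{prop:refine-subt} this is $\subt S$, so the correctness condition required by the rebuilt \CutRule is inherited from the original one via \Cref{thm:correct-subt} and transitivity of $\subt$. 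Since in these rules buffers are re-attached around bound channels only, the free typing context and the measure are left unchanged.

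The genuinely delicate case, and the one that makes the conclusion's context possibly \emph{strictly} $\subt$-smaller, is \rulename{s-pull-0}. Here a buffer $\Buffer_x$ in front of the left operand of a cut on $x$ is extruded past the cut and reappears, renamed, as $\Buffer_y$ in front of the whole term, because the right operand $\Buffer_x'[\Link\x\y]$ forwards $x$ onto the \emph{free} channel $y$. Carrying this out requires detaching $\Buffer_x$ from the left operand, re-splitting the buffer $\Buffer_x'$ that wraps the link, re-choosing the two session types annotating $\Link\x\y$ so that the inner cut $\Cut\x{P}{\Buffer_x'[\Link\x\y]}$ still meets its correctness side condition, and finally wrapping the result in $\Buffer_y$; the net effect is that the type of $y$ in the outer context becomes of the form $\Actions.U \subt U$, which is precisely the $\ContextD \subt \ContextC$ slack, while the measure is preserved because \Cref{lem:buffer-type,lem:buffer-type-inv} merely relocate the buffer's measure. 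I expect the verification that all the correctness side conditions survive this transplant — tracking with \Cref{prop:dual-subtype}, \Cref{thm:correct-subt} and \Cref{prop:refine-subt} how the types on $x$ and $y$ evolve — to be the hard part of the proof, and it is this phenomenon that forces subject reduction (\Cref{thm:sr}) to be stated with a $\subt$-smaller context after reduction.
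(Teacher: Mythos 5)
Your proposal matches the paper's proof in all essentials: induction on the derivation of $P \pcong Q$, the buffer (dis)assembly lemmas (\Cref{lem:buffer-type,lem:buffer-type-inv}), \Cref{prop:refine-subt} to turn a relocated buffer into a $\subt$-smaller type, and the identification of \rulename{s-pull-0} as the delicate case, resolved through the link's side condition together with \Cref{prop:dual-subtype} and \Cref{thm:correct-subt}. One inaccuracy worth flagging: it is not only \rulename{s-pull-0} that yields a strictly $\subt$-smaller context --- in \rulename{s-pull-2}, \rulename{s-pull-3} and \rulename{s-pull-4} the buffer being moved sits on a channel that is \emph{free} in the whole term (the side conditions require $x \ne y$), so its entry in $\ContextC$ likewise becomes $\Actions.S(\Actions) \subt S$; your recipe still produces exactly this and the lemma's statement absorbs it, so nothing breaks, but your claim that those rules leave the free typing context unchanged is wrong.
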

\begin{proof}
    By induction on the derivation of $P \pcong Q$ and by cases on the last rule
    applied. We only discuss a few representative cases.

    \proofcase{\rulename{s-comm}}
    Then $P = \Cut\x{P_1}{P_2} \pcong \Cut\x{P_2}{P_1} = Q$.
    From \CutRule we deduce $\Context = \Context_1, \Context_2$ and
    $\wtp[n_i]{P_i}{\Context_i, S_i}$ for $i=1,2$ and $n = n_1 + n_2$ and
    $\cc{S_1}{S_2}$.
    We conclude $\wtp{Q}\Context$ with one application of \CutRule since
    $\cc{S_1}{S_2}$ implies $\cc{S_2}{S_1}$.

    \proofcase{\rulename{s-pull-0}}
    Then $P = \Cut\x{\Buffer_x[R]}{\Buffer_x'[\Link\x\y]} \pcong
    \Buffer_y[\Cut\x{R}{\Buffer_x'[\Link\x\y]}] = Q$.
    From \CutRule we deduce
    \begin{itemize}
        \item $\Context = \Context_1, \Context_2$
        \item $n = n_1 + n_2$
        \item $\wtp[n_1]{\Buffer_x[R]}{\Context_1, x : S}$
        \item $\wtp[n_2]{\Buffer_x'[\Link\x\y]}{\Context_2, x : T}$
        \item $\cc{S}{T}$
    \end{itemize}

    From \Cref{lem:buffer-type} we deduce that there exist $m_1 \leq n_1$,
    $\ContextE_1$, $\ContextD_1$ and $\Actions_1$ such that $\Context_1 =
    \ContextE_1, \ContextD_1$ and
    $\wtb[m_1]{\Buffer_x}{\ContextE_1}\x{\Actions_1}$ and $\wtp[n_1 -
    m_1]{R}{\ContextD_1, x : S(\Actions_1)}$.
    From \Cref{lem:buffer-type} we deduce that there exist $m_2 \leq n_2$,
    $\ContextE_2$, $\ContextD_2$ and $\Actions_2$ such that $\Context_2 =
    \ContextE_2, \ContextD_2$ and
    $\wtb[m_2]{\Buffer_x'}{\ContextE_2}\x{\Actions_2}$ and $\wtp[n_2 -
    m_2]{\Link\x\y}{\ContextD_2, x : T(\Actions_2)}$.
    From \LinkRule we deduce
    \begin{itemize}
        \item $n_2 - m_2 = 1$
        \item $\ContextD_2 = y : S'$
        \item $\dual{T(\Actions_2)} \subt S'$
    \end{itemize}

    From \Cref{prop:refine-subt} and $\cc{S}{T}$ we deduce
    $\cc{\Actions_1.S(\Actions_1)}{T}$ hence
    $\cc{\Actions_1.S(\Actions_1\dual{\Actions_2})}{T(\Actions_2)}$.
    From $\dual{T(\Actions_2)} \subt S'$ we deduce $\dual{S'} \subt
    T(\Actions_2)$ hence $\cc{\Actions_1.S(\Actions_1\dual{\Actions_2})} \subt
    S'$.\LP{Direi che è corretto ma qui si usa di tutto e di più.}
    From $T \lred{\Actions_2}$ and $\cc{S}{T}$ and \Cref{def:cac} we deduce $S
    \lred{\dual\Actions_2}$ hence $S(\Actions_1) \lred{\dual\Actions_2}$ and
    $S(\Actions_1\dual\Actions_2)$ is defined.
    We derive $\wtp[1]{\Link\x\y}{x : \dual{S(\Actions_1\dual\Actions_2)}, y :
    S(\Actions_1\dual\Actions_2)}$ with one application of \LinkRule.
    From \Cref{lem:buffer-type-inv} we deduce
    $\wtp[n_2]{\Buffer_x'[\Link\x\y]}{\ContextE_2, x :
    \Actions_2.\dual{S(\Actions_1\dual\Actions_2)}, y :
    S(\Actions_1\dual\Actions_2)}$.
    From \Cref{prop:refine-subt} we deduce $S(\Actions_1) \subt
    \dual\Actions_2.S(\Actions_1\dual\Actions_2) =
    \dual{(\Actions_2.\dual{S(\Actions_1\dual\Actions_2)})}$ hence
    $\cc{S(\Actions_1)}{\Actions_2.\dual{S(\Actions_1\dual\Actions_2)}}$.
    We derive $\wtp[n - m_1]{\Cut\x{R}{\Buffer_x'[\Link\x\y]}}{\ContextD_1,
    \ContextE_2, y : S(\Actions_1\dual\Actions_2)}$ with one application of
    \CutRule.
    From \Cref{lem:buffer-type-inv} we deduce $\wtp[n]{Q}{\Context_1,
    \ContextE_2, y : \Actions_1.S(\Actions_1\dual\Actions_2)}$.
    We conclude by taking $\ContextD \eqdef \Context_1, \ContextE_2, y :
    \Actions_1.S(\Actions_1\dual\Actions_2)$ recalling that
    $\Actions_1.S(\Actions_1\dual\Actions_2) \subt S'$.

    \proofcase{\rulename{s-pull-1}}
    Then $P = \Cut\x{\Fork\y\z{P_1}{P_2}}{P_3} \pcong
    \Fork\y\z{\Cut\x{P_1}{P_3}}{P_2} = Q$ where $x \ne y$ and $x\in\fn{P_1}$
    From \CutRule we deduce
    \begin{itemize}
        \item $\Context = \Context_{12}, \Context_3$
        \item $n = n_{12} + n_3$
        \item $\wtp[n_1]{\Fork\y\z{P_1}{P_2}}{\Context_{12}, x : S}$
        \item $\wtp[n_3]{P_3}{\Context_3, x : T}$
        \item $\cc{S}{T}$
    \end{itemize}

    From \ForkRule we deduce
    \begin{itemize}
        \item $\Context_{12} = \Context_1, \Context_2, y : \Times{S_1}{S_2}$
        \item $n_{12} = 1 + n_1 + n_2$
        \item $\wtp[n_1]{P_1}{\Context_1, x : S, z : S_1}$
        \item $\wtp[n_2]{P_2}{\Context_2, y : S_2}$
    \end{itemize}

    We derive $\wtp[n_1 + n_3]{\Cut\x{P_1}{P_3}}{\Context_1, \Context_3, z :
    S_1}$ with one application of \CutRule.
    We derive $\wtp[n]{Q}\Context$ with one application of \ForkRule and we
    conclude by taking $\ContextD \eqdef \Context$.

    \proofcase{\rulename{s-pull-2}}
    Then $P = \Cut\x{\Buffer_y[P_1]}{P_2} \pcong \Buffer_y[\Cut\x{P_1}{P_2}] =
    Q$ where $x \ne y$ and $x \not\in \fn{\Buffer_y}$. Without loss of
    generality we may assume $\Buffer_y \ne \Hole$, or else the result is
    trivial.
    From \CutRule we deduce
    \begin{itemize}
        \item $\Context = \Context_1, \Context_2, y : S$
        \item $n = n_1 + n_2$
        \item $\wtp[n_1]{\Buffer_y[P_1]}{\Context_1, x : S_1, y : S}$
        \item $\wtp[n_2]{P_2}{\Context_2, x : S_2}$
        \item $\cc{S_1}{S_2}$
    \end{itemize}

    From \Cref{lem:buffer-type} we deduce that there exist $m \leq n_1$,
    $\ContextE$, $\Context_1'$ and $\Actions$ such that $\Context_1 = \ContextE,
    \Context_1'$ and $\wtb[m]{\Buffer_y}\ContextE\y\Actions$ and $\wtp[n_1 -
    m]{P_1}{\Context_1', x : S_1, y : S(\Actions)}$.
    We derive $\wtp[n - m]{\Cut\x{P_1}{P_2}}{\Context_1', \Context_2, y :
    S(\Actions)}$ with one application of \CutRule.
    From \Cref{lem:buffer-type-inv} we deduce $\wtp[n]{Q}{\Context_1,
    \Context_2, y : \Actions.S(\Actions)}$.
    We conclude by taking $\ContextD \eqdef \Context_1, \Context_2, y :
    \Actions.S(\Actions)$ and observing that $\Actions.S(\Actions) \subt S$ by
    \Cref{prop:refine-subt}.

    \proofcase{\rulename{s-pull-3}}
    Then $P = \Buffer_x[\Fork\y\z{P_1}{P_2}] \pcong
    \Fork\y\z{\Buffer_x[P_1]}{P_2} = Q$ where $x \ne y$ and $x\in\fn{P_1}$ and
    $z\not\in\fn{\Buffer_x}$.
    Without loss of generality we may assume that $\Buffer_x \ne \Hole$ or else
    the result is trivial. Then $\Context = \Context', x : S$.
    From \Cref{lem:buffer-type} we deduce that there exist $m \leq n$,
    $\ContextE$, $\Context''$ and $\Actions$ such that $\Context' = \ContextE,
    \Context''$ and $\wtb[m]{\Buffer_x}\ContextE\x\Actions$ and $\wtp[n -
    m]{\Fork\y\z{P_1}{P_2}}{\Context'', x : S(\Actions)}$.
    From \ForkRule and $x\in\fn{P_1}$ we deduce that
    \begin{itemize}
        \item $\Context'' = \Context_1, \Context_2, y : \Times{S_1}{S_2}$ for
        some $\Context_1$, $\Context_2$, $S_1$ and $S_2$
        \item $n - m = 1 + n_1 + n_2$ for some $n_1$ and $n_2$
        \item $\wtp[n_1]{P_1}{\Context_1, x : S(\Actions), z : S_1}$
        \item $\wtp[n_2]{P_2}{\Context_2, y : S_2}$
    \end{itemize}

    From \Cref{lem:buffer-type-inv} we deduce $\wtp[n_1 +
    m]{\Buffer_x[P_1]}{\ContextE, \Context_1, x : \Actions.S(\Actions), z :
    S_1}$.
    We derive $\wtp[n]{Q}{\Context', x : \Actions.S(\Actions)}$ with an
    application of \ForkRule and we conclude by taking $\ContextD \eqdef
    \Context', x : \Actions.S(\Actions)$ and observing that
    $\Actions.S(\Actions) \subt S$ by \Cref{prop:refine-subt}.

    \proofcase{\rulename{s-pull-4}}
    Then $P = \Buffer_x[\Buffer_y'[R]] \pcong \Buffer_y'[\Buffer_x[R]] = Q$
    where $x \ne y$ and $x \in \fn{R}$.
    Without loss of generality we may assume that $\Buffer_x \ne \Hole$ and
    $\Buffer_y' \ne \Hole$ or else the result is trivial. Then $\Context =
    \Context', x : S, y : T$.
    From \Cref{lem:buffer-type} we deduce that there exist $m_1 \leq n$,
    $\ContextE_1$, $\Context''$ and $\Actions_1$ such that $\Context' =
    \ContextE_1, \Context''$ and
    $\wtb[m_1]{\Buffer_x}{\ContextE_1}\x{\Actions_1}$ and $\wtp[n -
    m_1]{\Buffer_y'[R]}{\Context'', x : S(\Actions_1), y : T}$.
    From \Cref{lem:buffer-type} and $x\in\fn{R}$ we deduce that there exist $m_2
    \leq n - m_1$, $\ContextE_2$, $\Context'''$ and $\Actions_2$ such that
    $\Context'' = \ContextE_1, \Context'''$ and
    $\wtb[m_2]{\Buffer_y'}{\ContextE_2}\y{\Actions_2}$ and $\wtp[n - m_1 -
    m_2]{R}{\Context''', x : S(\Actions_1), y : T(\Actions_2)}$.
    Using \Cref{lem:buffer-type-inv} we deduce that $\wtp[n -
    m_2]{\Buffer_x[R]}{\ContextE_1, \Context''', x : \Actions_1.S(\Actions_1), y
    : T(\Actions_2)}$.
    Using \Cref{lem:buffer-type-inv} we deduce that $\wtp[n]{Q}{\Context', x :
    \Actions_1.S(\Actions_1), y : \Actions_2.T(\Actions_2)}$.
    We conclude by taking $\ContextD \eqdef \Context', x :
    \Actions_1.S(\Actions_1), y : \Actions_2.T(\Actions_2)$ and by
    \Cref{prop:refine-subt}.

    \proofcase{\rulename{s-call}}
    Then $P = \Call{A}{\seqof\x} \pcong Q$ where $\Context = \seqof{x : S}$ and
    $\Let{A}{\seqof\x} = Q$.
    From \CallRule we deduce $\wtp[m]{Q}\Context$ for some $m \leq n$.
    We conclude by taking $\ContextD \eqdef \ContextC$.
\end{proof}

\begin{lemma}[substitution]
    \label{lem:substitution}
    If $\wtp[n]{P}{\Context, x : S}$ and $y\not\in\dom\Context$, then
    $\wtp[n]{P\subst\y\x}{\Context, y : S}$.
\end{lemma}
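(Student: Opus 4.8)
The plan is to argue by coinduction on the (possibly infinite) typing derivation of $\wtp[n]{P}{\Context, x : S}$, producing a derivation of the \emph{same} measure $n$ for $P\subst\y\x$ in which every node uses exactly the same typing rule as in the original derivation. Since the rules of \Cref{tab:typing-rules} are interpreted coinductively, it suffices to show that the relation
\[
    \set{ (\wtp[n]{P\subst\y\x}{\Context, y : S}) \mid \wtp[n]{P}{\Context, x : S},\ y\not\in\dom\Context }
\]
(together with all well-typedness judgments not mentioning $x$ at all) is closed under backward application of the typing rules; the renamed derivation is then read off node by node from the given one. The case $y = x$ is trivial since $P\subst\x\x = P$. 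For $y \ne x$ we use that processes are identified up to renaming of bound channels, so we may always assume that the channel bound at the current constructor (the $y$ in $\Fork\x\y\cdots$ or $\Join\x\y\cdots$, the cut-bound channel in $\Cut{z}{\cdot}{\cdot}$, etc.) is chosen distinct from $y$; then substitution commutes with every process constructor and no capture occurs.

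The case analysis on the last rule is then largely bookkeeping. In \DoneRule the context is empty and there is nothing to do. In \CloseRule, \WaitRule, \SelectRule, \CaseRule, \ChoiceRule, \ForkRule, \JoinRule the channel $x$ occurs free in at most one premise; we apply the coinductive hypothesis to that premise, leave the others untouched, and reassemble the rule, observing that none of the measure annotations depends on channel names, so the measure is preserved exactly. In \CutRule the side condition $\cc{S}{T}$ is invariant under renaming and the cut-bound channel is fresh for $y$, so the same reasoning applies. In \LinkRule we have $P = \Link\z\w$ with $\set{\z,\w}$ possibly containing $x$; $\Link\z\w\subst\y\x$ is again a link between two distinct channels carrying the same types, so we re-apply \LinkRule with the same measure and the same subtyping side condition $\dual{S}\subt T$ (which, if needed, can be reoriented via \Cref{prop:dual-subtype}).

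The only case needing slightly more care is \CallRule, where $P = \Call{A}{\seqof\z}$ and the premise is the typing of the body $R$ of the global definition $\Let{A}{\seqof\z} = R$ in the context $\seqof{\z : S}$. Using $\fn{R} = \set{\seqof\z}$ we may $\alpha$-rename the component of $\seqof\z$ equal to $x$ into $y$, both in the invocation and in $R$; applying the coinductive hypothesis to the premise yields a derivation of $\wtp{R\subst\y\x}{(\seqof{\z : S})\subst\y\x}$, from which \CallRule re-derives $\wtp[m+n']{\Call{A}{\seqof\z}\subst\y\x}{(\seqof{\z : S})\subst\y\x}$. I expect this to be the main (and only real) point requiring attention, in that it forces one to be explicit about how substitution interacts with process definitions; there is no deep obstacle, and the chief subtlety elsewhere is simply that the argument must be phrased coinductively because typing derivations may be infinite.
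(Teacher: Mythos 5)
Your proposal is correct and follows essentially the same route as the paper, whose entire proof is the one-line observation that the derivation for $P\subst\y\x$ is obtained corecursively from the given one by renaming each free occurrence of $x$ into $y$. Your rule-by-rule case analysis (including the care around \CallRule and capture-avoidance via $\alpha$-renaming) merely makes explicit the bookkeeping the paper leaves implicit.
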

\begin{proof}
    The derivation for $\wtp[n]{P\subst\y\x}{\Context, y : S}$ is obtained
    corecursively from the one for $\wtp[n]{P}{\Context, x : S}$ by renaming
    each free occurrence of $x$ into $y$.
\end{proof}

\thmsr*
\begin{proof}
    By induction on the derivation of $P \red Q$ and by cases on the last rule
    applied.

    \proofcase{\rulename{r-choice}}
    Then $P = P_1 \choice P_2 \red P_k = Q$ for some $k\in\set{1,2}$.
    From \ChoiceRule we deduce $\wtp[\Measure_i]{P_i}\Context$ for $i=1,2$ and
    $n = 1 + n_l$ for some $l\in\set{1,2}$.
    We conclude by taking $m \eqdef n_l$ and $\ContextD \eqdef
    \ContextC$.
    Note that, if $l = k$, then $m = n_k < 1 + n_k = n$.

    \proofcase{\rulename{r-link}}
    Then $P = \Cut\x{\Link\x\y}{P} \red P\subst\y\x = Q$.
    From \CutRule we deduce
    \begin{itemize}
        \item $n = n_1 + n_2$    
        \item $\Context = \Context_1, \Context_2$
        \item $\wtp[n_1]{\Link\x\y}{\Context_1,x : S}$
        \item $\wtp[n_2]{P}{\Context_2,x : T}$
        \item $\cc{S}{T}$
    \end{itemize}

    From \LinkRule we deduce
    \begin{itemize}
        \item $n_1 \geq 1$
        \item $\Context_1 = y : S'$ for some $S'$
        \item $\dual{S} \subt S'$
    \end{itemize}

    From \Cref{lem:substitution} we deduce $\wtp[n_2]{P\subst\y\x}{\Context_2, y
    : T}$ and we conclude by taking $m \eqdef n_2$ and $\ContextD \eqdef
    \Context_2, y : T$. Note that $\ContextD = y : T, \Context_2 \subt y : S',
    \Context_2 = \Context_1, \Context_2 = \ContextC$ and that $m = n_2 < 1 + n_2
    \leq n_1 + n_2 = n$.\LP{Spiegare come si deduce la relazione $T \subt S'$.}

    \proofcase{\rulename{r-close}}
    Then $P = \Cut\x{\Close\x}{\Wait\x.P} \red P$.
    From \CutRule we deduce
    \begin{itemize}
        \item $\Context = \Context_1, \Context_2$
        \item $\wtp[n_1]{\Close\x}{\Context_1,x : S}$
        \item $\wtp[n_2]{\Wait\x.P}{\Context_2,x : T}$
        \item $\cc{S}{T}$
        \item $n = n_1 + n_2$
    \end{itemize}

    From \CloseRule we deduce
    \begin{itemize}
        \item $\Context_1 = \EmptyContext$, therefore $\Context_2 = \Context$
        \item $n_1 \geq 1$
        \item $S = \One$
    \end{itemize}

    From \WaitRule we deduce
    \begin{itemize}
        \item $T = \Bot$
        \item $\wtp[\Measure_2]{P}\Context$
    \end{itemize}
    
    We conclude by taking $m \eqdef n_2$ and $\ContextD \eqdef \ContextC$. Note
    that $m = n_2 < 1 + n_2 \leq n_1 + n_2 = n$.

    \proofcase{\rulename{r-select}}
    Then $P = \Cut\x{\Select\x{\Tag_k}.R}{\Buffer_x[\Case\x{\Tag_i:P_i}_{i\in
    I}]} \red \Cut\x{R}{P_k} = Q$ with $k\in I$.
    From \CutRule we deduce
    \begin{itemize}
        \item $n = n_1 + n_2$
        \item $\Context = \Context_1, \Context_2$
        \item $\wtp[n_1]{\Select\x{\Tag_k}.R}{\Context_1, x : S}$
        \item $\wtp[n_2]{\Buffer_x[\Case\x{\Tag_i:P_i}_{i\in I}]}{\Context_2, x : T}$
        \item $\cc{S}{T}$
    \end{itemize}

    From \SelectRule we deduce
    \begin{itemize}
        \item $S = \Plus{\ann[m_i]{\Tag_i} : S_i}_{i\in K}$
        \item $n_1 = 1 + m_k + n_1'$
        \item $k\in K$
        \item $\wtp[n_1']{R}{\Context_1, x : S_k}$
    \end{itemize}

    From \Cref{lem:buffer-type} we deduce that there exist $m' \leq n_2$,
    $\ContextE$, $\ContextD'$ and $\Actions$ such that $\Context_2 = \ContextE,
    \ContextD'$ and $\wtb[m']{\Buffer_x}\ContextE\x\Actions$ and
    $\wtp[n_2-m']{\Case\x{\Tag_i : P_i}_{i\in I}}{\ContextD', x : T(\Actions)}$.
    From \CaseRule we deduce
    \begin{itemize}
        \item $T(\Actions) = \With{\ann[m_i]{\Tag_i} : T_i}_{i\in J}$ with $J
        \subseteq I$
        \item $\wtp[n_2 - m' + m_i]{P_i}{\ContextD', x : T_i}$ for every $i\in J$
    \end{itemize}

    From \Cref{lem:buffer-type-inv} we derive
    $\wtp[n_2+m_k]{\Buffer_x[P_k]}{\Context_2, x : \Actions.T_k}$.
    From $\cc{S}{T}$ we deduce $K \subseteq J$ and
    $\cc{S_k}{T(\inp\ann[m_k]{\Tag_k})}$.
    Also using \Cref{prop:refine-subt} we have $\Actions.T_k =
    \Actions.T(\Actions)(\inp\ann[m_k]{\Tag_k}) =
    (\Actions.T(\Actions))(\inp\ann[m_k]{\Tag_k}) \subt
    T(\inp\ann[m_k]{\Tag_k})$, therefore $\cc{S_k}{(\Actions.T_k)}$.
    From \CutRule we derive $\wtp[n-1]{Q}\Context$ and we conclude by taking $m
    \eqdef n - 1$ and $\ContextD \eqdef \Context$.

    \proofcase{\rulename{r-fork}}
    Then $P = \Cut\x{\Fork\x\y{P_1}{P_2}}{\Buffer_x[\Join\x\y.P_3]} \red
    \Cut\y{P_1}{\Cut\x{P_2}{\Buffer_x[P_3]}} = Q$.
    From \CutRule we deduce
    \begin{itemize}
        \item $n = n_{12} + n_3$
        \item $\Context = \Context_{12}, \Context_3$
        \item $\wtp[n_{12}]{\Fork\x\y{P_1}{P_2}}{\Context_{12}, x : S}$
        \item $\wtp[n_3]{\Buffer_x[P_3]}{\Context_3, x : T}$
        \item $\cc{S}{T}$
    \end{itemize}

    From \ForkRule we deduce
    \begin{itemize}
        \item $\Context_{12} = \Context_1, \Context_2$
        \item $n_{12} = 1 + n_1 + n_2$
        \item $S = \Times{S_1}{S_2}$
        \item $\wtp[n_1]{P_1}{\Context_1, y : S_1}$
        \item $\wtp[n_2]{P_2}{\Context_2, x : S_2}$
    \end{itemize}

    From \Cref{lem:buffer-type} we deduce that there exist $m' \leq n_3$,
    $\ContextE$, $\ContextD'$ and $\Actions$ such that $\Context_3 = \ContextE,
    \ContextD'$ and $\wtb[m']{\Buffer_x}\ContextE\x\Actions$ and $\wtp[n_3 -
    m']{\Join\x\y.P_3}{\ContextD', x : T(\Actions)}$.
    From \JoinRule we deduce
    \begin{itemize}
        \item $T(\Actions) = \Par{T_1}{T_2}$
        \item $\wtp[n_3 - m']{P_3}{\ContextD', y : T_1, x : T_2}$
    \end{itemize}

    From $\cc{S}{T}$ we deduce $\cc{S_1}{T_1}$ and $\cc{S_2}{T(\inp T_1)}$.
    Also, using \Cref{prop:refine-subt} we have $\Actions.T_2 =
    \Actions.T(\Actions)(\inp T_1) = (\Actions.T(\Actions))(\inp T_1) \subt
    T(\inp T_1)$, therefore $\cc{S_2}{\Actions.T_2}$.
    Using \Cref{lem:buffer-type-inv} we deduce
    $\wtp[n_3]{\Buffer_x[P_3]}{\Context_3, y : T_1, x : \Actions.T_2}$.
    We derive $\wtp[n_2 + n_3]{\Cut\x{P_2}{\Buffer_x[P_3]}}{\Context_2,
    \Context_3, y : T_1}$ with an application of \CutRule.
    We derive $\wtp[n - 1]{Q}\Context$ with another application of \CutRule and
    we conclude by taking $m \eqdef n - 1$ and $\ContextD \eqdef \ContextC$.

    \proofcase{\rulename{r-cut}}
    Then $P = \Cut\x{P_1}{P_2} \red \Cut\x{P_1'}{P_2} = Q$ where $P_1 \red P_1'$.
    From \CutRule we deduce
    \begin{itemize}
        \item $n = n_1 + n_2$
        \item $\Context = \Context_1, \Context_2$
        \item $\wtp[n_1]{P_1}{\Context_1, x : S}$
        \item $\wtp[n_2]{P_2}{\Context_2, x : T}$
        \item $\cc{S}{T}$
    \end{itemize}

    Using the induction hypothesis we deduce $\wtp[n_1']{P_1'}{\Context_1', x :
    S'}$ for some $\Context_1' \subt \Context_1$ and $S' \subt S$, hence
    $\cc{S'}{T}$. We conclude by taking $m \eqdef n_1' + n_2$ and $\ContextD
    \eqdef \Context_1', \Context_2$ with one application of \CutRule.

    \proofcase{\rulename{r-buffer}}
    Then $P = \Buffer_x[R] \red \Buffer_x[R'] = Q$ where $R \red R'$.
    From \Cref{lem:buffer-type} we deduce that there exist $m' \leq n$,
    $\ContextE$, $\ContextD'$ and $\Actions$ such that $\Context = \ContextE,
    \ContextD'$ and $\wtb[m']{\Buffer_x}\ContextE\x\Actions$ and
    $\wtp[n-m']{R}{\ContextD', x : T(\Actions)}$.
    Using the induction hypothesis we deduce that $\wtp[m'']{R'}{\ContextD'', x
    : T''}$ for some $\ContextD'' \subt \ContextD'$ and $T'' \subt T(\Actions)$.
    From \Cref{lem:buffer-type-inv} we derive $\wtp[m' +
    m'']{\Buffer_x[R']}{\ContextE, \ContextD'', x : \Actions.T''}$.
    Note that using \Cref{prop:refine-subt} we have $\Actions.T'' \subt
    \Actions.T(\Actions) \subt T$.
    We conclude by taking $m \eqdef m' + m''$ and $\ContextD \eqdef \ContextE,
    \ContextD'', x : \Actions.T''$.

    \proofcase{\rulename{r-str}}
    Then $P \pcong R \red Q$.
    From \Cref{lem:pcong-preservation} we deduce $\wtp[n']{R}\ContextE$ for some
    $n' \leq n$ and $\ContextE \subt \Context$.
    Using the induction hypothesis we conclude $\wtp[m]{Q}\ContextD$ for some
    $\ContextD \subt \ContextE \subt \Context$.
\end{proof}
\subsection{Proof of Theorem~\ref{thm:df}}
\label{sec:proofs-deadlock-freedom}

\begin{table}
    \caption{\label{tab:obs}Observability predicate for processes.}
    \begin{mathpar}
        \inferrule{~}{
            \thread\Done
        }
        \and
        \inferrule{~}{
            \thread{\Link\x\y}
        }
        \and
        \inferrule{~}{
            \thread{\Close\x}
        }
        \and
        \inferrule{~}{
            \thread{\Wait\x.P}
        }
        \and
        \inferrule{~}{
            \thread{\Join\x\y.P}
        }
        \and
        \inferrule{~}{
            \thread{\Case\x{\Tag_i : P_i}_{i\in I}}
        }
        \and
        \inferrule{
            \thread{P}
        }{
            \thread{\Buffer_x[P]}
        }
    \end{mathpar}
\end{table}

As we have anticipated in \Cref{sec:soundness} the proof of the deadlock freedom
result goes through a more general productivity result stating that every
well-typed process is either reducible or is structurally precongruent to a
thread. In this context, the notion of ``thread'' is defined by the predicate
$\thread\cdot$ inductively defined in \Cref{tab:obs}.

Next we define the \emph{depth} of a process as the size of its unguarded
portion, which is made of cuts and of process invocations. Because of the
guardedness assumption made in \Cref{sec:language}, this quantity is always well
defined.

\begin{definition}[depth of a process]
    \label{def:depth}
    The \emph{depth} of a process $P$ is defined as the quantity
    \[
        \depth{P} =
        \begin{cases}
            1 + \depth{P_1} + \depth{P_2} & \text{if $P = \Cut\x{P_1}{P_2}$} \\
            1 + \depth{Q} & \text{if $P = \Call{A}{\seqof\x}$ and $\Let{A}{\seqof\x} = Q$} \\
            0 & \text{otherwise}
        \end{cases}
    \]
\end{definition}

\begin{lemma}
    \label{lem:df}
    If $\wtp[n]{P}\Context$, then there exists $Q$ such that either $P \pcong Q$
    and $\thread{Q}$ or $P \red Q$.
\end{lemma}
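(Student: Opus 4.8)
The plan is to argue by well-founded induction on the lexicographically ordered pair $(n,\depth P)$, where $n$ comes from the hypothesis $\wtp[n]P\Context$ and $\depth P$ is the quantity of \Cref{def:depth}, proceeding by cases on the shape of $P$. When $P$ is one of $\Done$, $\Link\x\y$, $\Close\x$, $\Wait\x.P'$, $\Case\x{\Tag_i : P_i}_{i\in I}$ or $\Join\x\y.P'$ we have $\thread P$ directly by \Cref{tab:obs}, so $Q\eqdef P$ works; when $P = P_1\choice P_2$ we have $P\red P_1$ by \rulename{r-choice}. When $P$ is an output prefix $\Select\x\Tag.P'$ or $\Fork\x\y{P_1}{P_2}$ we regard it as $\Buffer_x[P']$ (respectively $\Buffer_x[P_2]$) for a one-prefix buffer $\Buffer_x$; inversion of \SelectRule (respectively \ForkRule) shows the continuation has strictly smaller measure, so the induction hypothesis applies to it, and we propagate the result through $\Buffer_x$ with \rulename{r-buffer} or the last rule of \Cref{tab:obs}. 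When $P = \Call A{\seqof y}$ we use \rulename{s-call} to write $P\pcong Q'$ with $Q'$ the instantiated body; by \CallRule and \Cref{def:depth} the measure of $Q'$ is at most $n$ while $\depth{Q'}<\depth P$, so the induction hypothesis applies to $Q'$, and we conclude by transitivity of $\pcong$, combined with \rulename{r-str} if $Q'$ reduces.

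The substantial case is $P = \Cut\x{P_1}{P_2}$. Since $\depth{P_i}<\depth P$ and the measure of each $P_i$ is at most $n$, the induction hypothesis applies to both $P_1$ and $P_2$; if either reduces we conclude by \rulename{r-cut}. Otherwise $P_1\pcong T_1$ and $P_2\pcong T_2$ with $\thread{T_1}$, $\thread{T_2}$, so $P\pcong\Cut\x{T_1}{T_2}$, which by \Cref{lem:pcong-preservation} is still well typed; inverting \CutRule we get $x:S$ on the left, $x:T$ on the right and $\cc S T$. By \rcac{pp} one of $S$, $T$ is positive, and by \rulename{s-comm} we may assume $\positive S$.

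The remaining work is to dispatch $\Cut\x{T_1}{T_2}$ with $\positive S$. Writing $T_1$ in its maximal form $\Buffer[R_1]$ --- nested buffers around a thread-base $R_1$ --- and reading off types with \Cref{lem:buffer-type}, there are three (exhaustive, since $x\in\fn{T_1}$) possibilities. (i) $T_1$ is ready to act on $x$: its buffer part holds a pending output on $x$, or $R_1$ is $\Close\x$ or a link on $x$ (it cannot be $\Wait\x$, $\Case\x$ or $\Join\x$, since $S$ is positive). Then we float out of the cut, via \rulename{s-pull-1} and \rulename{s-pull-2} (using \rulename{s-pull-4} to regroup buffers), all buffers of $T_1$ and $T_2$ not concerning $x$; \rcac{oi} and \rcac{ois} guarantee that $T_2$ supplies the matching input on $x$, so after a possible \rulename{s-pull-0} one of \rulename{r-close}, \rulename{r-link}, \rulename{r-select}, \rulename{r-fork} fires. (ii) The head thread-base of $T_1$ or $T_2$ is an input on some $y\ne x$: extrude it from the cut with \rulename{s-wait}, \rulename{s-case} or \rulename{s-join} (after \rulename{s-comm} if it is on the right side of the cut), yielding a process whose head is an input thread-base, hence a thread. (iii) $x$ occurs only inside a process spawned by a $\Fork$ nested in the buffer part of $T_1$: \rulename{s-pull-1} or \rulename{s-pull-3} pushes the cut on $x$ inward next to that subprocess, giving a cut of strictly smaller measure to which the induction hypothesis applies, and the outcome is carried back out through the enclosing $\Fork$ with \rulename{r-buffer} or the last rule of \Cref{tab:obs}.

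I expect cases (ii) and (iii), together with the buffer bookkeeping in case (i), to be the crux: one must check that the pre-congruence rearrangements really bring the two threads into a shape where a base reduction fires or an input can be extruded, while keeping everything well typed --- appealing to \Cref{lem:pcong-preservation,lem:buffer-type} and its converse \Cref{lem:buffer-type-inv} --- and keeping the measure or the depth strictly smaller wherever the induction hypothesis is re-invoked. The positivity of $S$ is exactly what rules out the degenerate configuration of two threads each merely holding outputs and never consuming them.
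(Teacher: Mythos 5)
Your strategy is the paper's: the same lexicographic induction on $(n,\depth{P})$, the same treatment of the base, prefix, invocation and choice cases, and in the cut case the same plan of pulling foreign buffers out with the \rulename{s-pull-*} rules, extruding inputs on $y\ne x$ with \rulename{s-wait}/\rulename{s-case}/\rulename{s-join}, and otherwise using $\cc{S}{T}$ to argue that a base reduction fires. There is, however, one step that fails as written. In your case (iii), after \rulename{s-pull-1} (or \rulename{s-pull-3}) has pushed the cut on $x$ into the \emph{spawned-process} position of a fork, yielding a term of the shape $\Fork\y\z{\Cut\x{P'}{T_2}}{Q'}$, you re-invoke the induction hypothesis on the inner cut and claim the outcome ``is carried back out through the enclosing $\Fork$ with \rulename{r-buffer} or the last rule of \Cref{tab:obs}''. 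Neither tool reaches that position: the buffer grammar $\Fork\x\y{P}{\Buffer_x}$ places the hole in the \emph{continuation}, so \rulename{r-buffer} cannot lift a reduction of $\Cut\x{P'}{T_2}$ to a reduction of the whole term, and the rule $\thread{P}\Rightarrow\thread{\Buffer_x[P]}$ likewise only inspects the hole. Hence if the induction hypothesis returns ``the inner cut reduces'' rather than ``it is precongruent to a thread'', your argument has no way to conclude.

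The repair is that no recursive call is needed there: $Q'$ is the remainder of the original thread $T_1$, i.e.\ a buffer wrapped around the thread base $R_1$, so $\thread{Q'}$ and therefore $\thread{\Fork\y\z{\Cut\x{P'}{T_2}}{Q'}}$ already holds by the last rule of \Cref{tab:obs}, irrespective of what the inner cut does. This is exactly how the paper closes that subcase (``if in this procedure we have used either \rulename{s-pull-1} or \rulename{s-pull-3}, then we have obtained the desired thread''). The rest of your sketch, including the compressed appeal to \rcac{oi} and \rcac{ois} to ensure that the peer syntactically offers the matching input, corresponds to the paper's explicit case analysis on the pairs link/close/input for the two thread bases (where, e.g., typing forces the peer buffer of a $\Close\x$ to be empty so that \rulename{r-close} applies); those details need the care you anticipate, but the approach there is sound.
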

\begin{proof}
    By induction on the lexicographically ordered pair $(n,\depth{P})$ and by
    cases on the shape of $P$.

    \proofcase{$P = \Done$ or $P = \Close\x$ or $P = \Link\x\y$ or $P =
    \Wait\x.Q$ or $P = \Case\x{\Tag_i : P_i}_{i\in I}$ or $P = \Join\x\y.Q$}
    Then $\thread{P}$ holds and we conclude by reflexivity of $\pcong$.

    \proofcase{$P = \Call{A}{\seqof\x}$ where $\Let{A}{\seqof\x} = P'$}
    Then $\depth{P'} < \depth{P}$. By induction hypothesis there exists $Q$ such
    that either $P' \pcong Q$ and $\thread{Q}$ or $P' \red Q$. In the first
    sub-case we conclude since $P \pcong P' \pcong Q$. In the second sub-case we
    conclude $P \pcong P' \red Q$ by \rulename{r-str}.

    \proofcase{$P = P_1 \choice P_2$}
    Then we conclude by taking $Q \eqdef P_1$ and observing that $P \red Q$ by
    \rulename{r-choice}.

    \proofcase{$P = \Select\x\Tag.P'$}
    From \SelectRule we deduce that
    \begin{itemize}
        \item $\Context = \ContextD, x : S$
        \item $S = \Plus{\ann[m_i]{\Tag_i} : S_i}_{i\in I}$
        \item $\Tag = \Tag_k$ for some $k\in I$
        \item $n = 1 + m_k + n'$
        \item $\wtp[n']{P'}{\ContextD, x : S_k}$
    \end{itemize}

    Note that $n' < n$. Using the induction hypothesis we deduce that there
    exists $Q'$ such that either $P' \pcong Q'$ and $\thread{Q'}$ or $P' \red
    Q'$. Let $Q \eqdef \Select\x\Tag.Q'$. In the first sub-case we observe that
    $\thread{Q}$ and $P \pcong Q$. In the second sub-case we observe that $P
    \red Q$ by \rulename{r-buffer}.

    \proofcase{$P = \Fork\x\y{P''}{P'}$} Analogous to the previous case.

    \proofcase{$P = \Cut\x{P_1}{P_2}$}
    From \CutRule we deduce
    \begin{itemize}
        \item $n = n_1 + n_2$ for some $n_1$ and $n_2$
        \item $\Context = \Context_1, \Context_2$ for some $\Context_1$ and $\Context_2$
        \item $\wtp[n_i]{P_i}{\Context_i, x : S_i}$ for $i=1,2$
        \item $\cc{S}{T}$
    \end{itemize}

    We have $n_i \leq n$ and $\depth{P_i} < \depth{P}$. Using the induction
    hypothesis twice we deduce that, for every $i=1,2$, there exists $Q_i$ such
    that either $\thread{Q_i}$ and $P_i \pcong Q_i$ or $P_i \red Q_i$.
    If $P_1 \red Q_1$, then we conclude by taking $Q \eqdef \Cut\x{Q_1}{P_2}$
    and observing that $P \red Q$ by \rulename{r-cut}.
    If $P_2 \red Q_2$, then we conclude by taking $Q \eqdef \Cut\x{Q_2}{P_1}$
    and observing that $P \pcong \Cut\x{P_2}{P_1} \red Q$ by \rulename{s-comm}
    followed by \rulename{r-cut}.
    
    The most interesting case is when $\thread{Q_i}$ and $P_i \pcong Q_i$ for
    every $i=1,2$. In this case we have $P = \Cut\x{P_1}{P_2} \pcong
    \Cut\x{Q_1}{Q_2}$ where $Q_i =
    \Buffer_{x_{i1}}[\cdots\Buffer_{x_{in}}[R_i]]$, none of the
    $\Buffer_{x_{ij}}$ is $\Hole$ and $R_i$ is one of the processes in the base
    cases of the $\thread\cdot$ predicate. For brevity, we classify those
    processes as \emph{links} ($\Link\x\y$), \emph{closures} ($\Close\x$) and
    \emph{inputs} (all the remaining forms).

    As long as there exist $i$ and $j$ such that $x_{ij} \ne x$, then we can use
    \rulename{s-pull-3} and \rulename{s-pull-4} to pull the buffer
    $\Buffer_{x_{ij}}$ to the top level of $Q_i$ and then \rulename{s-pull-1} or
    \rulename{s-pull-2} to pull it outside of the cut.
    If in this procedure we have used either \rulename{s-pull-1} or
    \rulename{s-pull-3}, then we have obtained the desired thread $Q$ such that
    $P \pcong Q$.
    If in this procedure we have only used \rulename{s-pull-2} and
    \rulename{s-pull-4}, then we are left with a process of the form
    $\Buffer_{x_1}[\cdots\Buffer_{x_m}[\Cut\x{\Buffer_x^1[R_1]}{\Buffer_x^2[R_2]}]]$
    where none of the $x_i$ is $x$.
    Also, it must be the case that $x$ occurs free in $R_i$ for
    $i=1,2$.\LP{Questo sarebbe da provare}

    If either $R_1$ or $R_2$ is an input on a channel $y \ne x$, then we can use
    one of the rules \rulename{s-wait}, \rulename{s-case} or \rulename{s-join}
    to obtain the desired thread $Q$.
    If none of $R_1$ and $R_2$ is an input on a channel $y \ne x$, then we argue
    that the cut $\Cut\x{\Buffer_x^1[R_1]}{\Buffer_x^2[R_2]}$ in this process
    does reduce, which is enough to obtain the desired $Q$ such that $P \red Q$.
    Recall that this cut is well typed, hence it introduces a name $x$ which is
    typed by some $S_i$ in $\Buffer_x^i[R_i]$ such that $\cc{S_1}{S_2}$
    holds.
    We discuss all the possibilities below, omitting (some) symmetric cases.
    \begin{itemize}
    \item (either $R_1$ or $R_2$ is $\Done$) This case is impossible since $x$
        does not occur free in $\Done$.
    \item (both $R_1$ and $R_2$ are links) If $\Buffer_x^i = \Hole$ for some
        $i=1,2$, then we can use \rulename{r-link} possibly preceded by
        \rulename{s-comm} and/or \rulename{s-link} to reduce the cut. If
        $\Buffer_x^i \ne \Hole$ for every $i=1,2$ we can use \rulename{s-pull-0}
        to pull one of the (non-empty) buffers out of the cut and then proceed
        as just discussed when (at least) one buffer is empty.
    \item (both $R_1$ and $R_2$ are inputs) From $\cc{S_1}{S_2}$ we deduce that
        at least one among $S_1$ and $S_2$, say $S_1$, must start with an
        output. But then $\Buffer_x^1$ cannot be empty and must actually have
        the right shape to synchronize with $R_2$.
    \item (both $R_1$ and $R_2$ are $\Close\x$) This case is ruled out by
        $\cc{S_1}{S_2}$.
    \item ($R_1$ is a link and $R_2$ is an input) If $\Buffer_x^1 = \Hole$ then
        the cut reduces by \rulename{r-link}. If $\Buffer_x^1 \ne \Hole$, then
        the buffer must start with an output that synchronizes with $R_2$.
    \item ($R_1$ is a link and $R_2$ is $\Close\x$) Then $\Buffer_x^1 = \Hole$
        and the cut reduces by \rulename{r-link}.
    \item ($R_1$ is an input and $R_2$ is $\Close\x$) If $\Buffer_x^2 = \Hole$
        then from $\cc{S_1}{S_2}$ we deduce that $\Buffer_x^1 = \Hole$ as well,
        hence the cut reduces by \rulename{r-close}. If $\Buffer_x^2 \ne \Hole$,
        then $\Buffer_x^2$ must start with an output that synchronizes with
        $R_1$.
        \qedhere
    \end{itemize}
\end{proof}

\thmdf*
\begin{proof}
    Consider a reduction $P \wred Q \nred$.
    From \Cref{thm:sr} we deduce $\wtp[m]{Q}\EmptyContext$.
    From \Cref{lem:df} we deduce that $Q \pcong R$ and $\thread{R}$. The only
    thread that is well typed in the empty context is $\Done$, hence $Q \pcong
    \Done$.
\end{proof}
\subsection{Proof of Theorem~\ref{thm:wt}}
\label{sec:proofs-fair-termination}

\thmwt*
\begin{proof}
    We have to show that $P$ has a finite run.
    By inspecting the proof of \Cref{thm:sr} we see that, in all the base cases
    except when $P = P_1 \choice P_2$, the measure of the process by at least
    one unit. This property holds also when $P = P_1 \choice P_2$, provided that
    we choose the reduction $P \to P_k$ where $k$ is the index of the $P_k$ that
    determines the measure of the choice in the conclusion of \ChoiceRule (\cf
    \Cref{tab:typing-rules}).
    Therefore we can build a finite run for $P$ by induction on $n$ as follows.
    If $P \nred$ we conclude by taking the sequence $(P)$.
    If $P \red$, then by the above reasoning we can find $Q$ such that $P \red
    Q$ and $\wtp[m]{Q}\ContextD$ where $m < n$ and $\ContextD \subt \Context$.
    Using the induction hypothesis we deduce that $Q$ has a finite run
    $(Q_0,Q_1,\dots)$, but then $(P,Q_0,Q_1,\dots)$ is a finite run of $P$.
\end{proof}

\thmt*
\begin{proof}
    This is a trivial special case of \Cref{thm:wt} where the case in which a
    non-deterministic choice is reduce cannot happen.
\end{proof}

\section{Supplement to Section~\ref{sec:subtyping-inclusions}}

We define the following auxiliary relation $\auxsubt$.
\begin{definition}
    \label{def:auxsubt}
    We write $\auxsubt$ for the largest relation such that $S \auxsubt T$ implies:
    \begin{enumerate}
    \item\label{aux:pol} either $\positive(S)$ or $\negative(T)$;
    \item\label{aux:inp} if $T \lmust{\inp\MessageType} T'$ then $S
        \lred{\inp\MessageType} S'$ and $S' \auxsubt T'$;
    \item\label{aux:out} if $S \lmust{\out\MessageType} S'$ then $T
        \lred{\out\MessageType} T'$ and $S' \auxsubt T'$.
    \end{enumerate}
\end{definition}

This relation is useful in the proof of \cref{thm:inclusions} because it
obviously includes both the subtyping relations $\mysubt$ and $\blzsubt$; hence
we can prove ${\mysubt} \subseteq {\subt}$ and ${\blzsubt} \subseteq {\subt}$
simply by showing that $\auxsubt$ is included in $\subt$. We need some
preliminary lemmata before proving this inclusion. The first of these lemmata
uses the transition relation $S \cored\Action T$ which corresponds to the
coinductive part of the transition relation $S \lred\Action T$, \ie it considers
only the singly-lined rules in \cref{tab:lts} interpreted coinductively.

\begin{lemma}
  \label{lem:coind}
  If $S$ is fairly terminating and $S \cored\Action T$, then $S \lred\Action T$.
\end{lemma}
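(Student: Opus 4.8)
The plan is to prove the lemma by \emph{bounded coinduction} on the generalized inference system that defines $\lred\Action$. Since $\lred\Action$ is the largest relation contained in $\ired\Action$ that satisfies the singly-lined rules of \Cref{tab:lts}, it suffices to exhibit a relation $X$ with $(S,T)\in X$ such that every pair in $X$ is the conclusion of an instance of a singly-lined rule whose premises lie in $X$, and every pair in $X$ admits a \emph{finite} derivation (that is, $X\subseteq\ired\Action$). I would take $X \eqdef \set{ (S',T') \mid S'\text{ is fairly terminating and } S' \cored\Action T' }$, which contains $(S,T)$ by hypothesis.

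For the first requirement I would first record an auxiliary fact: fair termination of session types (\Cref{def:ft-type}) is preserved under immediate transitions, i.e.\ if $S'$ is fairly terminating and $S' \lmust\ActionB S''$ then $S''$ is fairly terminating. This holds because strong fairness is a tail property, so prepending the step $S' \lmust\ActionB S''$ to any maximal strongly fair sequence of immediate transitions out of $S''$ yields one out of $S'$, which by assumption reaches $\set{\One,\Bot}$. Closure of $X$ then follows easily: if $(S',T')\in X$ and $S' \cored\Action T'$ is derived by a \rulename{must-*} axiom there are no premises, and if it is derived by a \rulename{may-*} rule each premise has the form $S'_i \cored\Action T'_i$ where $S'_i$ is reached from $S'$ by a single immediate transition (an output when $\Action$ is an input, an input when $\Action$ is an output), so $S'_i$ is fairly terminating by the auxiliary fact and the premises belong to $X$.

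The substantive step is $X\subseteq\ired\Action$. Fix $(S',T')\in X$ and treat the case $\Action=\inp\MessageType$ (the case $\Action=\out\MessageType$ is symmetric, interchanging the two polarities and replacing \CoPlusT by \CoWithT and \ATimesT by \AParT). By feasibility of strong fairness, pick a maximal strongly fair sequence $\rho : S' = S'_0 \lmust{\out\MessageType_1} S'_1 \lmust{\out\MessageType_2}\cdots$ of immediate \emph{output} transitions. Chasing $\rho$ through a coinductive derivation of $S' \cored\Action T'$, a routine induction shows that every $S'_j$ occurs as a source $S'_j \cored\Action T'_j$ of that derivation — in particular $S'_j\neq\One$, since no singly-lined rule derives an input-labelled judgement for $\One$ — and that whenever $S'_j$ is not the last element the rule applied there is \APlusT or \ATimesT and $S'_{j+1}$ is exactly the premise selected by the transition $\out\MessageType_{j+1}$. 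I expect the key point of the whole proof to be the following observation: in an \emph{infinite} such $\rho$ every $S'_j$ performs an immediate output and is therefore of the form $\Plus{\Tag_i:U_i}_{i\in I}$ with $I\neq\emptyset$ or $\Times U{U'}$ — types whose only immediate transitions are outputs — so strong fairness with respect to immediate outputs coincides along $\rho$ with strong fairness with respect to \emph{all} immediate transitions; but then $\rho$ is an infinite (hence maximal) strongly fair sequence of immediate transitions out of the fairly terminating $S'$ that never visits $\One$ or $\Bot$, contradicting \Cref{def:ft-type}. Hence $\rho$ is finite and its last type $S'_n$ performs no immediate output; being a source of the derivation it is one of $\Bot$, $\With{\cdots}$, $\Par{}{}$ or $\Zero=\Plus{}$, and in each case $S'_n \cored\Action T'_n$ is an axiom of $\ired\Action$. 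From here I would build the required finite derivation of $S' \ired\Action T'$ by descending induction along $S'_n,\dots,S'_0$: at a $\Plus$-node apply the corule \CoPlusT with the single branch selected by $\rho$, leaving the components of the target that \CoPlusT does not constrain equal to those of the coinductive derivation; at a $\Times$-node apply \ATimesT to its unique premise. This produces a finite derivation of $S' \ired\Action T'$, so $X\subseteq\ired\Action$, and bounded coinduction then yields $S \lred\Action T$. (The finiteness argument for $\rho$ closely mirrors the one in the proof of \Cref{thm:fas}.)
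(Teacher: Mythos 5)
Your proof is correct, and the overall scaffolding (bounded coinduction: exhibit $X$ containing $(S,T)$, check consistency with the singly-lined rules, then show $X\subseteq{\ired\Action}$) matches what the paper implicitly does; the difference lies in how the key step $\cored\Action\subseteq\ired\Action$ is discharged. The paper's proof is a two-line sketch: an induction on the minimum depth of a $\One$/$\Bot$ leaf in $S$, which is finite because fair termination of a regular type forces every subtree to contain such a leaf, proceeding by cases on the shape of $S$ and using \CoPlusT to descend into the single branch that decreases the measure. You instead extract a maximal strongly fair sequence $\rho$ of immediate outputs, prove $\rho$ finite by contradiction with \Cref{def:ft-type} (correctly disposing of the delicate points: $\One$ cannot be a source of an input-labelled judgement, and along an infinite $\rho$ every state is a $\Plus$ with nonempty index set or a $\Times{}{}$, so output-fairness coincides with fairness tout court), and then build the finite $\ired\Action$ derivation by descending induction along $\rho$ — essentially transplanting the $(\Leftarrow)$ argument of \Cref{thm:fas}. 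Both routes are sound; yours works directly from \Cref{def:ft-type} as stated (at the cost of invoking feasibility of strong fairness, which the paper also assumes without proof elsewhere), whereas the paper's shorter induction silently relies on the equivalence between fair termination and boundedness of the distance to a leaf, a fact stated only in a commented-out proposition. Your version is therefore somewhat longer but arguably more self-contained and makes explicit the consistency check on $X$ (preservation of fair termination under immediate transitions) that the paper omits.
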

\begin{proof}
  It suffices to show that $S \cored\Action T$ implies $S \ired\Action T$. This
  follows by an easy induction on the minimum depth of a leaf in $S$, knowing
  that such depth is finite from the assumption that $S$ is fairly terminating,
  and by cases on the shape of $S$.
\end{proof}

\begin{lemma}
  \label{lem:aux-inp}
  If $S \auxsubt T$ and $T \lred{\inp\MessageType}$, then $S
  \lred{\inp\MessageType}$.
\end{lemma}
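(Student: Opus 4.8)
The plan is to build a (possibly infinite) derivation of $S \cored{\inp\MessageType} S'$ for some $S'$ using only the singly‑lined rules of \cref{tab:lts} interpreted coinductively, and then to promote it to the desired transition via \cref{lem:coind}, recalling that throughout this section session types are first‑order and fairly terminating (\cref{def:ft-type}). The construction is coinductive, with the invariant that the type $S'$ under consideration is matched by some $T'$ with $S' \auxsubt T'$ and $T' \lred{\inp\MessageType}$, the initial instance being $(S,T)$; I proceed by cases on the shape of $S'$. If $S'$ is negative, clause~(\ref{aux:pol}) of \cref{def:auxsubt} forces $T'$ to be negative, and since a first‑order negative type performs no late input transition (every input rule with a $\Bot$ or $\With$ on the left is an axiom), $T' \lred{\inp\MessageType}$ is immediate, so $T' \lmust{\inp\MessageType} T''$; clause~(\ref{aux:inp}) then gives $S' \lred{\inp\MessageType}$, and inspecting which rule can produce an input transition from a $\Bot$ or a $\With$ pins $\MessageType$ down — to $\unit$ when $S' = \Bot$, to one of the tags when $S' = \With{\Tag_i : S'_i}_{i\in I}$ — so the axiom \BotT or \WithT applies and this branch is closed; the degenerate shape $\Top$ for either $S'$ or $T'$ is excluded because $\Top$ admits no input transition at all. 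If $S' = \One$ the invariant is contradictory: since $\One$ admits no input transition, clause~(\ref{aux:inp}) forces $T'$ to be positive, hence $T'$ is $\One$ — impossible, since then $T' \nlred{\inp\MessageType}$ — or of the form $\Plus{\Tag_i : T'_i}_{i\in J}$, in which case clause~(\ref{aux:out}) applied to $\One \lmust{\out\unit} \One$ demands $T' \lred{\out\unit}$, which a $\oplus$‑type cannot perform. So $S' = \One$ never arises, and neither does $\Zero$ as a subtree, since $S$ is fairly terminating.

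It remains to treat $S' = \Plus{\Tag_i : S'_i}_{i\in I}$: here I apply \APlusT and must re‑establish the invariant for each subgoal $S'_i$. From $S' \lmust{\out\Tag_i} S'_i$, clause~(\ref{aux:out}) of \cref{def:auxsubt} gives $T' \lred{\out\Tag_i} T'_i$ with $S'_i \auxsubt T'_i$, and the diamond property (\cref{prop:diamond}) applied to $T' \lred{\inp\MessageType}$ and $T' \lred{\out\Tag_i} T'_i$ yields $T'_i \lred{\inp\MessageType}$, so $(S'_i, T'_i)$ again satisfies the invariant. This produces a coinductive derivation of $S \cored{\inp\MessageType} S'$; since $S$ is fairly terminating, \cref{lem:coind} upgrades it to $S \lred{\inp\MessageType} S'$, which is stronger than the claim.

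The delicate points are concentrated in the shape analysis: showing that under the invariant the types $\One$, $\Top$, and $\Zero$ (as a subtree) cannot occur, and that in the negative and leaf cases clauses~(\ref{aux:pol}) and~(\ref{aux:inp}) together force $\MessageType$ to be exactly the message type required by an axiom. The step I expect to require the most care is the $\oplus$ case, where clause~(\ref{aux:out}) together with \cref{prop:diamond} is the only place the early‑output transitions implicit in $S' \auxsubt T'$ get reconciled with the late‑input transition $T' \lred{\inp\MessageType}$; everything else is routine once the case analysis is laid out.
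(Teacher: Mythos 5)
Your proposal is correct and takes essentially the same route as the paper's proof: both reduce the statement to showing that the invariant set $\set{(S',T') \mid S' \auxsubt T' \wedge T' \lred{\inp\MessageType}}$ is backward closed under the singly-lined rules of \cref{tab:lts} and then promote the resulting coinductive derivation via \cref{lem:coind}. The only differences are organizational: you drive the case analysis by the shape of $S'$ where the paper cases on $T'$, and in the $\oplus$ case you re-establish the invariant by appealing to \cref{prop:diamond} where the paper derives the surviving input directly from the structure of the branching supertype.
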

\begin{proof}
  In this proof we treat $\lred{\inp\MessageType}$ as a unary
  predicate.
  By \cref{lem:coind} it is enough to show that $S \lred{\inp\MessageType}$ is
  \emph{coinductively} derivable using the singly-lines rules in \cref{tab:lts}.
  To this aim it suffices to show that
  \[
    \rrel \eqdef \set{ S \mid S \auxsubt T \wedge T \lred{\inp\MessageType}}
  \]
  is backward closed with respect to the singly-lined rules in \cref{tab:lts}.
  Let $S \in \rrel$. Then $S \auxsubt T$ and $T \lred{\inp\MessageType}$ for
  some $T$ and $\MessageType$. We reason by cases on the shape of $T$ to show
  that $S \lred{\inp\MessageType}$ is the conclusion of one of the rules in
  \cref{tab:lts} whose premises are in $\rrel$.

  \proofcase{$T = \Bot$}
  Then $\MessageType = \unit$ and $T \lmust{\inp\unit}$. From
  \cref{def:auxsubt}\eqref{aux:inp} we deduce $S \lred{\inp\unit}$. We have that
  $\positive(S)$ does not hold, otherwise $S \lmust{\out\MessageTypeT} S'$ which
  implies, by \cref{def:auxsubt}\eqref{aux:out},  $T \lred{\out\MessageTypeT}
  T'$ that does not hold because $T = \Bot$. 
  %
%  Hence $\negative(S)$. We deduce, by \cref{def:auxsubt}\eqref{aux:inp},
%  that $S \lred{\inp\unit}$, 
  Hence also $S=\Bot$.
  In this case we have that $S \lred{\inp\MessageType}$ is the conclusion of the
  axiom \BotT.

%  we have $S \xlred{\inp\unit}$
%  (hence $S \xlred{\inp\unit}$ is the conclusion of one of the rules in
%  \cref{tab:lts}.).

  \proofcase{$T = \With{\Tag_i : T_i}_{i\in I}$}
  Then $\MessageType = \Tag_k$ for some $k\in I$ and $T \xlmust{\inp\Tag_k}$.
  From \cref{def:auxsubt}\eqref{aux:inp} we deduce $S \xlred{\inp\Tag_k}$. Note
  that this rules out the possibility that $S = \Bot$ or $S = \One$.
  We distinguish two sub-cases:
  \begin{itemize}
  \item If $\negative(S)$, then $S = \With{\Tag_j : S_j}_{j\in J}$ and $J
    \supseteq I$.
    As $k\in I$, we have $k\in J$ and $\MessageType = \Tag_k$ guarantees that in
    this case $S \lred{\inp\MessageType}$ is the conclusion of the axiom \WithT.
  \item If $\positive(S)$, then $S = \Plus{\Tag_j : S_j}_{j\in J}$.
    From \cref{def:auxsubt}\eqref{aux:out} we deduce that there exists a family
    $\set{T_j}_{j\in J}$ of session types such that $T \lred{\out\Tag_j} T_j$
    and $S_j \auxsubt T_j$ for every $j\in J$.
    From $T \xlmust{\inp\Tag_k}$ and $T \lred{\out\Tag_j} T_j$ we deduce $T_j
    \xlmust{\inp\Tag_k}$ for every $j\in J$.
    Therefore $S_j \in \rrel$ by definition of $\rrel$.
    In this case we have that $S \lred{\inp\MessageType}$ is the conclusion of
    \APlusT.  
  \end{itemize}

  \proofcase{$T = \Plus{\Tag_i : T_i}_{i\in I}$}
  Then $T_i \lred{\inp\MessageType}$ for every $i\in I$.
  From \cref{def:auxsubt}\eqref{aux:pol} we deduce $\positive(S)$, hence $S =
  \Plus{\Tag_i : S_i}_{i\in J}$ and, from \cref{def:auxsubt}\eqref{aux:out}, we
  also have  $J \subseteq I$ and $S_i \auxsubt T_i$ for every $i\in J$.
  Now we have $S_i \in \rrel$ for every $i\in J$ by definition of $\rrel$. In
  this case we have that $S \lred{\inp\MessageType}$ is the conclusion of
  \APlusT.
\end{proof}

\begin{lemma}
  \label{lem:aux-inp-subt}
  If $S \auxsubt T$ and $T \lred{\inp\MessageType} T'$, then $S
  \lred{\inp\MessageType} S'$ and $S' \auxsubt T'$.
\end{lemma}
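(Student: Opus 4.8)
The plan is to reduce the statement to \cref{lem:aux-inp} followed by a coinduction on $\auxsubt$. Since $T \lred{\inp\MessageType} T'$ entails $T \lred{\inp\MessageType}$, \cref{lem:aux-inp} already gives $S \lred{\inp\MessageType}$, and by determinism of the LTS (recall that $S(\Action)$ is well defined whenever $S \lred\Action$) there is a unique $S'$ with $S \lred{\inp\MessageType} S'$; it then remains to prove $S' \auxsubt T'$. To do so I would show that
\[
  \srel \;\eqdef\; {\auxsubt}\ \cup\ \set{ (S(\inp\MessageType),\, T(\inp\MessageType)) \mid S \auxsubt T \text{ and } T \lred{\inp\MessageType} }
\]
---which is well defined by \cref{lem:aux-inp} and determinism---satisfies the three clauses of \cref{def:auxsubt}; by maximality of $\auxsubt$ this yields $\srel \subseteq {\auxsubt}$, hence in particular $(S',T') \in {\auxsubt}$. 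Pairs already in $\auxsubt$ satisfy the clauses by definition, with continuations landing again in $\auxsubt \subseteq \srel$, so the only real work is for a pair $(S(\inp\MessageType), T(\inp\MessageType))$.

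For such a pair I would split on the polarity of $T$. If $\negative(T)$, then every input transition of $T$ is immediate (only \BotT, \WithT and \ParT apply), so $T \lmust{\inp\MessageType} T'$, and clause~2 of \cref{def:auxsubt} applied to $(S,T)$ gives some $S''$ with $S \lred{\inp\MessageType} S''$ and $S'' \auxsubt T'$; by determinism $S'' = S(\inp\MessageType)$, so the pair is in $\auxsubt$ and nothing more is needed. If $\positive(T)$, then clause~1 of \cref{def:auxsubt} forces $\positive(S)$, and $S \lred{\inp\MessageType}$ rules out $S = \One$; matching the immediate outputs of $S$ against those of $T$ via clause~3 then leaves, apart from the trivial case $S = \Zero$, only the possibilities that $S$ and $T$ are both additive, say $S = \Plus{\Tag_j : S_j}_{j\in J}$ and $T = \Plus{\Tag_i : T_i}_{i\in I}$, or both multiplicative, $S = \Times{S_0}{S_1}$ and $T = \Times{T_0}{T_1}$ with $S_0 = T_0$. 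In the additive case, clause~3 for $(S,T)$ gives $J \subseteq I$ and $S_j \auxsubt T_j$ for every $j \in J$, and since a late input of a positive type is introduced by \APlusT at the root of its coinductive derivation we have $S(\inp\MessageType) = \Plus{\Tag_j : S_j(\inp\MessageType)}_{j\in J}$ and $T(\inp\MessageType) = \Plus{\Tag_i : T_i(\inp\MessageType)}_{i\in I}$. Now clause~1 holds because $S(\inp\MessageType)$ is positive; clause~2 is vacuous because $T(\inp\MessageType)$ is positive and thus has no immediate input; and clause~3 is exactly where the definition of $\srel$ pays off: an immediate output $S(\inp\MessageType) \lmust{\out\Tag_k} S_k(\inp\MessageType)$ has $k \in J \subseteq I$, is matched by $T(\inp\MessageType) \lmust{\out\Tag_k} T_k(\inp\MessageType)$, and $(S_k(\inp\MessageType), T_k(\inp\MessageType)) \in \srel$ since $S_k \auxsubt T_k$ and $T_k \lred{\inp\MessageType}$. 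The multiplicative case is the same with \ATimesT in place of \APlusT, and $S = \Zero$, which forces $T = \Zero$, is immediate.

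The main obstacle is the polarity bookkeeping, and in particular the observation that a late input transition of a positive session type must be obtained by \APlusT or \ATimesT at the root of its (coinductive) derivation: this is what yields the componentwise description of the derivative and lets clause~3 be discharged by the single-step closure of $\srel$, without having to iterate over sequences of late inputs. Care is also needed for the degenerate additive/multiplicative types $\Zero$, $\One$, $\Top$ and for the matching of message types together with their measure annotations in the multiplicative case.
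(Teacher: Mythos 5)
Your proof is correct and follows essentially the same route as the paper's: the same single-step closure $\srel$ of $\auxsubt$ under matched input transitions, the same split between immediate inputs (negative $T$, discharged directly by clause~2 of \cref{def:auxsubt}) and late inputs (positive $T$, where \cref{lem:aux-inp} together with the root application of \APlusT gives the componentwise description of the derivatives and clause~3 lands the continuations back in $\srel$). The only slip is the parenthetical claim that $S = \Zero$ forces $T = \Zero$ --- it does not, since $\Zero \auxsubt T$ holds for \emph{any} positive $T$ (all clauses being vacuous for $\Zero$) --- but the case still goes through because clauses~2 and~3 are also vacuous for the pair $(\Zero, T(\inp\MessageType))$; likewise the multiplicative case you treat is superfluous here, as this part of the development is restricted to first-order session types.
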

\begin{proof}
  It suffices to show that
  \[
    \srel \eqdef {\auxsubt} \cup \set{ (S', T') \mid S \auxsubt T \wedge S \lred{\inp\MessageType} S' \wedge T \lred{\inp\MessageType} T'}
  \]
  satisfies the clauses of \cref{def:auxsubt}. Clearly any pair of types in $\auxsubt$
  satisfies those clauses, so consider $(S',T') \in \srel$ such that $S \auxsubt
  T$ and $S \lred{\inp\MessageType} S'$ and $T \lred{\inp\MessageType} T'$.
  We reason by cases on the shape of $T$.

  \proofcase{$T = \Bot$}
  Then $\MessageType = \unit$ and $T \lmust{\inp\unit} T'$. From
  \cref{def:auxsubt}\eqref{aux:inp} we deduce $S \lred{\inp\unit} S'$ with $S'
  \auxsubt T'$. In this case $(S',T')$  satisfies the clauses in
  \cref{def:auxsubt} because $S' \auxsubt T'$.

  \proofcase{$T = \With{\Tag_i : T_i}_{i\in I}$}
  Then $\MessageType = \Tag_k$ for some $k\in I$ and $T \xlmust{\inp\Tag_k} T'$.
  From \cref{def:auxsubt}\eqref{aux:inp} we deduce $S \xlred{\inp\Tag_k} S'$
  with $S' \auxsubt T'$. In this case $(S',T')$  satisfies the clauses in
  \cref{def:auxsubt} because $S' \auxsubt T'$.

  \proofcase{$T = \Plus{\Tag_i : T_i}_{i\in I}$}
  Then there exists a family $\set{T_i'}_{i\in I}$ of session types such that
  $T_i \lred{\inp\MessageType}  T_i'$ for every $i\in I$ and $T' = \Plus{\Tag_i
  : T_i'}_{i\in I}$.
  From \cref{def:auxsubt}\eqref{aux:pol} we deduce $\positive(S)$, hence $S =
  \Plus{\Tag_i : S_i}_{i\in J}$.
  From \cref{def:auxsubt}\eqref{aux:out} we deduce $J \subseteq I$ and $S_i
  \auxsubt T_i$ for every $i\in J$.
  From \cref{lem:aux-inp} we deduce that there exists a family $\set{S_i'}_{i\in
  J}$ of session types such that $S_i \lred{\inp\MessageType} S_i'$ for every
  $i\in J$ and $S' \eqdef \Plus{\Tag_i : S_i'}_{i\in J}$.

  Now consider the clauses of \cref{def:auxsubt} in order:
  \begin{enumerate}
  \item This clause holds because we have $\positive(S')$.
  \item This clause holds vacuously becase $T' \nlmust{\inp\MessageTypeT}$ for
    every $\MessageTypeT$.
  \item Suppose $S' \lmust{\out\MessageTypeT} S''$. Then $\MessageTypeT =
    \Tag_i$ and $S'' = S_i'$ for some $i \in J$.
    Moreover we have $T \lmust{\out\MessageTypeT} T_i'$ and $S_i' \auxsubt
    T_i'$.
    We conclude by observing that $(S_i', T_i') \in \srel$ by definition of
    $\srel$.
    \qedhere
  \end{enumerate}
\end{proof}

\begin{lemma}
  \label{lem:aux-out-subt}
  If $S \auxsubt T$ and $S \lred{\out\MessageType} S'$, then $T
  \lred{\out\MessageType} T'$ and $S' \auxsubt T'$.
\end{lemma}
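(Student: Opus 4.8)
The plan is to obtain this lemma as the exact dual of \cref{lem:aux-inp-subt}. The key intermediate fact I would establish is that, for the first-order fairly terminating session types considered in this section, the auxiliary relation $\auxsubt$ is closed under duality, i.e.\ $S \auxsubt T$ implies $\dual T \auxsubt \dual S$; once this is in place, the statement follows from \cref{lem:aux-inp-subt} by a couple of mechanical dualization steps.

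First I would record the routine observation that the transition relations are symmetric under duality: writing $\dual{(\inp\MessageType)} \eqdef \out\MessageType$ and $\dual{(\out\MessageType)} \eqdef \inp\MessageType$ (for first-order types every message type is self-dual), one has $S \lmust\Action T$ iff $\dual S \lmust{\dual\Action} \dual T$, and likewise $S \lred\Action T$ iff $\dual S \lred{\dual\Action} \dual T$. This is immediate by inspection of \cref{tab:lts}, which the duality map sends onto itself, pairing \OneT with \BotT, \PlusT with \WithT, \APlusT with \AWithT and \CoPlusT with \CoWithT; hence the GIS construction (largest coinductive solution inside the inductively defined $\ired\Action$) is preserved, as is $\positive(S)\iff\negative(\dual S)$ and the invariance of $\set{\One,\Bot}$, so $S$ is fairly terminating iff $\dual S$ is and the hypotheses of the lemma transfer to the dual types. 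Second, mimicking the proof of \cref{prop:dual-subtype}, I would show that if a relation $\srel$ satisfies the clauses of \cref{def:auxsubt} then so does $\srel^\bot \eqdef \set{(\dual T, \dual S) \mid (S,T) \in \srel}$: taking $(S,T) \in \srel$ and verifying the clauses for $(\dual T, \dual S)$, clause \cref{def:auxsubt}\eqref{aux:pol} is clause \eqref{aux:pol} for $(S,T)$ read through $\positive(\cdot)\iff\negative(\dual\cdot)$, clause \eqref{aux:inp} for $(\dual T, \dual S)$ follows from clause \eqref{aux:out} for $(S,T)$ by dualizing the transition with step one, and clause \eqref{aux:out} for $(\dual T, \dual S)$ follows from clause \eqref{aux:inp} for $(S,T)$ in the same way — the only point needing care is that membership $(X,Y)\in\srel^\bot$ unfolds to $(\dual Y,\dual X)\in\srel$, which is exactly the shape the continuation pairs produced by the clauses for $(S,T)$ have. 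Applying this to $\srel = {\auxsubt}$, the largest such relation, gives ${\auxsubt}^\bot \subseteq {\auxsubt}$, i.e.\ the desired closure.

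Finally, given $S \auxsubt T$ and $S \lred{\out\MessageType} S'$, step one rewrites the transition as $\dual S \lred{\inp\MessageType} \dual{S'}$ and step two gives $\dual T \auxsubt \dual S$; applying \cref{lem:aux-inp-subt} to these yields some $U$ with $\dual T \lred{\inp\MessageType} U$ and $U \auxsubt \dual{S'}$. Setting $T' \eqdef \dual U$, step one turns $\dual T \lred{\inp\MessageType} \dual{T'}$ into $T \lred{\out\MessageType} T'$, and a second use of step two on $\dual{T'} = U \auxsubt \dual{S'}$ delivers $S' \auxsubt T'$, which concludes the argument. I expect step one to be the only real obstacle: the symmetry of $\lred\Action$ is visually obvious but, because $\lred\Action$ is defined by bounded coinduction, it must be checked that duality maps both the coinductive rules and the corules \CoPlusT, \CoWithT onto themselves and therefore preserves the inductive well-foundedness side condition of the GIS; everything after step one is bookkeeping with the involution $\dual\cdot$.
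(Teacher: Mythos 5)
Your proposal is correct and follows essentially the same route as the paper's proof: both reduce the statement to \cref{lem:aux-inp-subt} by observing that the LTS is symmetric under duality for first-order types and that $\auxsubt$ is therefore closed under duality, and then dualizing back. The only difference is that you spell out the justification of these two facts (the rule-by-rule pairing in \cref{tab:lts}, preservation of the GIS corules, and the self-duality of first-order message types), which the paper's proof simply asserts.
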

\begin{proof}
  Since the LTS is symmetric and we are considering first-order session types
  only, from $R \lred{\out\MessageTypeT} R'$ we deduce $\dual{R}
  \lred{\inp\MessageTypeT} \dual{R'}$.
  From this symmetry, we deduce that $\auxsubt$ is closed under duality;
  hence, from $S \auxsubt T$ we deduce $\dual{T} \auxsubt \dual{S}$.
  Consider now $S \lred{\out\MessageType} S'$; by the above symmetry we have
  $\dual{S} \lred{\inp\MessageType} \dual{S'}$. 
  Using \cref{lem:aux-inp-subt} we deduce $\dual{T} \lred{\inp\MessageType}
  \dual{T'}$ and $\dual{T'} \auxsubt \dual{S'}$, that is $S' \auxsubt T'$.
\end{proof}

\begin{theorem}
  \label{thm:auxsubt}
  If $S \auxsubt T$, then $S \subt T$.
\end{theorem}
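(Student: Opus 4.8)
The plan is to appeal to \Cref{thm:subt}, which characterises $\subt$ as the \emph{largest} coinductive asynchronous subtyping. Hence it suffices to show that the relation $\auxsubt$ is itself a coinductive asynchronous subtyping, that is, that every pair $(S,T) \in {\auxsubt}$ satisfies the five clauses of \Cref{def:cas}. Since the whole section is concerned with first-order session types, no transition is labelled with a session type, so the clauses \rcas{inps} and \rcas{outs} — which speak about the transitions $\xlred{\inp T_1}$ and $\xlred{\out S_1}$ arising from the multiplicative connectives — hold vacuously and are discharged at once.

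For the three remaining clauses the argument is pure bookkeeping. Clause \rcas{pol} coincides verbatim with \cref{def:auxsubt}\eqref{aux:pol}, so there is nothing to prove. Clause \rcas{inp}, namely that $T \lred{\inp\MessageType} T'$ with $\MessageType \in \set\unit\cup\TagSet$ implies $S \lred{\inp\MessageType} S'$ and $S' \auxsubt T'$, is exactly the statement of \Cref{lem:aux-inp-subt} (the side condition $\MessageType\in\set\unit\cup\TagSet$ is automatic in the first-order setting and in any case only weakens the hypothesis). Dually, clause \rcas{out} is exactly \Cref{lem:aux-out-subt}. Putting these observations together shows that $\auxsubt$ satisfies \Cref{def:cas}, hence ${\auxsubt} \subseteq {\subt}$, which is the claimed inclusion.

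So the theorem itself is a short assembly step: the substantive work has already been done in the supporting lemmas, and I do not expect any genuine obstacle at the level of the theorem. The delicate ingredients, which I would want in place first, are (i) \Cref{lem:aux-inp}, whose proof constructs the transition $S \lred{\inp\MessageType}$ by bounded coinduction — exhibiting a set $\rrel$ backward closed under the singly-lined rules of \Cref{tab:lts} and then turning the resulting coinductive derivation into a GIS-admissible one via \Cref{lem:coind} (this is where the fair-termination assumption on first-order types is needed); and (ii) \Cref{lem:aux-out-subt}, which reduces the output case to the input case through the symmetry of the LTS and the closure of $\auxsubt$ under duality. With those available, the present statement follows by the clause-by-clause check sketched above.
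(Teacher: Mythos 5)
Your proposal is correct and follows exactly the paper's own argument: the paper likewise proves the theorem by checking that $\auxsubt$ satisfies the clauses of \Cref{def:cas} (clause~\rcas{pol} directly from \cref{def:auxsubt}, clause~\rcas{inp} via \Cref{lem:aux-inp-subt}, clause~\rcas{out} via \Cref{lem:aux-out-subt}) and then concluding ${\auxsubt} \subseteq {\subt}$ by \Cref{thm:subt}. Your explicit remark that clauses~\rcas{inps} and~\rcas{outs} are vacuous in the first-order setting is a detail the paper leaves implicit, but it is the right justification.
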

\begin{proof}
  It suffices to show that the relation $\auxsubt$ satisfies the clauses of \cref{def:cas}.
  Let $(S,T) \in\ \auxsubt$.
  From $S \auxsubt T$ and \cref{def:auxsubt}\eqref{aux:pol} we deduce either
  $\positive(S)$ or $\negative(T)$, hence~\rcas{pol} is satisfied.
  Suppose $T \lred{\inp\MessageType} T'$.
  From \cref{lem:aux-inp-subt} we deduce $S \lred{\inp\MessageType} S'$ and $S'
  \auxsubt T'$, hence~\rcas{inp} is satisfied because $(S',T') \in\ \auxsubt$.
  Suppose $S \lred{\out\MessageType} S'$.
  From \cref{lem:aux-out-subt} we deduce $T \lred{\out\MessageType} T'$ and $S'
  \auxsubt T'$, hence~\rcas{out} is satisfied because $(S',T') \in\ \auxsubt$.
\end{proof}

We are finally ready to prove \cref{thm:inclusions}
\thminclusions*
\begin{proof}
  The inclusion ${\ghsubt} \subseteq {\mysubt}$ follows from the definitions of
  $\ghsubt$ and $\mysubt$ because ${\lmust{\Action}} \subseteq
  {\lind{\Action}}$.
  Consider now the auxiliary subtyping relation $\auxsubt$.
  We have that ${\mysubt} \subseteq {\auxsubt}$ follows from the definitions of
  $\mysubt$ and $\auxsubt$ because ${\lind\Action} \subseteq {\lred\Action}$.
  Also ${\blzsubt} \subseteq {\auxsubt}$ because \cref{def:auxsubt} is the same
  as \cref{def:blzsubt} except for the additional constraint in the clause
  \cref{def:blzsubt}\eqref{blz:out}. 
  By \cref{thm:auxsubt}, these two inclusions imply ${\mysubt} \subseteq {\subt}$
  and ${\blzsubt} \subseteq {\subt}$.
  
  Strictness of the inclusions has already been shown by the examples in
  \cref{sec:subtyping-inclusions}.
% We have that ${\ghsubt} \neq {\mysubt}$. For instance, we have that $\Plus{\TagA
% : \With{\TagB : \One}} \mysubt \With{\TagB : \Plus{\TagA : \One}}$ while the
% same does not hold for $\ghsubt$.
% We now observe that ${\mysubt} \neq {\subt}$ and ${\blzsubt} \neq {\subt}$. For
% instance, if we consider $S = \With{\TagB : S, \TagC : \One}$ and $T =
% \With{\TagB : T, \TagC : \Plus{\TagA : \One, \TagB : \One}}$ we have
% $\Plus{\TagA : S} \subt T$ while the same does not hold neither for $\mysubt$
% nor for $\blzsubt$.
\end{proof}

\end{document}